%% file: main.tex
\documentclass[11pt]{article}
\input{math_commands.tex}
\usepackage[top=1in,bottom=1in,left=1.25in,right=1.25in]{geometry}

\usepackage{palatino}
\usepackage[T1]{fontenc}
\usepackage[parfill]{parskip}
\usepackage{graphicx} 
\usepackage{float}
\usepackage{booktabs}
\usepackage{threeparttable}
\usepackage{adjustbox}
\usepackage{algorithm}
\usepackage{algorithmicx}
\usepackage{algpseudocode}
\usepackage{amsmath,amssymb}
\usepackage{amsfonts}
\usepackage{amsthm}
\usepackage{bbm}
\usepackage[dvipsnames]{xcolor}
\usepackage[table]{xcolor}
\usepackage{xr}
\usepackage{chngcntr}
\usepackage{multirow}
\usepackage{accents}
\usepackage{tcolorbox}
\usepackage{hyperref}
\hypersetup{colorlinks=true, urlcolor=cyan}
\usepackage[font=footnotesize]{caption}
\usepackage[sort&compress]{natbib}
\setcitestyle{authoryear}

\usepackage[resetlabels]{multibib}
\newcites{sec}{References}
\nocitesec{*}
\bibliographystylesec{abbrvnat}

\theoremstyle{definition}
\newtheorem{definition}{Definition}[section]
\newtheorem{assumption}{Assumption}[section]
\newtheorem{example}{Example}[section]
\newtheorem{proposition}{Proposition}[section]

\theoremstyle{plain}
\newtheorem{lemma}{Lemma}[section]
\newtheorem{theorem}{Theorem}[section]

\definecolor{lightgreen}{RGB} {126, 153, 103} 
\definecolor{lightred}{RGB} {173, 83, 73} 

\title{\textsc{Evaluating Algorithm-Assisted Human Decision-Making Over Repeated Algorithm Exposure: Recommendations for Effect Estimands and Experimental Design}}
\author{
    Maggie Wang \thanks{Stanford University, Department of Biomedical Data Science, Email: \href{mailto:mwang102@stanford.edu}{mwang102@stanford.edu}} \and Mike Baiocchi \thanks{Stanford University, Department of Epidemiology \& Population Health} 
    }
\date{}

\begin{document}

\maketitle

\begin{abstract}
    In algorithm-assisted decision-making in high-stakes settings like healthcare, an algorithmic decision support tool provides a recommendation, but the human ultimately makes the decision. Determining whether algorithm assistance actually improves the quality of human decision-making prior to deployment is critical, and randomized experiments are one way to collect robust evidence. Historically, however, experimental designs and analyses ignore how decision-making behavior adapts with repeated algorithm exposure. In this work, we define a set of effect estimands that account for and characterize human behavior adaptation under repeated exposure and justify why these estimands are useful to target for developing a better understanding of the impact of algorithm assistance. We propose a minimax stepped double wedge design that facilitates estimating these target estimands. Finally, we compare our proposed design to two common alternative designs identified through a review of historical randomized trials of algorithm assistance. We show that these designs are less amenable to estimating the target estimands and produce biased estimates under three different forms of behavioral adaptation inspired by dynamics observed in the real world -- automation bias, alert fatigue, and calibrated reliance.
\end{abstract}

\clearpage

\section{Introduction}

\subsection{Motivation}
Algorithms, especially those powered by modern machine learning or large language models, have the potential to help humans make decisions and complete tasks with greater efficiency, accuracy, and quality. In many high stakes settings, algorithms are designed and used as augmentative tools, rather than replacements, for human action and judgment; that is, while algorithms are allowed to provide advice or input on what decision the human should make, the human has the final say. One such high stakes setting, and the core motivation for this work, is healthcare. Algorithmic assistance tools have been developed for a variety of use cases in healthcare, including assisting radiologists and dermatologists in identifying pathologies from imaging \citep{Yu2024Heterogeneity, Groh2024Skin}, improving early identification of low ejection fraction from electrocardiograms \citep{Yao2021ECG}, recognizing neonatal seizures from electroencephalography data \citep{Pavel2020Neonatal}, assisting clinicians with planning and conducting stroke care for patients \citep{Zhang2026Golden}, and prescreening patients for eligibility for clinical trial enrollment \citep{Unlu2025Prescreening, Parikh2026Prescreening}. Recent years have also seen interest in the use of general-purpose large language models for assisting clinicians with performing differential diagnoses \citep{Qazi2026LMIC, Goh2025GPT}.

When algorithms are first developed, they are generally evaluated \textit{in silico} on benchmark datasets, according to metrics like accuracy, sensitivity, and specificity. These evaluations reflect the algorithm's own standalone performance, but do not reveal what would happen if the algorithm were to be given to a human as an assistive tool. One of the ways we can collect robust evidence on the effect of algorithm assistance on human decision-making is by running randomized trials in which the provision of algorithm assistance is randomized.

Algorithm assistance as an intervention is different from more canonical interventions, e.g. a novel drug, and these distinct properties should influence how we approach designing randomized trials for algorithm assistance interventions. We highlight two properties in particular. First, the recipient of algorithm assistance is often not the same unit on which the outcome is measured. For instance, for an algorithmic decision support tool for physicians that displays an alert when their patient needs additional medical attention, the \textit{physician} is the one who sees the alert, but what we want to know is whether showing such an alert actually improves \textit{patient} outcomes. Second, depending on the trial design and setting, users may interact with the algorithm multiple times over the course of the trial. However, the user's trust in and familiarity with the algorithm the very first time they interact with it could be quite different from the $t^{th}$ time they interact with it. Similarly, how they make decisions and complete tasks without algorithm assistance could change after they have had some prior exposure to the algorithm. 

Notably, the second property -- the dynamics of human behavior -- is often ignored in trials of algorithm assistance (see our literature review in Appendix \ref{app:sec:elicit}, where only 51 of 274 (18.6\%) included trials dealt with it), leading to potential under- or over-estimation of the intended target effect estimand. The first property -- the discrepancy between unit of intervention and unit of outcome measurement -- is commonly left implicit, which obscures the fact that the impact of algorithm assistance on the outcome(s) of interest is mediated by the behavior of the human being assisted. Hence, sometimes a null finding does not reflect that the algorithm assistance is  unhelpful in general, but rather that the algorithm assistance tool is delivering its assistance in an unhelpful way. We focus on settings where the human makes a discrete decision or completes a discrete task, where the algorithm's output is a singular recommended decision or action, and where the algorithm stays fixed (i.e. is not updated or retrained) over the course of the trial. While many algorithms (e.g. generative and large language models) fall outside this scope, the same concepts about behavioral mediation and behavior dynamics apply.

\subsection{Contributions}

We provide a unified framework for attending to the behavioral mediation and behavior dynamics properties when designing and analyzing randomized trials for algorithm assistance. We specifically make the following contributions:
\begin{enumerate}
    \item \textbf{We define a set of target effect estimands (Section \ref{sec:estimands}) } -- global, immediate, habituation, and skilling effects -- that collectively characterize the impact of algorithm assistance when the effect may change with repeated algorithm exposure (or, equivalently, when there is between-decision, within decision-maker interference). Additionally, in Appendix \ref{app:sec:conditional}, we propose conditional effect estimands, localizing on decisions where decision-makers comply with algorithm assistance.
    \item \textbf{We propose a minimax stepped ``double wedge" design (Section \ref{sec:minimax-design})}, where human decision-makers are randomized to always-treated, always-control, and staggered onset and offset times. This design facilitates unbiased estimation of each of the target estimands and is conceptually simple. We additionally derive a minimax design that specifies the number of decision-makers to allocate to each arm of the trial, given a fixed total budget of decision-makers $N$ and trial length $T$.
    \item \textbf{We compare the stepped double wedge design to two common alternatives (Section \ref{sec:alternatives})}, a parallel decision-maker randomized design and a decision-randomized design, as identified in our literature review (Appendix \ref{app:sec:elicit}). The decision-maker randomized design can be more sample efficient for estimating global effects, but is not amenable to estimating immediate, habituation, and skilling effects. We additionally demonstrate that the decision-randomized design is biased across three stylized data-generating processes that reflect possible behavioral dynamics that can be encountered in the real-world: increasing automation bias, increasing alert fatigue, and gradual calibration. We then use simulations to demonstrate the tradeoffs between eliminating bias and increasing variance with the stepped double wedge design.
\end{enumerate}

In our review of the literature (Appendix \ref{app:sec:elicit}), a majority of randomized trials of algorithm assistance have been in the medical domain (149 of 274 trials). While the framework we propose is general and can be applied to the evaluation of algorithm assistance in domains beyond medicine, we emphasize that our contributions may be especially relevant for improving how randomized trials are being designed for evaluating algorithm-assisted decision-making in medicine.
 
\section{Related Work}
Our work is inspired by recently developed methods for randomized experimental evaluation of algorithms assisting humans. \cite{Imai2023Pretrial} used a principal strata approach to analyze a field experiment of a judge making pretrial release decisions for defendants with and without randomized algorithm assistance. \cite{BenMichael2025AIExperiment} developed a framework for identifying and comparing the performance of human-with-AI, human-alone, and AI-alone decision-making, applied to the same trial as in \cite{Imai2023Pretrial}. To classify the human and AI's decision-making ability, they rely on classification metrics (true / false positives, true / false negatives) defined by comparing the potential outcome under a baseline decision to a hypothetical decision that the human / AI makes. Neither \cite{Imai2023Pretrial} nor \cite{BenMichael2025AIExperiment} address repeated algorithm exposure in their methods (while \cite{Imai2023Pretrial} conduct a conditional randomization test for spillover effects across repeated decisions and find no evidence of it, they acknowledge that the test may be underpowered). \cite{Raji2025Prediction} highlight how experimental design choices, cognitive biases of decision-makers, and spillover effects under repeated algorithm exposure combine to result in biased estimation of average treatment effects. However, they do not propose alternative estimands and alternative experimental designs that can remove this bias. In addition, similar to our work, they derive bias expressions for illustrative data-generating processes, but their bias expressions are for effect estimands defined on the potential \textit{decisions}, whereas ours are for effect estimands defined on the potential \textit{outcomes}, mediated by the potential decisions. Our proposed local effect estimands are defined with respect to concordance classes that resemble the compliance classes developed by \cite{McLaughlin2024Complementarity}. However, while \cite{McLaughlin2024Complementarity} focus on utilizing these compliance classes to decompose the objective function for the algorithm and design an algorithm that produces better recommendations for the human, we focus on estimating local effect estimands. Additionally, our concordance classes allow for spillover effects under repeated algorithm exposure, which are not covered by \cite{McLaughlin2024Complementarity}. 

We also draw upon prior work in online experimentation in industry. \cite{Hohnhold2015Longterm} devised a ``post-period" experimental design and a ``cookie-cookie-day" experimental design to measure user learning effects, with an application to how users change their interactions with ads over repeated exposure. Their ``post-period" design with lagged starts is conceptually equivalent to the staggered onsets portion of our proposed stepped double wedge design, and the ``cookie-cookie-day" experiment is a modified version of staggered onsets, where individuals have staggered introductions to treatment but do not stay on treatment for more than one time period. \cite{Bojinov2019TimeSeries} also propose estimands and develop estimation and inference methodology for time-varying treatment effects, but focus on $p$-lag estimands, which contrast the potential outcomes at time $t$ under equivalent treatment assignment vectors apart from the assignment at the $(t-p)^{th}$ point in time. Our minimax stepped double wedge design is most similar to the minimax wedge design proposed by \cite{Basse2023Minimax}, but accommodates the estimation of two additional estimands (the global and skilling effects) by including a staggered offsets portion in addition to the staggered onsets portion and by including always-treated and always-control decision-makers.

Finally, our work is related to the literature on how humans may interact with algorithms in suboptimal or non-synergistic ways \citep{Vaccaro2024Combinations, Kostick2022Imperfect}, including algorithm overreliance \citep{Logg2019Appreciation, Suresh2020Trust, Vicente2023Biases, Bucinca2021CognitiveForcing} (sometimes also called automation bias \citep{Khera2023AutomationBias} and anchoring bias \citep{Rastogi2022CognitiveBias}) and underreliance (often called algorithmic aversion \citep{Dietvorst2015Aversion, Gaube2021Susceptibility} or, in medical settings, alert fatigue \citep{Weingart2003Override, Ancker2017AlertFatigue}). In addition, several works have shown that these behaviors are not static and can be influenced by prior exposure to and experience with the algorithm. Early mistakes made by the algorithm, for instance, can lead to skepticism and underreliance in later decisions \citep{Nourani2020FirstImpressions, Chacon2022Longitudinal}. \cite{Yu2017TrustDynamics, Li2023MarkovianTrust} propose and fit dynamic trust models to algorithm-assisted decision-making data, showing that a human's trust in algorithm output can evolve with algorithm performance and other contextual factors; \cite{Kahr2024TrustDevelopment} assess the correlation between cumulative trust experience in earlier interactions with the algorithm with subsequent trust experiences; and \cite{Yang2021TrustDynamics} evaluate how humans adjust their trust across each interaction with an algorithm, depending on whether the algorithm's recommendation results in a desirable or undesirable outcome. Though there is much empirical work providing evidence for dynamic decision-maker behavior under repeated algorithm exposure, to the best of our knowledge, there is no unifying causal inference framework for defining and estimating causal effects that capture these dynamics.

\section{Notation, Setup, and Assumptions}
We consider a setting where one set of people, the decision-makers, are tasked with making binary decisions regarding another set of people, the decision-recipients, e.g. physicians making diagnoses for patients. An algorithmic assistance tool has been developed to assist in making this decision, and we are interested in evaluating whether assistance leads to better outcomes for the recipients. The overall setup and key notation is summarized in Figure \ref{fig:setup}. Our notation and assumptions are adapted from \cite{Imai2023Pretrial}, with modifications made to accommodate interference between decisions, within decision-makers.

To perform this evaluation, we conduct a randomized experiment in which assistance is randomized. Suppose there are $N$ total decision-makers, indexed $i=1, 2, \dots, N$. We define timepoints $t = 1, 2, \dots, T$, where time 1 is the start of the randomized experiment and time $T$ marks the end of the experiment. We denote $R_{it}$ as the recommendation for the decision-recipient seen at time $t$ by decision-maker $i$. We assume that the algorithm is fixed, e.g. not re-trained or updated, between $t=1$ and $t=T$. Note that $R_{it}$ is well-defined and observable by the study analyst even when the tool recommendation is not shown to the decision-maker, as long as the tool is ``silently deployed" on such cases and the recommendation it would have shown is recorded.

Let $Z_{it}$ denote the treatment assignment variable for the decision-recipient seen by decision-maker $i$ at time $t$, where $Z_{it} = 1$ indicates that the algorithm output is shown to the decision-maker and $Z_{it} = 0$ indicates the algorithm output is hidden. Denote the random assignment vector for decision-maker $i$'s decision-recipients as $\rmZ_{i}$. We assume partial interference, so that decision-maker $i$'s decision depends on the treatment assignments of their own decision-recipients but not on the treatment assignments for other decision-makers' recipients.
\begin{assumption}[Partial interference]\label{assump-partial-interf}
    $D_{it}(\rmZ) = D_{it}(\rmZ')$ for all $\rmZ, \ \rmZ'$ such that $\rmZ_i = \rmZ'_i$.
\end{assumption}

Under Assumption \ref{assump-partial-interf}, we can write the potential decision made by decision-maker $i$ at time $t$ as $D_{it}(\rmZ_{i}) \in \{0, 1\}$. We are also interested in understanding the outcomes for the decision-recipients, not just the decisions made by the decision-makers. We denote the potential outcome for the decision-recipient seen by decision-maker $i$ at time $t$ as $Y_{it}(\mathbf{Z}_i, D_{it}(\mathbf{Z}_i))$. Using the setup developed by \cite{Imai2023Pretrial}, we assume that the effect of the treatment on a recipient outcome is fully mediated by the decision-maker's decision for that recipient. We additionally assume non-anticipatory effects, meaning a potential decision can only be affected by past and current treatment assignment, not future treatment assignments. Since treatment assignments are random (i.e. not dependent on potential decisions or outcomes), we assume ignorability holds.

\begin{assumption}[Full mediation]\label{assump-mediation}
\begin{align*}
Y_{it}(\mathbf{Z}_{i}, D_{it}) = Y_{it}(\mathbf{Z'}, D_{it}) \text{ for all } \mathbf{Z}_i^{\leq t}, \ {\mathbf{Z}'}_i^{\leq t}, D_{it} \in \{0, 1\}
\end{align*}
\end{assumption}

\begin{assumption}[Non-anticipation]\label{assump-nonanticip}
\begin{align*}
    D_{it}(\rmZ) = D_{it}(\rmZ') \text{ for all } \rmZ, \ \rmZ' \text{ such that } \rmZ^{\leq t} = \rmZ'^{\leq t}
\end{align*}
where the superscript $^{\leq t}$ corresponds to recipients seen up to time $t$. 
\end{assumption}

\begin{assumption}[Ignorability]\label{assump:ignorability}
    \begin{align*}
        (D_{it}(\mathbf{z}^{\leq t}), Y(D_{it}(\mathbf{z}^{\leq t}))) \indep \mathbf{Z}_i^{\leq t} \text{ for all } \mathbf{z}^{\leq t}
    \end{align*}
\end{assumption}

\begin{figure}[ht!]
    \centering
    \includegraphics[trim={4cm 0 2cm 0}, width=\linewidth]{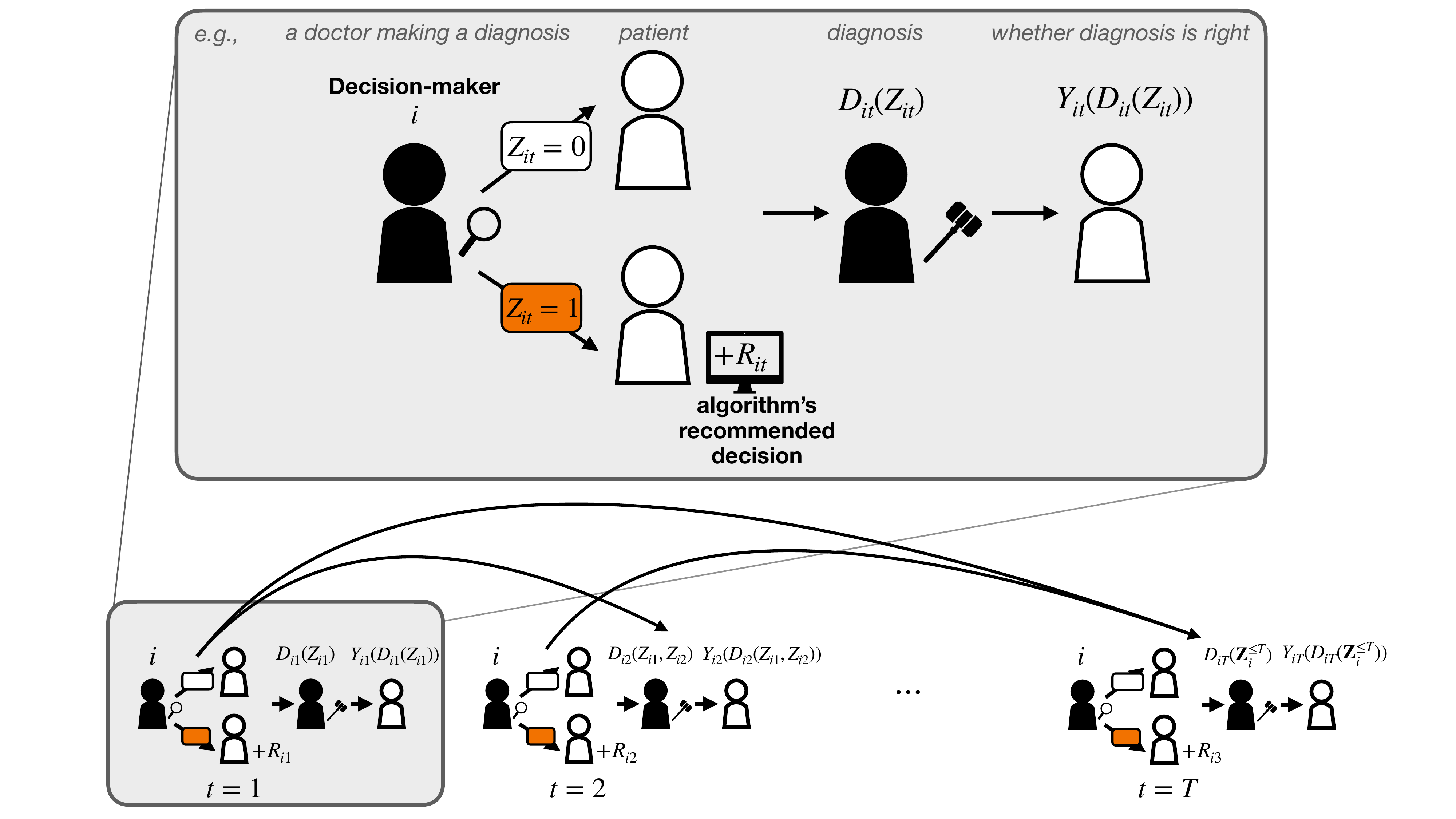}
    \caption{\textbf{Overview of algorithm-assisted decision-making setup}. (\textit{top, single decision with no prior decisions)} A human decision-maker makes a decision for a decision-recipient either with ($Z_{it}=1$) or without ($Z_{it}=0$) algorithm assistance, resulting in a potential decision ($D_{it}(Z_{it})$) and a potential outcome mediated by that potential decision ($Y_{it}(D_{it}(Z_{it}))$). \textit{(bottom)} The decision-maker makes repeated decisions on new decision-recipients, and prior exposure to algorithm assistance ($\mathbf{Z}_i^{\leq t})$ can affect the current potential decision ($D_{it}(\mathbf{Z}_{i}^{\leq t})$) and potential outcome ($Y_{it}(D_{it}(\mathbf{Z}_{i}^{\leq t}))$).}
    \label{fig:setup}
\end{figure}

Under Assumption \ref{assump-mediation}, we rewrite the potential outcome for the decision-recipient seen by decision-maker $i$ at time $t$ as $Y_{it}(D_{it}(\rmZ_i)) \in \mathbb{R}$. Let $\rmZ_i^{\leq t}$ denote the vector of treatment assignments for all recipients seen by decision-maker $i$ up to time $t$. Under Assumption \ref{assump-nonanticip}, we can then express the potential decision at time $t$ as $D_{it}(\rmZ_i^{\leq t})$ and the potential outcome as $Y_{it}(D_{it}(\rmZ_i^{\leq t}))$. 

Lastly, let $\mathbf{w}_{t, t'}$ denote the length $T$ vector corresponding to treatment onset at time $t$ and treatment offset at time $t'$. That is, 
$$\mathbf{w}_{t, t'} = (0, 0, \dots, 0, \underbrace{1}_{t}, 1, \dots, 1, \underbrace{0}_{t'}, 0, \dots 0)$$ With a slight abuse of notation, let $\mathbf{w}_{t, \infty}$ denote the treatment assignment where all decisions up to and including time $t-1$ are control and the subsequent decisions through the remainder of the experiment are treated. Similarly, let $\mathbf{w}_{1, t}$ denote the treatment assignment where all decisions up to and including time $t-1$ are treated and the subsequent decisions are control. In this work, we will be most interested in the following possible values for $\mathbf{Z}_i^{\leq t}$: $\mathbf{0}^{\leq t}$ (the length $t$ all-zero's vector), $\mathbf{1}^{\leq t}$ (the length $t$ all-one's vector), $\mathbf{w}_{t, \infty}$, and $\mathbf{w}_{1, t}$.

\section{What effects to estimate? Global, immediate, habituation, and skilling effects}\label{sec:estimands}
We propose the following four complementary treatment effect estimands to evaluate algorithm-assisted decision-making. We focus in this section on defining sample average treatment effects, which assume that the only source of randomness comes from the treatment assignment vector, $\mathbf{Z}$, because they lend themselves to the minimax design proof in Section \ref{sec:minimax-design}. The definitions can be easily extended to population average treatment effects by taking an expectation over the sample of potential outcomes and potential decisions (see, for instance, Section \ref{sec:alternatives}, where we use a population-level definition for $\tau$, and Appendix \ref{app:sec:conditional}, where we shift to the population-level for conditional effect estimands).

\subsection{Estimand definitions}

\begin{definition}[Global Treatment Effect at $t$]
    \begin{align}\label{eq:tau-global}
    \tau_{t} := \frac{1}{N}
    \sum_{i=1}^N
        Y_{it}(D_{it}(\mathbf{1}^{\leq t}))
        - Y_{it}(D_{it}(\mathbf{0}^{\leq t})) 
\end{align}
\end{definition}
The global treatment effect is often an estimand of interest in settings with interference and answers the question of what the outcomes would be if we were to overhaul the status quo (no algorithm assistance at all) and deploy algorithm assistance ubiquitously across all decisions. $\tau_t $ is, in many cases, the implicit target estimand of randomized trials of algorithm assistance. The next three effect estimands, however, are less commonly targeted and estimated, whether implicitly or explicitly.

\begin{definition}[Immediate Treatment Effect at $t$] 
    \begin{align}\label{eq:delta-immediate}
    \delta_{t} := 
    \frac{1}{N}
    \sum_{i=1}^N
        Y_{it}(D_{it}(\mathbf{0}^{<t}, 1))
        - Y_{it}(D_{it}(\mathbf{0}^{\leq t}))
\end{align}
\end{definition}
The immediate treatment effect answers the question of what happens when a decision-maker receives algorithm assistance for the very first time at $t$. 

\begin{definition}[Habituation Effect at $t$]
    \begin{align}\label{eq:Delta-habituated}
    \Delta_{t} := 
    \frac{1}{N}
    \sum_{i=1}^N
        Y_{it}(D_{it}(\mathbf{1}^{\leq t}))
        - Y_{it}(D_{it}(\mathbf{0}^{< t}, 1))
\end{align}
\end{definition}
The habituation effect answers the question of what happens when a decision-maker receives algorithm assistance on every decision up to $t$ versus when they receive assistance for the very first time at $t$. In other words, the effect assesses the impact of the decision-maker ``habituating" to algorithm assistance under repeated exposure.

\begin{definition}[Skilling Effect at $t$]
    \begin{align}\label{eq:lambda-skilling}
        \lambda_{t} :=
        \frac{1}{N}
        \sum_{i=1}^N
        Y_{it}(D_{it}(\mathbf{1}^{< t}, 0))
        - Y_{it}(D_{it}(\mathbf{0}^{\leq t}))
    \end{align}
\end{definition}
The skilling effect answers the question of what happens when a decision-maker receives algorithm assistance until $t$ but has it taken away at $t$ versus when they never receive algorithm assistance at all. In other words, the effect assesses the impact of the decision-maker losing access to the algorithm after repeated exposure.

\subsection{Why estimate these effects?}
We provide three arguments for why it is useful to estimate $\tau_t, \delta_t, \Delta_t$, and $\lambda_t$: (1) there is information in the joint signs of $\delta_t$, $\Delta_t$, and $\tau_t$ that would otherwise be obscured if we only estimated $\tau_t$, (2) $\lambda_t$ can be used to address concerns about the use of artificial intelligence tools leading to the deterioration or loss of human skills, and (3) examining both $\delta_t$ and $\Delta_t$ helps to disentangle the factors driving temporal trends in treatment effects.

\paragraph{Reason 1: Implications of the joint signs of $\delta_t$, $\Delta_t$, and $\tau_t$}
We argue that the joint signs of $\delta_t$, $\Delta_t$, and $\tau_t$ have different implications on how we ought to interpret the effect of the algorithm assistance intervention. In Table \ref{tab:joint-signs}, we show six possible joint sign combinations for $\delta_t$, $\Delta_t$, and $\tau_t$ (assuming none are exactly zero) and interpret each combination.

\begin{table}[ht!]
    \centering
    \begin{adjustbox}{max width=\textwidth}
    \begin{tabular}{cccl}
    \toprule
    $\delta_t$ & $\Delta_t$ & $\tau_t$ & Interpretation \\
    \midrule
     + & + &  + & Immediately beneficial, amplified by repeated exposure \\
     + &  $-$ & + & Immediately beneficial, diminished by repeated exposure, but still beneficial at $t$ \\
     + & $-$ &  $-$ & Immediately beneficial, diminished by repeated exposure, no longer beneficial at $t$ \\
    $-$ & $-$ & $-$ & Immediately harmful, worsened by repeated exposure \\
    $-$ & + & + & Immediately harmful, ameliorated by repeated exposure, becomes beneficial by $t$\\
    $-$ & + & $-$ & Immediately harmful, ameliorated by repeated exposure, but still harmful at $t$ \\
    \bottomrule
    \end{tabular}
    \end{adjustbox}
    \caption{\textbf{Joint sign combinations for $\tau_t$, $\delta_t$, and $\Delta_t$ that have different interpretations.} Note that, since $\delta_t + \Delta_t = \tau_t$, it is not possible for both $\delta_t$ and $\Delta_t$ to be negative while $\tau_t$ is positive. Likewise, it is impossible for $\delta_t$ and $\Delta_t$ to both be positive while $\tau_t$ is negative.}
    \label{tab:joint-signs}
\end{table}

Not only do $\delta_t$, $\Delta_t$, and $\tau_t$ tell us about the effect of algorithm assistance, they also can provide valuable insights into what steps might be taken to improve the intervention. These changes could be applied to the intervention itself, e.g. refining the user interface, or to the surrounding infrastructure and workflows, e.g. administering training to the decision-makers so that they are more skillful in when and how they utilize algorithm assistance. Of particular interest is when $\delta_t$ and $\Delta_t$ have opposing signs:

\begin{enumerate}
    \item {$\mathbf{\boldsymbol{\delta}_t < 0, \boldsymbol{\Delta}_t > 0.}$} This sign combination suggests that algorithm assistance is immediately harmful, i.e. it causes human decision-makers to make worse decisions than they would have had algorithm assistance not been available, but that repeated exposure helps to overcome this harm. In other words, there may be a ``burn-in" period during which the human decision-makers learn how to use the algorithm assistance to their advantage. Two natural responses are (1) making sure that the decision-makers progress through their burn-in period via realistic training sessions before they are permitted to use algorithm assistance on actual decisions, and (2) redesigning the algorithm assistance interface so that it is easier to learn and more intuitive to use, which could reduce the length of the burn-in.
    \item {$\mathbf{\boldsymbol{\delta}_t > 0, \boldsymbol{\Delta}_t < 0.}$} This sign combination suggests that algorithm assistance is immediately beneficial, i.e. the first time human decision-makers use algorithm assistance at time $t$, it leads to better decisions; however, over repeated exposure, this benefit diminishes. There are several reasons why benefit could diminish through repeated exposure, including a gradual over-reliance on the algorithm (a tendency to agree with the algorithm's recommendation no matter what, instead of exercising critical judgment about whether the recommendation is correct) or a gradual ignoring of the algorithm (a tendency to grow annoyed with or skeptical of the algorithm's recommendations and to ignore what it recommends, even when it is giving correct recommendations). Appropriate remedial action could include administering regular supplemental training sessions to target over- or under-reliant behavior and considering offering \textit{selective} algorithm assistance on a subset of decisions where human decision-makers are most likely to find the recommendation actually helpful \citep{McLaughlin2024Complementarity}.
\end{enumerate}
The joint signs of $\delta_t$ and $\Delta_t$ are a summary metric for complex human behavioral processes, but do not provide the full picture into \textit{why} human decision-makers behave in certain ways. To better understand these behaviors -- e.g., determining whether the $\mathbf{\boldsymbol{\delta}}_t > 0, \boldsymbol{\Delta}_t < 0$ case is a result of gradual over-reliance or gradual ignoring -- and to design solutions -- e.g., preventing over-reliance -- conducting follow-up interviews would be helpful. We encourage consideration of mixed methods approaches to evaluating algorithm assistance interventions, with estimation of $\delta_t$ and $\Delta_t$ serving as the quantitative component.

\paragraph{Reason 2: Addressing concerns about ``deskilling" through $\lambda_t$}
Alongside the surge in development of artificial intelligence tools, especially ones powered by large language models, there has been mounting concern that the use of artificial intelligence to augment tasks traditionally done by humans alone may cause humans to lose fundamental problem-solving and critical-thinking skills \citep{Liu2026Deskilling, Budzyn2025EndoscopistDeskilling}. Evidence for whether such deskilling actually occurs, however, is nascent; indeed, how to even measure deskilling rigorously remains an open question. We present the skilling effect estimand, $\lambda_t$, as a way to shed light on whether deskilling exists, specifically as it pertains to decision-making tasks. When $\lambda_t < 0$, this suggests that humans become worse decision-makers after repeated algorithm assistance -- when algorithm assistance is removed, they make poorer decisions than they would have had they never received algorithm assistance at all. A negative $\lambda_t$ estimate therefore suggests ``deskilling." In contrast, when $\lambda_t > 0$, this suggests that algorithm assistance leads to humans becoming better decision-makers even in the absence of assistance, suggesting ``upskilling". By clearly defining skilling in formal causal inference terms, we provide a concrete concept to ground conversations and research around skilling effects. 

\paragraph{Reason 3: Disentangling habituation from changes in the immediate effect}
Suppose $\tau_t \neq \tau_1$, meaning the global treatment effect has changed with time. Consider the following decomposition:
\begin{align*}
    \tau_t - \tau_1 
        &= (\delta_t + \Delta_t) - (\delta_1 + \underbrace{\Delta_1}_{=0})
        = (\delta_t - \delta_1) + \Delta_t
\end{align*}
The above makes it clear that the difference between $\tau_1$ and $\tau_t$ is driven by two factors: (1) a change in the immediate effect from $\delta_1$ to $\delta_t$, and (2) the habituation effect. These two factors represent fundamentally different processes. Under our assumptions, the change in (1) arises when there are sources of temporal trends apart from the intervention, including decision-makers becoming more experienced as they make more decisions; drifts in the characteristics of the decision-recipient population that affect the quality of decisions and/or the decision-recipient outcomes; and changes in the environment, such as reduced resource availability forcing decision-makers to make certain decisions over others in order to conserve resources. In contrast, the change driven by (2) comes specifically from repeated algorithm exposure.

\subsection{Extensions to conditional effect estimands} In Appendix \ref{app:sec:conditional}, we present \textit{conditional} effect estimands under repeated exposure, inspired by the literature on instrumental variables and local average treatment effects. In particular, these conditional effect estimands dig into cases where the decision-maker would have disagreed with the algorithm's recommendation if they were making the decision alone (without assistance) but would comply with the algorithm when they are provided algorithm assistance. Additionally, in Appendix \ref{app:sec:profiling}, we propose an approach for covariate ``profiling" of complying decision-makers, which surfaces characteristics of decision-makers that may be responsible for driving compliance behavior.

\section{How to estimate these effects? Making use of a stepped ``double" wedge trial design}\label{sec:minimax-design}

Drawing upon \cite{Basse2023Minimax}, we derive a minimax stepped ``double wedge" trial design, where the difference between our design and that of \cite{Basse2023Minimax} is the accommodation of two additional effect estimands -- $\tau_t$ and $\lambda_t$. We specifically consider the class of designs where the treatment assignment vector $\mathbf{Z}_i$ for decision-maker $i$ must take on one of the following values: $\mathbf{1}$ (always-treated), $\mathbf{0}$ (always-control), $\{\mathbf{w}_{t, \infty}\}_{t=2}^{T}$ (treatment onset at $t=2, \dots, T$), and $\{\mathbf{w}_{1, t}\}_{t=2}^{T}$ (treatment offset at $t=2, \dots, T$). A standard stepped wedge trial does not include staggered treatment \textit{offsets}, only treatment onsets. The ``double wedge" term refers to the fact that our design contains both. Let $\pi(\mathbf{Z})$ denote the design, and let $\mathbb{H}$ denote the class of all possible designs where $\mathbf{Z}$ takes on the aforementioned possible values. Our goal, then, is to determine the minimax design, $\pi^{opt}$, within the class $\mathbb{H}$.

\begin{figure}[ht!]
    \centering
    \includegraphics[trim={4cm 0 4cm 0}, width=\linewidth]{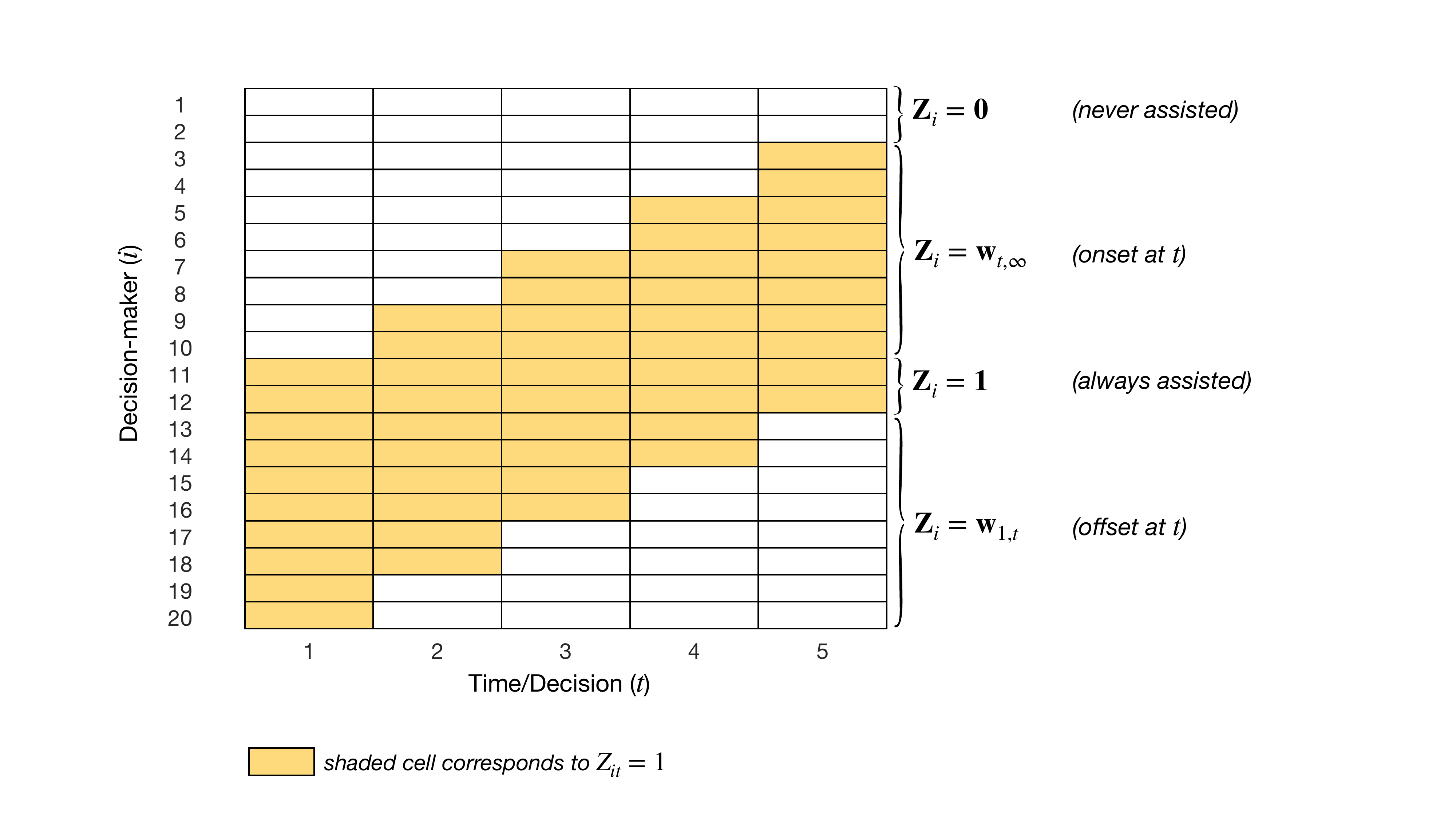}
    \caption{\textbf{Stepped double wedge trial design for a length $T=5$ experiment with 20 decision-makers.} Each row corresponds to a decision-maker and each column to a point in time. (We assume every decision-maker makes a single decision at each point in time). Shaded cells correspond to decisions where the decision-maker is assigned to receive algorithm assistance. A subset of decision-makers are assigned to always-assisted ($\mathbf{Z}_{i} = \mathbf{1}$), a subset to never-assisted ($\mathbf{Z}_i = \mathbf{0}$), a subset to assistance onset at $t=2, \dots, T$ ($\mathbf{Z}_i = \mathbf{w}_{t, \infty}$), and a subset to assistance offset at $t=2, \dots, T$ ($\mathbf{Z}_i = \mathbf{w}_{1, t}$). Our minimax theorem (Theorem \ref{thm:minimax}) provides guidance on how to select the sizes of each of these subsets.}
    \label{fig:mod-sw}
\end{figure}

For simplicity, we assume that each decision-maker makes a single decision for a single decision-recipient at each point in time, for a total of $T$ decisions per decision-maker and $NT$ decisions overall. We largely adopt the same notation and terminology from \cite{Basse2023Minimax}, restated here for clarity and completeness. Let $\mathbf{Y}(D(\mathbf{z})) \in \mathbb{R}^{N \times T}$ denote the potential outcomes matrix where every decision-maker is assigned $\mathbf{z}$. We denote the full schedule of potential outcomes as
    $$\underline{\mathbf{Y}} = [\mathbf{Y}(D(\mathbf{1})), \mathbf{Y}(D(\mathbf{0})), \mathbf{Y}(D(\mathbf{w}_{2, \infty})), \dots \mathbf{Y}(D(\mathbf{w}_{T, \infty}))), \mathbf{Y}(D(\mathbf{w}_{1, 2})), \dots, \mathbf{Y}(D(\mathbf{w}_{1, T}))],$$
and $\underline{\mathbb{Y}}$ denotes the support for the potential outcomes schedule $\underline{\mathbf{Y}}$.

Let $\mathcal{T}_t = \{i: \mathbf{Z}_i = \mathbf{1} \text{ or } \mathbf{Z}_i = \mathbf{w}_{1, t'} , t' > t\}$ denote the set of all decision-makers who make assisted decisions for every decision up until time $t$. This set combines the decision-makers that are assigned to the always-treated assignment vector and those that are assigned to treatment offset at some time $t'$ that is later than $t$. Similarly, let $\mathcal{C}_t = \{i: \mathbf{Z}_i = \mathbf{0} \text{ or } \mathbf{Z}_i = \mathbf{w}_{t', \infty}, t' > t\}$ denote the set of all decision-makers who make unassisted decisions for every decision up until time $t$, which combines the decision-makers that are assigned to the always-control assignment vector and those that are assigned to treatment onset at some time $t'$ that is later than $t$. Let $N_{1, t} := |\mathcal{T}_t| = N_1 + \sum_{t'=t+1}^{T} N_{w_{1, t'}}$ and $N_{0, t} := |\mathcal{C}_t| = N_0 + \sum_{t'=t+1}^{T} N_{w_{t', \infty}}$. We define the following estimators for $\tau_t$, $\Delta_t$, $\delta_t$, and $\lambda_t$, respectively.
\begin{align*}
    \hat{\tau}_t^{sdw} &= 
        \frac{1}{N_{1, t}}\sum_{i=1}^{N} Y_{it} \mathbbm{1}\{i\in \mathcal{T}_t\} -
        \frac{1}{N_{0, t}}\sum_{i=1}^{N} Y_{it} \mathbbm{1}\{i\in \mathcal{C}_t\} \\
    \hat{\Delta}_t^{sdw} &= 
        \frac{1}{N_{1, t}}\sum_{i=1}^{N} Y_{it} \mathbbm{1}\{i\in \mathcal{T}_t\}  -
        \frac{1}{N_{w_{t, \infty}}}\sum_{i=1}^{N} Y_{it} \mathbbm{1} \{\mathbf{Z}_i = \mathbf{w}_{t, \infty}\} \\
    \hat{\delta}_t^{sdw} &= 
        \frac{1}{N_{w_{t, \infty}}}\sum_{i=1}^{N} Y_{it} \mathbbm{1} \{\mathbf{Z}_i = \mathbf{w}_{t, \infty}\} -
        \frac{1}{N_{0, t}}\sum_{i=1}^{N} Y_{it} \mathbbm{1}\{i\in \mathcal{C}_t\}
        \\
    \hat{\lambda}_t^{sdw} &= 
        \frac{1}{N_{w_{1, t}}}\sum_{i=1}^{N} Y_{it} \mathbbm{1} \{\mathbf{Z}_i = \mathbf{w}_{1, t}\} -         \frac{1}{N_{0, t}}\sum_{i=1}^{N} Y_{it} \mathbbm{1}\{i\in \mathcal{C}_t\}.
\end{align*}
Under Assumptions \ref{assump-partial-interf} - \ref{assump:ignorability}, these estimators are unbiased for $\tau_t$, $\Delta_t$, $\delta_t$, and $\lambda_t$ and are all computable by plugging in observed data from the stepped double wedge trial.

\begin{definition}[Loss, adapted from \citep{Basse2023Minimax}]
    We define the \textit{loss} under a particular random assignment $\mathbf{Z}$ and potential outcomes schedule $\underline{\mathbf{Y}}$ as
    \begin{align*}
        L(\mathbf{Z}, \underline{\mathbf{Y}}) =
        \sum_{t=2}^{T} (\tau_t - \hat{\tau}_t^{sdw})^2 + (\delta_t - \hat{\delta}_t^{sdw})^2 + (\Delta_t - \hat{\Delta}_t^{sdw})^2 + (\lambda_t - \hat{\lambda}_t^{sdw})^2
    \end{align*}
\end{definition}

\begin{definition}[Risk, adapted from \citep{Basse2023Minimax}]
    We define the \textit{risk} of an assignment mechanism $\pi \in \mathbb{H}$ as 
    \begin{align*}
        r(\pi; \underline{\mathbb{Y}}) = \mathbb{E}_{\pi}[L(\mathbf{Z}, \underline{\mathbf{Y}})]
    \end{align*}
\end{definition}

Let $\mathcal{Z}$ denote the support for $\mathbf{Z}$. We define the \textit{minimax optimal design} $\pi^{\text{opt}}$ as the design the minimizes the maximum risk over the support of potential outcomes, $\mathbb{Y}$, over the class of designs $\mathbb{H}$.

\begin{definition}[Minimax objective, adapted from \citep{Basse2023Minimax}]
    \begin{align*}
        \pi^{\text{opt}} = \underset{\pi \in \mathbb{H}}{\arg \min} \max_{\underline{\mathbf{Y}} \in \mathbb{Y}} r(\pi; \underline{\mathbf{Y}})
    \end{align*}
\end{definition}

\begin{theorem}[Minimax stepped double wedge design]\label{thm:minimax}
    Under an assumption on the support of the potential outcomes (see Appendix \ref{app:proofs:minimax}), the approximate (integer-relaxed) minimax optimal design is a completely randomized design that assigns $N_1^{\text{opt}}$, $N_0^{\text{opt}}$, $\{N_{w_{t, \infty}}^{\text{opt}}\}_{t=2}^{T}$, $\{N_{w_{1, t}}^{\text{opt}}\}_{t=2}^T$ to always-treated, always-control, treatment onset at $t$, and treatment offset at $t$ trajectories, where
    \begin{align*}
        N_0^{opt} &= N \cdot \left(1 + \sqrt{\frac{2}{3}}\frac{c_2}{d_2} + \sqrt{\frac{2}{3}}\sum_{t=2}^T c_t + \sqrt{\frac{1}{3}}\frac{c_2}{d_2}\sum_{t=2}^T d_t \right)^{-1} \\
        N_1^{opt} &= \sqrt{\frac{2}{3}}\frac{c_2}{d_2}N_0^{opt} \\
        \quad N_{w_{t, \infty}}^{\text{opt}} &= \sqrt{\frac{2}{3}}N_0^{opt} c_t \\
        \quad N_{w_{1, t}}^{\text{opt}} &= \sqrt{\frac{1}{2}}N_1^{opt} d_t \\
        c_t &= 
        \left[ 
        \frac{1}{c_{t+1}^2} + \frac{1}{(1 + \sqrt{\frac{2}{3}} \sum_{l=t+1}^T c_l)^2}
        \right]^{-1/2}
        ,\quad d_t = 
        \left[ 
        \frac{1}{d_{t+1}^2} + \frac{1}{(1 + \sqrt{\frac{1}{2}} \sum_{l=t+1}^T d_l)^2}
        \right]^{-1/2} \\
        c_T &= d_T = 1
    \end{align*}
\end{theorem}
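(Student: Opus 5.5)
Following the strategy of \cite{Basse2023Minimax}, the proof has three stages. First we reduce the minimax problem over $\mathbb{H}$ to a choice of arm sizes. Then we show that, under the support assumption, the worst-case risk of a completely randomized design with arm sizes $(N_1, N_0, \{N_{w_{t,\infty}}\}, \{N_{w_{1,t}}\})$ is a sum of inverse-sample-size terms. Finally we minimize that sum subject to $N_1 + N_0 + \sum_t N_{w_{t,\infty}} + \sum_t N_{w_{1,t}} = N$.

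\textbf{Reduction to completely randomized designs.} The support assumption (in the spirit of \cite{Basse2023Minimax}) will be that $\underline{\mathbb{Y}}$ is closed under permutations of decision-makers. It will also restrict potential outcomes enough that the maximal risk is attained by a configuration in which the estimator variances decouple. A symmetrization argument then shows that for any $\pi\in\mathbb{H}$, averaging over uniformly random relabelings of decision-makers cannot increase the maximal risk. The risk is linear in $\pi$, and the maximum of averages is at most the average of maxima. Hence it suffices to consider designs that are exchangeable across decision-makers. Conditional on the arm counts, such a design is completely randomized, and by convexity the integer-relaxed optimum puts mass on a single vector of counts.

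\textbf{Worst-case risk as a function of counts.} Under complete randomization each estimator $\hat\tau_t^{sdw},\hat\delta_t^{sdw},\hat\Delta_t^{sdw},\hat\lambda_t^{sdw}$ is unbiased, so the risk is a sum of variances. Each is a difference of two sample means over disjoint groups drawn without replacement. After maximizing over the support, with outcomes bounded and chosen so that the finite-population correction and covariance terms are dominated, each variance becomes proportional to $1/n_a + 1/n_b$ for the two groups. Collecting terms over $t=2,\dots,T$, the worst-case risk becomes proportional to
\begin{align*}
R = \sum_{t=2}^T \Big[ \frac{2}{N_{1,t}} + \frac{3}{N_{0,t}} + \frac{2}{N_{w_{t,\infty}}} + \frac{1}{N_{w_{1,t}}}\Big].
\end{align*}
The coefficients come from counting appearances. $N_{1,t}$ appears in $\tau_t$ and $\Delta_t$. $N_{0,t}$ appears in $\tau_t$, $\delta_t$ and $\lambda_t$. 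The onset arm appears in $\Delta_t$ and $\delta_t$. The offset arm appears only in $\lambda_t$. These counts are the source of the $\sqrt{2/3}$ and $\sqrt{1/2}$ factors in the theorem. I would verify this tally carefully against the stated constants before proceeding.

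\textbf{Optimization (main obstacle).} The objective splits into two nested chains that are linked only through the budget. The control chain consists of $N_0$ and the onset arms, via $N_{0,t} = N_0 + \sum_{t'>t} N_{w_{t',\infty}}$. The treated chain consists of $N_1$ and the offset arms. I would write the Lagrangian and solve the stationarity conditions by backward induction in $t$, starting at $t=T$. There $N_{0,T}=N_0$, and the condition for $N_{w_{T,\infty}}$ gives $N_{w_{T,\infty}} = \sqrt{2/3}\,N_0 c_T$ with $c_T=1$.

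The inductive step is to show that stationarity in $N_{w_{t,\infty}}$ has the form $2/N_{w_{t,\infty}}^2 = 2/N_{w_{t+1,\infty}}^2 + 3/N_{0,t}^2$. Since $N_{0,t} = N_0(1+\sqrt{2/3}\sum_{l>t} c_l)$, this yields the stated recursion for $c_t$. The recursion for $d_t$ follows in the same way with ratio $\sqrt{1/2}$.

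The delicate point is the coupling between the chains. Stationarity in $N_0$ and $N_1$ must share a common Lagrange multiplier, and this fixes $N_1/N_0 = \sqrt{2/3}\,c_2/d_2$. I expect reconciling the multipliers at the boundary $t=2$ to be the hardest bookkeeping step. Finally, substituting everything into the budget constraint and solving for $N_0$ gives $N_0^{opt}$. Convexity of $R$ in the relaxed counts makes this stationary point the global minimizer.
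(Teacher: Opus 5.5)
Your proposal is correct and follows essentially the same route as the paper: symmetrization down to completely randomized designs (the paper's adaptations of Basse et al.'s Lemmas 1--4), then a worst-case risk proportional to $\sum_{t=2}^T \bigl[2/N_{1,t} + 3/N_{0,t} + 2/N_{w_{t,\infty}} + 1/N_{w_{1,t}}\bigr]$, then a Lagrangian solved by backward recursion from $t=T$. That worst case is attained by taking all potential outcome matrices identical, so the $-V_{a,b}^{(t)}/N$ terms vanish exactly rather than being ``dominated''. For the coupling step you flagged, the onset and offset stationarity conditions at $s=2$ have empty sums, so they give $v = 3/(N_0 c_2)^2 = 2/(N_1 d_2)^2$ immediately; the paper reaches the same relation $N_1 = \sqrt{2/3}\,(c_2/d_2)\,N_0$ by telescoping the $c_t$ and $d_t$ recursions.
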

\begin{proof}
    See Appendix \ref{app:proofs:minimax}.
\end{proof}
In practical implementation, we would need to apply a rounding rule, e.g., a floor, to obtain integer-valued $N_0^{opt}, N_1^{opt}, N_{w_{t, \infty}}^{opt}, N_{w_{1,t}}^{opt}$. Given a $T$, the total budget $N$ must be large enough such that, even after rounding, every assignment trajectory has a non-zero number of decision-makers assigned to it. In Figure \ref{fig:min-sample-size}, we show the relationship between the minimum $N$ budget and $T$.

\begin{figure}[ht!]
    \centering
    \includegraphics[width=0.6\linewidth]{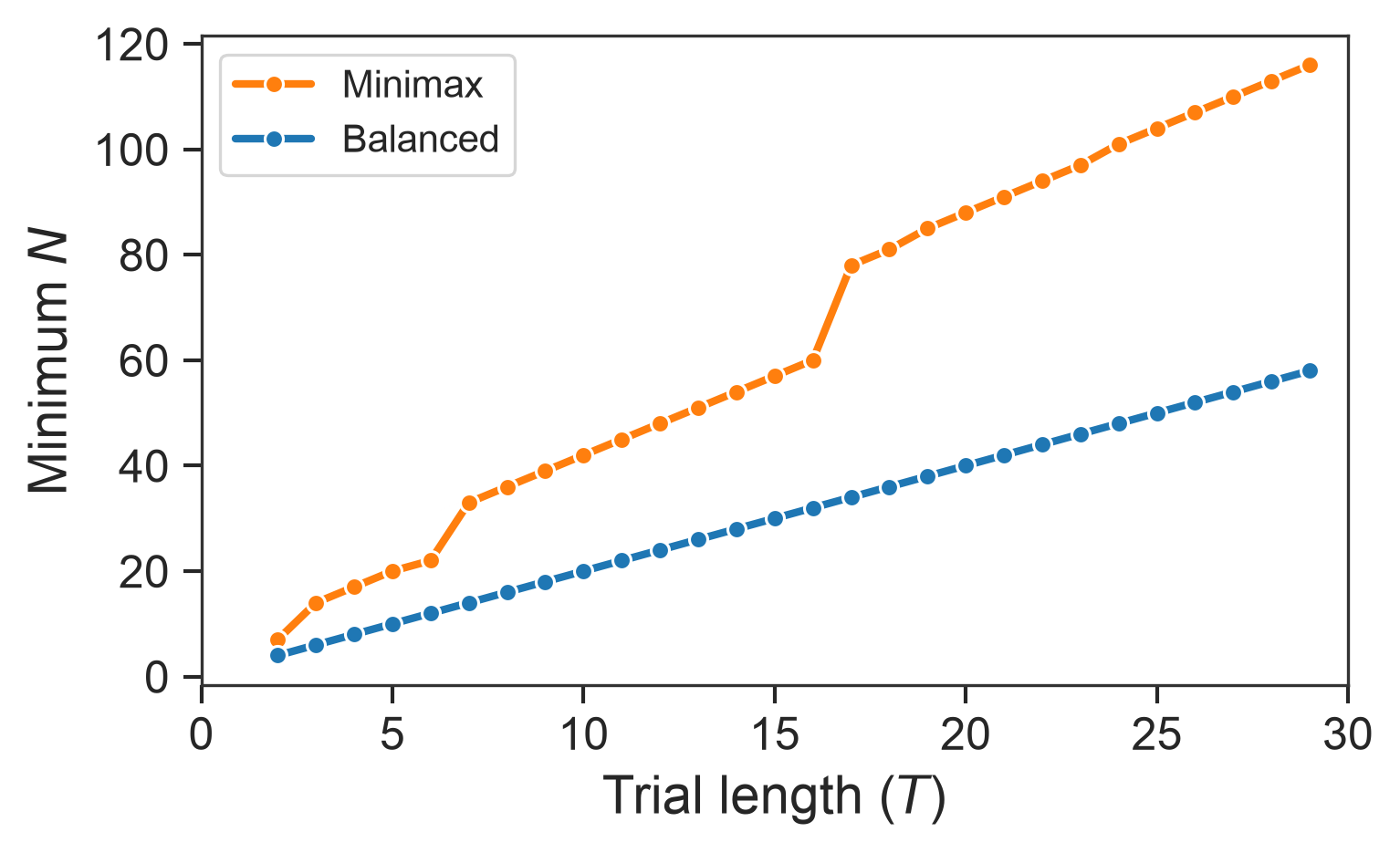}
    \caption{\textbf{Minimum $N$ to guarantee non-empty arms.} We show the minimum $N$ needed in the minimax stepped double wedge design to ensure that each assignment trajectory gets a non-zero number of decision-makers, across varying $T$. We compare this to a baseline stepped double wedge design that is not minimax, where the minimum $N$ for any $T$ is exactly $2T$.}
    \label{fig:min-sample-size}
\end{figure}

\section{Comparing the stepped double wedge to alternatives}\label{sec:alternatives}
In this section, we examine the tradeoffs we make when choosing to run a stepped double wedge design in comparison to two alternative designs: a design where all decision-makers are randomized to either always-assisted or never-assisted, and a per-decision randomized design where each decision is randomized to be assisted or unassisted. We choose these two designs as alternatives for comparison because they were two of the common approaches found in our literature review of published randomized trials of algorithm assistance (Appendix \ref{app:sec:elicit}).

\subsection{Sample efficiency loss (stepped double wedge versus a decision-maker-randomized design)}
We propose running the stepped double wedge trial in order to facilitate estimating $\delta_t$, $\Delta_t$, and $\lambda_t$. To estimate $\tau_t$ only, it would be more efficient to run a standard decision-maker randomized trial where all decision-makers are randomized to either always receive assistance or never receive assistance, rather than having some decision-makers be allocated to staggered treatment onsets and offsets. One way to quantify the loss in efficiency is to examine, at each $t$, the fraction of the decision-makers that are assigned to treatment trajectories that are used to estimate $\tau_t$. More concretely, we calculate $(N_{1, t} + N_{0, t}) / N$: in a decision-maker randomized trial without any staggered onsets and offsets, this ratio will always be equal to $1$. However, in our stepped double wedge trial design, the ratio decreases as $t$ increases (Figure \ref{fig:tradeoff-n}). In settings where the budget of decision-makers, $N$, is small, trialists may want to consider the parallel decision-maker randomized design. On the other hand, when $N$ is large, and when there may be prior reasons to believe that behavior dynamics under repeated algorithm exposure are to be expected, e.g. from anecdotal evidence or pilot studies, then a stepped double wedge design may be preferred.

\begin{figure}[ht!]
    \centering
    \includegraphics[width=0.6\linewidth]{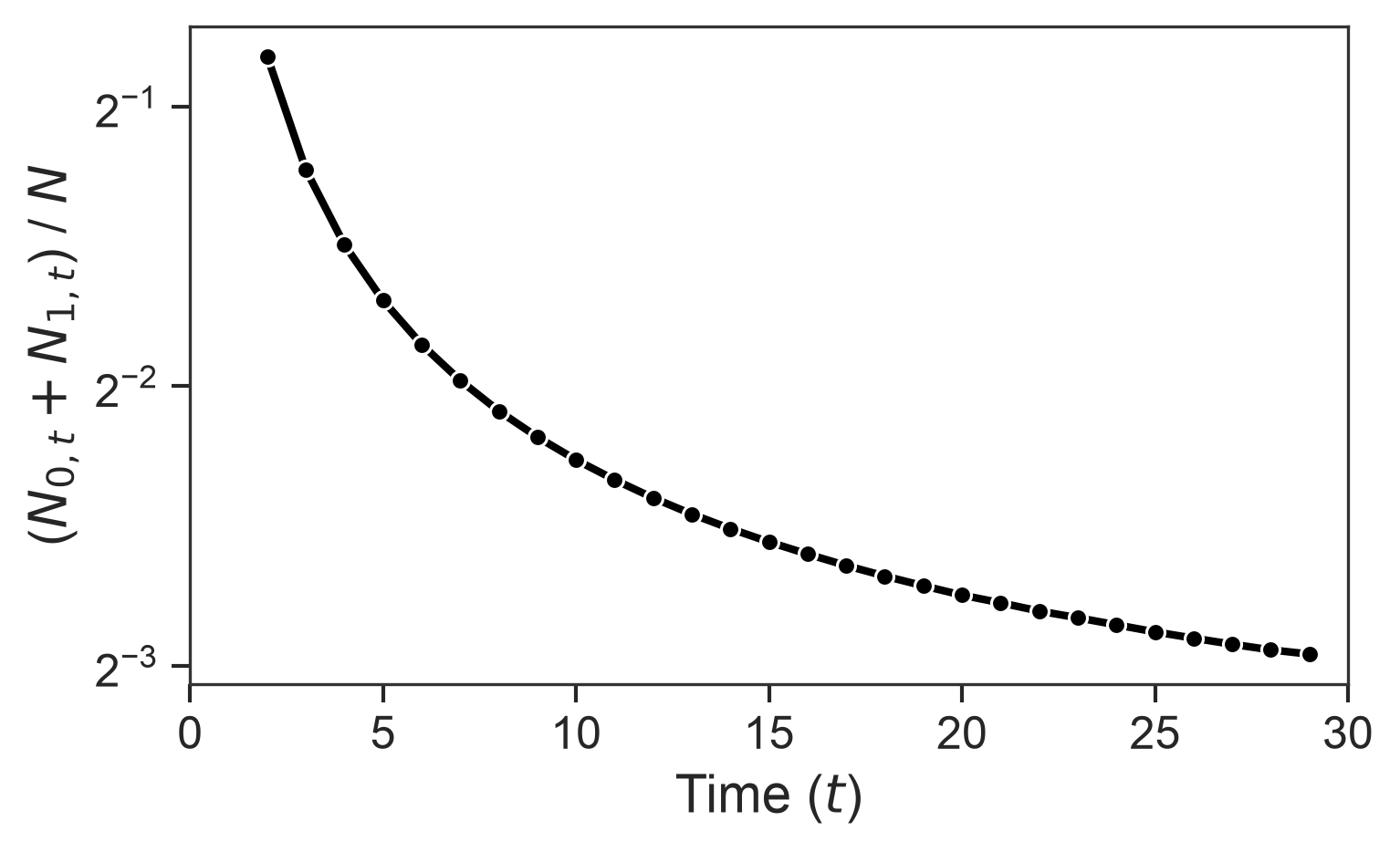}
    \caption{\textbf{Sample efficiency tradeoff of running a stepped double wedge design to estimate $\tau_t$.} As the time horizon of the experiment increases, more decision-makers are allocated to staggered treatment onset and offset trajectories, and fewer are able to be used to estimate $\tau_t$, thus decreasing sample efficiency. However, a stepped double wedge design may still be preferred to facilitate estimation of $\delta_t$, $\Delta_t$, and $\lambda_t$.}
    \label{fig:tradeoff-n}
\end{figure}

\subsection{Bias-variance tradeoffs (stepped double wedge versus a decision-randomized design)}

Another alternative design is one where each decision is randomized to be assisted or unassisted. Note that, from here onwards, we transition to focus on a population-level definition for $\tau_t$, i.e. $\tau_t := \mathbb{E}[Y_t(D(\mathbf{1}^{\leq t})) - Y_t(D(\mathbf{0}^{\leq t}))]$ which results in more interpretable bias expressions for illustration purposes. We also introduce $\tau$ as the global effect averaged over all $t$, $\tau := 1 / T \sum_{t=1}^{T} \tau_t$.

When a per-decision randomized design is performed, a canonical simple estimator is the Horvitz-Thompson difference-in-means estimator that averages across all decisions,
\begin{align*}
    \hat{\tau}^{dr} &=  
        \frac{1}{NT} \sum_{i=1}^N \sum_{t=1}^T 
         \frac{Y_{it}Z_{it}}{p}
         -
         \frac{Y_{it}(1 - Z_{it})}{1 - p}
\end{align*}
where $dr$ stands for ``decision-randomized" and $p := P(Z_{it} = 1)$. The estimator $\hat{\tau}^{dr}$ is generally biased for $\tau$ unless there is no between-decision interference. For instance, it is straightforward to show that bias is guaranteed to be non-zero under a strict monotonicity condition.

\begin{proposition}\label{prop:bias-monotonic}
     If the potential outcomes $Y_{it}(D_{it}(\mathbf{Z}^{\leq t}))$ are strictly monotonically increasing or decreasing in $\sum_{t'=1}^t Z_{it'}$, and the decision-randomized design satisfies $P(\mathbf{Z}_i = \mathbf{z}) > 0$ for all $\mathbf{z} \in \{0, 1\}^T$, then $|\mathbb{E}[\hat{\tau}^{dr}] - \tau| > 0$ for all $T \geq 2$.
\end{proposition}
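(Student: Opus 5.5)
We compute $\mathbb{E}[\hat{\tau}^{dr}]$ directly and compare it term by term, at each fixed $t$, with $\tau_t$. The aim is to show the per-$t$ discrepancy is zero at $t=1$ and has the same strict sign for every $t \geq 2$. We work at the population level. We assume the decision-randomized design draws the $Z_{it}$ i.i.d.\ Bernoulli$(p)$ with $0<p<1$, which guarantees $P(\mathbf{Z}_i=\mathbf{z})>0$, and we use Assumptions \ref{assump-partial-interf}--\ref{assump:ignorability}.

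\textbf{Step 1: expectation of each summand.} By ignorability and non-anticipation,
\begin{align*}
\mathbb{E}\!\left[\frac{Y_{it}Z_{it}}{p}\right] = \mathbb{E}\big[Y_{it}(D_{it}(\mathbf{Z}_i^{<t},1))\big],
\qquad
\mathbb{E}\!\left[\frac{Y_{it}(1-Z_{it})}{1-p}\right] = \mathbb{E}\big[Y_{it}(D_{it}(\mathbf{Z}_i^{<t},0))\big].
\end{align*}
In each case the history $\mathbf{Z}_i^{<t}$ is drawn from the design distribution, independently of $Z_{it}$. Define
\begin{align*}
\tilde\tau_t := \mathbb{E}\big[Y_{t}(D(\mathbf{Z}^{<t},1)) - Y_{t}(D(\mathbf{Z}^{<t},0))\big].
\end{align*}
Then $\mathbb{E}[\hat\tau^{dr}] = \frac1T\sum_t \tilde\tau_t$. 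Since $\tau=\frac1T\sum_t\tau_t$, the bias is $\frac1T\sum_t(\tilde\tau_t-\tau_t)$. At $t=1$ the history is empty, so $\tilde\tau_1=\tau_1$.

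\textbf{Step 2: sign at each $t\ge2$.} Take the increasing case; the decreasing case is symmetric. We read strict monotonicity as a pointwise statement: $Y_{it}$ depends on $\mathbf{Z}^{\le t}$ only through $k=\sum_{t'\le t}Z_{it'}$, say $Y_{it}=g_{it}(k)$ with $g_{it}$ strictly increasing. The reading "strictly increasing whenever the count increases" also suffices. Let $K=\sum_{t'<t}Z_{it'}\in\{0,\dots,t-1\}$. Then
\begin{align*}
\tilde\tau_t-\tau_t = \mathbb{E}\big[g_t(K+1)-g_t(K)\big] - \big[g_t(t)-g_t(0)\big]
= -\,\mathbb{E}\big[\big(g_t(t)-g_t(K+1)\big) + \big(g_t(K)-g_t(0)\big)\big].
\end{align*}
Both brackets inside the expectation are nonnegative by monotonicity. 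When $t\ge2$ they vanish simultaneously only if $K+1=t$ and $K=0$, which is impossible. Hence the integrand is strictly positive on every realization. Positivity of all histories is not even needed for this conclusion. It follows that $\tilde\tau_t-\tau_t<0$ strictly for every $t\ge2$.

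\textbf{Step 3: conclude.} All per-$t$ terms with $t\ge2$ share the same strict sign and the $t=1$ term is zero. Therefore
\begin{align*}
\mathbb{E}[\hat\tau^{dr}]-\tau = \frac1T\sum_{t=2}^T(\tilde\tau_t-\tau_t)
\end{align*}
is strictly negative in the increasing case and strictly positive in the decreasing case. This gives $|\mathbb{E}[\hat\tau^{dr}]-\tau|>0$ whenever $T\ge2$.

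The main obstacle is Step 2, because the monotonicity hypothesis has to be pinned down. If it means only that the expected potential outcome is monotone in the cumulative count, with possible dependence on the order of exposures, then we would instead compare the full histories $\mathbf{1}^{\le t}$ and $\mathbf{0}^{\le t}$ against $(\mathbf{Z}^{<t},1)$ and $(\mathbf{Z}^{<t},0)$ by a coordinatewise chain argument. In that setting the positivity condition $P(\mathbf{Z}_i=\mathbf{z})>0$ is what ensures the strict inequality is realized with positive probability.
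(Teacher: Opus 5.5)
Your proof is correct and follows essentially the same route as the paper: write the bias as $\frac{1}{T}\sum_t$ of per-period terms, note that the $t=1$ term vanishes, and for $t\ge 2$ compare $(\mathbf{Z}^{<t},1)$ with $\mathbf{1}^{\le t}$ and $(\mathbf{Z}^{<t},0)$ with $\mathbf{0}^{\le t}$. Your reading of the hypothesis as dependence on the count alone matches the paper's ``if and only if'' formulation.

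The one refinement is in how strictness is obtained. The paper makes each of the two comparisons strict separately, which is where it invokes $P(\mathbf{Z}_i=\mathbf{z})>0$. You instead observe that the two gaps cannot vanish simultaneously when $t\ge 2$, so the positivity hypothesis is actually superfluous. The same observation also handles the order-dependent alternative in your last paragraph, with no chain argument or positivity needed.
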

\begin{proof}
    See Appendix \ref{app:proofs:bias-monotonic}.
\end{proof}

In contrast, we can estimate $\tau_t$ and $\tau$ unbiasedly using the stepped double wedge trial: the timepoint-specific estimate $\hat{\tau}^{sdw}_t$ is unbiased for $\tau_t$, and the averaged estimator $\hat{\tau}^{sdw} = 1 / T \sum_{t=1}^T \hat{\tau}^{sdw}_t$
is unbiased for $\tau$. While the stepped double wedge design facilitates unbiased estimation, it also results in higher variance. This is both due to within decision-maker correlation across different decisions, and because fewer decisions are used for the estimation of $\tau_t$ and $\tau$. Thus, there is a bias-variance tradeoff between choosing the per-decision design versus the stepped double wedge design. 

To show this bias-variance tradeoff more concretely, we first derive exact bias expressions for $\hat{\tau}^{dr}$ across three illustrative data-generative processes. We focus on bias for $\tau_T$, the global effect in the last time period of the experiment, which can be revealing of what the effect of the intervention would be in the time periods following the experiment, and for $\tau$, the averaged global effect. We examine both bias as a function of $T$, and also what value the bias would take on if $T$ were to approach infinity, which helps shed light on what to expect in long experiments. We then use simulations to examine the bias-variance tradeoff when choosing between $\hat{\tau}^{dr}$ and $\hat{\tau}^{sdw}$.

\subsubsection{Bias of the decision-randomized design under three data-generating processes}

Inspired by \cite{Raji2025Prediction}, we define three data-generating processes that demonstrate three different ways in which decision-making behavior may change with repeated exposure to the algorithm -- by \textbf{developing Automation Bias}, by \textbf{developing Alert Fatigue}, and by \textbf{developing Calibrated Reliance}. Whereas \cite{Raji2025Prediction} define data-generating processes over potential decisions, we define ours over both decisions and outcomes. These three data-generating processes differ in how agreement between the potential decision and the algorithm recommendation depends on prior treatment assignments and prior observed outcomes. While the data-generating processes are stylized and real-world processes are likely much more complicated, they nevertheless provide a sense of how even very simple forms of decision-maker behavior adaptation across repeated exposure can lead to non-negligible bias in $\hat{\tau}^{dr}$. For simplicity and tractability, we construct data-generating processes where there is a habituation effect, but no skilling effect (i.e. $\lambda_t = 0$). 

We first introduce some additional notation and assumptions.
Suppose that, for each decision, there exists a correct decision $D^*_{it}$. We define
\begin{align*}
    Y_{it}(\mathbf{Z}_i^{\leq t}) 
    &:= \mathbbm{1} \{ D(\mathbf{Z}_i^{\leq t}) = D_{it}^*\} \\
    A_{it} &:= \mathbbm{1} \{R_{it} = D_{it}^*\} \\
    \mu_A &:= \mathbb{E}[A_{it}] \\
    Q_{it}(\mathbf{Z}_i^{\leq t}) &:= \mathbbm{1} \{D_{it}(\mathbf{Z}_i^{\leq t}) = R_{it}\} \\
    q_{0, i} &:= P(D_{it}(\mathbf{0}^{\leq t}) = R_{it}) \\
    \mathbb{E}[q_{0, i}] &:= q_0 \\
    p &:= P(Z_{it} = 1)
\end{align*}
$Y_{it}(\mathbf{Z}_i^{\leq t})$ is an indicator for whether the potential decision matches the correct decision, and $A_{it}$ is an indicator for whether the algorithm's recommendation matches the correct decision. We additionally assume $A_{it}$ are independent and identically distributed across all $i, t$, with constant $\mu_A$ as the probability that the algorithm's recommendation is correct. $Q_{it}(\mathbf{Z}^{\leq t})$ is an indicator for whether the potential decision matches the algorithm's recommendation, $q_{0, i}$ is the probability of agreement between a potential decision made by decision-maker $i$ with no exposure to the algorithm, and $q_0$ is the mean of $q_{0, i}$ across all decision-makers. Finally, we let $p$ denote the probability of being assigned to treatment in the decision-randomized design.

\begin{example}[Developing Automation Bias]\label{ex:ab-dgp}
\begin{align*}
    p_{Q, \text{ab}}(Z_{it}, S_{it}) &:= q_0 + \beta_Z Z_{it} + \beta_S     S_{it}Z_{it} \\
    S_{it} &:= \rho S_{it-1} + (1-\rho)Z_{it-1}, \quad \text{for $t \geq 2$}, \ S_{i1} = 0 \\
    P(Q_{it}(\mathbf{Z}_{it}) = 1) &= \text{clip}(p_{Q, ab}, 0, 1)
    \end{align*}
    where $\beta_S > 0$, $\rho \in [0, 1)$, $\beta_Z \in \mathbb{R}$. At $\rho$ = 1, there is no dependence on prior exposure, so we exclude this boundary point; the same rationale holds for the two other data-generating processes we introduce below. The $\rho$ scalar weights how much cumulative prior exposure to the algorithm matters, as opposed to the most recent prior exposure. The $\beta_Z$ coefficient captures the immediate effect of algorithm assistance on concordance probability. The $\beta_S$ coefficient captures the influence of prior exposure on concordance probability. We restrict $\beta_S$ to be positive, meaning the decision-maker becomes more likely to concord with the algorithm's recommendation the more times they have interacted with the algorithm, behavior that is indicative of automation bias.
\end{example}
\begin{proposition}[Bias under Automation Bias DGP]\label{prop:ab-bias}
    Assume that the parameters $q_0, \beta_Z, \beta_S$ and the variables $\mathbf{Z}, \mathbf{S}$ take on values such that $p_{Q, {ab}} \in [0, 1]$; that is, the clipping function does not need to be applied. Assume also that $\rho \in [0, 1)$. Then, under the Automation Bias data-generating process,
    \begin{align*}
        \mathbb{E}_{\mathcal{D}_{ab}}[\hat{\tau}^{dr}] - \tau_T 
            &= -(2\mu_A - 1)\beta_S \left( \left(1 - \rho^{T-1}\right) - p \left( 1 - \frac{\rho^T - 1}{T(\rho - 1)}   \right) \right) \\
        \mathbb{E}_{\mathbf{Z},\mathcal{D}_{ab}}[\hat{\tau}^{dr}] - \tau 
            &= -(1-p)(2\mu_A - 1)\beta_S\left(1 - \frac{\rho^{T} - 1}{T(\rho - 1)}\right)
    \end{align*}
    In the limit as $T \rightarrow \infty$, 
    \begin{align*}
        \mathbb{E}_{\mathcal{D}_{ab}}[\hat{\tau}^{dr}] - \tau_T 
        =
        \mathbb{E}_{\mathcal{D}_{ab}}[\hat{\tau}^{dr}] - \tau
        = -(1-p)(2\mu_A - 1)\beta_S
    \end{align*}
\end{proposition}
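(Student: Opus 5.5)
The plan is to reduce everything to expected concordance probabilities, using the fact that with binary decisions the outcome is an affine function of concordance. Since $D_{it}, R_{it}, D^*_{it} \in \{0,1\}$, we have $Y_{it} = 1$ iff either ($Q_{it}=1$ and $A_{it}=1$) or ($Q_{it}=0$ and $A_{it}=0$). I will treat $A_{it}$ (i.i.d.\ with mean $\mu_A$) as independent of the concordance draw given the assignment history, which is implicit in the DGP since $P(Q_{it}=1)$ is specified as a function of $(Z_{it},S_{it})$ only. Conditional on $\mathbf{Z}_i^{\leq t}$ this gives
\begin{align*}
P(Y_{it}=1 \mid \mathbf{Z}_i^{\leq t}) = (1-\mu_A) + (2\mu_A-1)\, p_{Q,ab}(Z_{it}, S_{it}).
\end{align*}
The clipping assumption makes $p_{Q,ab}$ exactly linear in $(Z_{it}, S_{it}Z_{it})$. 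As a result, every expectation needed below is $(1-\mu_A) + (2\mu_A-1)(\cdot)$ applied to a linear expression in $Z$ and $S$, and the intercepts cancel in every contrast. Making this independence step explicit is the point I expect to need the most care, since it is the one modeling assumption the DGP only states implicitly.

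The first step is to compute the estimands. Under $\mathbf{1}^{\leq t}$ the recursion $S_{it} = \rho S_{i,t-1} + (1-\rho)$ with $S_{i1}=0$ gives $S_{it} = 1-\rho^{t-1}$ by induction. Under $\mathbf{0}^{\leq t}$ we have $S_{it}=0$ and $Z_{it}=0$, so the concordance probability is $q_0$. Hence
\begin{align*}
\tau_t = (2\mu_A-1)\left(\beta_Z + \beta_S(1-\rho^{t-1})\right),
\end{align*}
and averaging over $t$ with $\sum_{t=1}^T \rho^{t-1} = (\rho^T-1)/(\rho-1)$ gives
\begin{align*}
\tau = (2\mu_A-1)\left(\beta_Z + \beta_S\left(1 - \frac{\rho^T-1}{T(\rho-1)}\right)\right).
\end{align*}

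The second step is to compute $\mathbb{E}[\hat\tau^{dr}]$. In the decision-randomized design $Z_{it}$ is independent of $(Z_{i1},\dots,Z_{i,t-1})$, and $S_{it}$ is a function of those past assignments only. Therefore $\mathbb{E}[Y_{it}Z_{it}]/p = \mathbb{E}[Y_{it}\mid Z_{it}=1]$, which by the affine formula equals $(1-\mu_A)+(2\mu_A-1)(q_0+\beta_Z+\beta_S\mathbb{E}[S_{it}])$. Likewise, $\mathbb{E}[Y_{it}(1-Z_{it})]/(1-p)$ equals $(1-\mu_A)+(2\mu_A-1)q_0$, because the $S$ term is multiplied by $Z_{it}=0$. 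By linearity of the recursion, $\mathbb{E}[S_{it}] = \rho\,\mathbb{E}[S_{i,t-1}] + (1-\rho)p$, so $\mathbb{E}[S_{it}] = p(1-\rho^{t-1})$. Averaging over $i$ and $t$ gives
\begin{align*}
\mathbb{E}[\hat\tau^{dr}] = (2\mu_A-1)\left(\beta_Z + p\,\beta_S\left(1 - \frac{\rho^T-1}{T(\rho-1)}\right)\right).
\end{align*}

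The third step is to take differences. Subtracting $\tau_T$, the $\beta_Z$ terms cancel and we obtain $-(2\mu_A-1)\beta_S\big((1-\rho^{T-1}) - p(1-\tfrac{\rho^T-1}{T(\rho-1)})\big)$, which is the first claimed expression. Subtracting $\tau$ instead gives $-(1-p)(2\mu_A-1)\beta_S\big(1-\tfrac{\rho^T-1}{T(\rho-1)}\big)$. Whether the expectation is labeled $\mathbb{E}_{\mathcal{D}_{ab}}$ or $\mathbb{E}_{\mathbf{Z},\mathcal{D}_{ab}}$, it is the same joint expectation over the assignments and the DGP, so both displays follow from this one computation.

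For the limit, $\rho\in[0,1)$ gives $\rho^{T-1}\to 0$ and $(\rho^T-1)/(T(\rho-1))\to 0$. Both biases therefore converge to $-(1-p)(2\mu_A-1)\beta_S$.
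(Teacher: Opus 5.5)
Your proposal is correct and follows essentially the same route as the paper: write $Y_{it}=Q_{it}A_{it}+(1-Q_{it})(1-A_{it})$, use independence of $A_t$ from concordance to pull out the factor $(2\mu_A-1)$, solve the $S_t$ recurrence under $\mathbf{1}^{\leq t}$ and in expectation under Bernoulli($p$) assignment, and then take differences. Your explicit remark that $Z_{it}$ must be independent of past assignments, so that conditioning on $Z_{it}=1$ leaves $\mathbb{E}[S_{it}]=p(1-\rho^{t-1})$ unchanged, makes precise a step the paper uses only implicitly.
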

\begin{proof}
    See Appendix \ref{app:proofs:bias}.
\end{proof}

Notice that, if $\mu_A > 0.5$ (the algorithm is better than random guessing), $\beta_S > 0$, and $\rho \in (0, 1)$, the bias $\mathbb{E}_{\mathbf{Z}, \mathcal{D}_{ab}}[\hat{\tau}^{dr}] - \tau$ becomes more negative as the duration of the experiment increases. Additionally, the absolute magnitudes of the bias for $\tau_T$, the bias for $\tau$, and the bias in the limit increase with both $\mu_A$ and $\beta_S$.

\begin{example}[Developing Alert Fatigue]\label{ex:af-dgp}
    \begin{align*}
        p_{Q, \text{af}}(Z_{it}, S_{it}) &:= q_0 + \beta_Z Z_{it} + \beta_S Z_{it}S_{it} \\
        S_{it} &= \rho S_{it-1} + (1-\rho) Z_{it-1}(1 - A_{it-1}), \quad \text{for $t \geq 2$}, \ S_{i1} = 0 \\
        P(Q_{it}(\mathbf{Z}_{it}) = 1) & = \text{clip}(p_{Q, af}, 0, 1)
    \end{align*}
    where $\beta_S < 0, \ \rho \in [0, 1)$, $\beta_Z \in \mathbb{R}$. Here, we conceptualize $S_{it}$ as latent ``reliance", where the reliance updates (specifically, decreases) each time the decision-maker receives algorithm assistance and the recommendation is incorrect. Under this DGP, the decision-maker will concord with the algorithm less and less over time, as long as the algorithm makes non-zero mistakes. Note that this DGP necessarily assumes that the decision-maker operates in a setting with rapid, observable feedback and can assess the correctness of the algorithm's recommendation before the next time they need to make a decision.
\end{example}

\begin{proposition}[Bias under Alert Fatigue DGP]\label{prop:af-bias}
   Assume that the parameters $q_0, \beta_Z, \beta_S$ and the variables $\mathbf{Z}, \mathbf{S}$ take on values such that $p_{Q, {af}} \in [0, 1]$; that is, the clipping function does not need to be applied. Assume also that $\rho \in [0, 1)$. Then, under the Alert Fatigue data-generating process,
    \begin{align*}
        \mathbb{E}_{\mathcal{D}_{af}}[\hat{\tau}^{dr}] - \tau_T 
            &= -(2\mu_A - 1)\beta_S (1 - \mu_A)
                \left(\left(1 - \rho^{T-1} \right)
                - p\left( 1 - \frac{\rho^T - 1}{T(\rho - 1)} \right) 
                \right) \\
        \mathbb{E}_{\mathcal{D}_{af}}[\hat{\tau}^{dr}] - \tau 
            &= -(1-p)(2\mu_A - 1)\beta_S (1 - \mu_A)
                \left(1 - \frac{\rho^{T} - 1}{T(\rho - 1)}\right)
    \end{align*}
    In the limit as $T \rightarrow \infty$,
    \begin{align*}
        \mathbb{E}_{\mathcal{D}_{af}}[\hat{\tau}^{dr}] - \tau_T 
        =
        \mathbb{E}_{\mathcal{D}_{af}}[\hat{\tau}^{dr}] - \tau
        &= -(1-p)(2\mu_A - 1)\beta_S (1 - \mu_A)
    \end{align*}
\end{proposition}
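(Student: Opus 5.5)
The plan is to mirror the argument for Proposition \ref{prop:ab-bias}, since the Alert Fatigue DGP differs from the Automation Bias DGP only in the update $S_{it} = \rho S_{it-1} + (1-\rho) Z_{it-1}(1-A_{it-1})$. The overall route is: express $\mathbb{E}[Y_{it}]$ through the concordance probability, compute $\tau_T$ and $\tau$ under deterministic regimes, compute $\mathbb{E}[\hat{\tau}^{dr}]$ under Bernoulli($p$) per-decision randomization, subtract, and take $T\to\infty$.

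\textbf{Step 1: reduce outcomes to concordance.} Since $D$, $R$ and $D^*$ are binary, a correct decision occurs exactly when $D=R$ and $R$ is correct, or $D \neq R$ and $R$ is incorrect. Hence $Y_{it} = Q_{it}A_{it} + (1-Q_{it})(1-A_{it})$. We need the generative assumption, implicit in the DGP, that $Q_{it}$ is drawn given $(Z_{it},S_{it})$ independently of the current $A_{it}$. Then
\begin{align*}
\mathbb{E}[Y_{it}\mid Z,S] = (1-\mu_A) + (2\mu_A-1)\,p_{Q,af}(Z_{it},S_{it}).
\end{align*}
Because $S_{it}$ depends only on $A_{i,t-1},A_{i,t-2},\dots$ and these are i.i.d. and independent of $A_{it}$, we get $\mathbb{E}[Z_{it}S_{it}A_{it}] = \mu_A\,\mathbb{E}[Z_{it}S_{it}]$. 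So every expected outcome is affine in $\mathbb{E}[Z_{it}S_{it}]$ with slope $(2\mu_A-1)\beta_S$. Moreover $\mathbb{E}[Z_{it}(1-A_{it})] = (1-\mu_A)\mathbb{E}[Z_{it}]$, since assignments are independent of $A$.

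\textbf{Step 2: unroll the recursion.} With $S_{i1}=0$, the recursion gives
\begin{align*}
S_{it} = (1-\rho)\sum_{s=1}^{t-1}\rho^{t-1-s} Z_{is}(1-A_{is}).
\end{align*}
Consider the two deterministic regimes and the randomized design:
\begin{itemize}
\item Under $\mathbf{1}^{\leq t}$: $\mathbb{E}[S_{it}] = (1-\mu_A)(1-\rho^{t-1})$.
\item Under $\mathbf{0}^{\leq t}$: $S_{it}=0$.
\item Under the decision-randomized design: $Z_{it}$ is independent of the past, so $\mathbb{E}[Z_{it}S_{it}] = p^2(1-\mu_A)(1-\rho^{t-1})$.
\end{itemize}

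\textbf{Step 3: the estimands.} Using Step 2, $\tau_t = (2\mu_A-1)\big(\beta_Z + \beta_S(1-\mu_A)(1-\rho^{t-1})\big)$. For $\hat{\tau}^{dr}$, the Horvitz--Thompson form gives
\begin{align*}
\mathbb{E}\!\left[\tfrac{Y_{it}Z_{it}}{p}\right] - \mathbb{E}\!\left[\tfrac{Y_{it}(1-Z_{it})}{1-p}\right]
= (2\mu_A-1)\big(\beta_Z + \beta_S\,\mathbb{E}[S_{it}\mid Z_{it}=1]\big).
\end{align*}
Here $\mathbb{E}[S_{it}\mid Z_{it}=1] = p(1-\mu_A)(1-\rho^{t-1})$. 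Averaging over $t$ uses the identity
\begin{align*}
\frac{1}{T}\sum_{t=1}^T(1-\rho^{t-1}) = 1 - \frac{\rho^T-1}{T(\rho-1)}.
\end{align*}
Writing $g_T$ for this quantity, we get $\mathbb{E}[\hat{\tau}^{dr}] = (2\mu_A-1)\big(\beta_Z + p\,\beta_S(1-\mu_A)\,g_T\big)$, and likewise $\tau = (2\mu_A-1)\big(\beta_Z + \beta_S(1-\mu_A)\,g_T\big)$.

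\textbf{Step 4: subtract and take limits.} Subtracting gives both displayed bias formulas directly: the bias for $\tau$ is $-(1-p)(2\mu_A-1)\beta_S(1-\mu_A)\,g_T$, and the bias for $\tau_T$ has the stated form with $(1-\rho^{T-1})$ in place of $g_T$ in the first term. As $T\to\infty$, $\rho^{T-1}\to0$ and $g_T\to1$ because $\rho\in[0,1)$, so both biases converge to $-(1-p)(2\mu_A-1)\beta_S(1-\mu_A)$.

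\textbf{Main obstacle.} The expected difficulty is justifying the factorizations in Step 1. We must make explicit that $Q_{it}$ is conditionally independent of the current $A_{it}$ given $(Z_{it},S_{it})$, while $S_{it}$ depends on past $A$'s. We must also check that no product terms such as $\mathbb{E}[Z_{is}A_{is}Z_{it}]$ break the factorization. These terms factor because $Z$ is independent of $A$ and the $A$'s are i.i.d. The no-clipping assumption is what keeps every expectation linear.
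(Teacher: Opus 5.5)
Your proposal is correct and follows essentially the same route as the paper. Both write $Y_{it} = Q_{it}A_{it} + (1-Q_{it})(1-A_{it})$, use independence of the current $A_{it}$ from $Q_{it}$ to factor out $(2\mu_A-1)$, and solve the $S_{it}$ recursion under $\mathbf{1}^{\leq t}$, $\mathbf{0}^{\leq t}$ and Bernoulli($p$) assignment to get $\tau_t$ and $\mathbb{E}[\hat{\tau}^{dr}] = (2\mu_A-1)\bigl(\beta_Z + p\beta_S(1-\mu_A)g_T\bigr)$, before subtracting and letting $T\to\infty$. The differences are cosmetic: you unroll $S_{it}$ as an explicit geometric sum instead of solving the recurrence in expectation, and you state explicitly two facts the paper uses implicitly, namely that $Z_{it}$ is independent of past assignments and $A_{it}$ is independent of $S_{it}$.
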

\begin{proof}
    See Appendix \ref{app:proofs:bias}.
\end{proof}

Notice that, if $\mu_A > 0.5$, $\beta_S < 0$, $\rho \in (0, 1)$, the bias $\mathbb{E}_{\mathcal{D}_{af}}[\hat{\tau}^{dr}]-\tau$ becomes more positive as the duration of the experiment increases and as $|\beta_S|$ increases. In the limit as $T \rightarrow \infty$, the absolute magnitude of the bias is maximal at $\mu_A = 0.75$ if $\mu_A > 0.5$. That is, the per-decision-randomized design is most biased in the limit for a better-than-random algorithm if it makes correct recommendations $75\%$ of the time.

\begin{example}[Developing Calibrated Reliance]\label{ex:cr-dgp}
    \begin{align*}
            p_{Q, \text{cr}}(Z_{it}, S_{it}) &:= q_0 + \beta_Z Z_{it} + \beta_S Z_{it}S_{it}(2A_{it} - 1) \\
            S_{it} &= \rho S_{it-1} + (1 - \rho) Z_{it-1}, \quad \text{for $t \geq 2$}, \ S_{i1} = 0 \\
            P(Q_{it}(\mathbf{Z}_{it}) = 1) & = \text{clip}(p_{Q, cr}, 0, 1)
        \end{align*}
    where $\beta_S > 0, \ \rho \in [0, 1)$, $\beta_Z \in \mathbb{R}$. Here, we conceptualize $S_{it}$ as latent ``calibration", where the calibration updates (specifically, increases) each time the decision-maker interacts with the algorithm. Under this DGP, the decision-maker will gradually concord with the algorithm more when the algorithm is correct and less when the algorithm is incorrect, thus becoming a more calibrated arbiter of the algorithm's recommendations.
\end{example}

\begin{proposition}[Bias under Calibrated Reliance DGP]\label{prop:cr-bias}
    Assume that the parameters $q_0, \beta_Z, \beta_S$ and the variables $\mathbf{Z}, \mathbf{S}$ take on values such that $p_{Q, {cr}} \in [0, 1]$; that is, the clipping function does not need to be applied. Assume also that $\rho \in [0, 1)$. Then, under the Calibrated Reliance data-generating process,
    \begin{align*}
        \mathbb{E}_{\mathcal{D}_{cr}}[\hat{\tau}^{dr}] - \tau_T 
            &=-\beta_S 
                \left (
                \left(1 - \rho^{T-1} \right)
                - p 
                    \left( 1 - \frac{\rho^T - 1}{T(\rho - 1)} \right) 
                \right) \\
        \mathbb{E}_{\mathcal{D}_{cr}}[\hat{\tau}^{dr}] - \tau 
            &= -(1-p)\beta_S
                \left(1 - \frac{\rho^{T} - 1}{T(\rho - 1)}\right)
    \end{align*}
    In the limit as $T \rightarrow \infty$,
    \begin{align*}
        \mathbb{E}_{\mathcal{D}_{cr}}[\hat{\tau}^{dr}] - \tau_T 
        =
        \mathbb{E}_{\mathcal{D}_{cr}}[\hat{\tau}^{dr}] - \tau 
        = -(1-p)\beta_S
    \end{align*}
\end{proposition}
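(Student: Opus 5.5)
Under the calibrated reliance process I will first express the expected outcome of a decision in terms of its concordance probability, then use linearity to reduce both estimands and $\mathbb{E}[\hat{\tau}^{dr}]$ to expectations of the latent calibration $S_{it}$. The argument follows the same route I would use for Propositions \ref{prop:ab-bias} and \ref{prop:af-bias}.

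\textbf{Step 1: outcome in terms of concordance.} Because the decision is binary, $Y_{it}=Q_{it}A_{it}+(1-Q_{it})(1-A_{it})$. I will assume, as in Example \ref{ex:cr-dgp}, that $Q_{it}$ is drawn with probability $p_{Q,cr}$ given $(Z_{it},S_{it},A_{it})$, with no further dependence. Then
\begin{align*}
\mathbb{E}[Y_{it}\mid Z,S,A]=1-A_{it}+(2A_{it}-1)\,p_{Q,cr}.
\end{align*}
Substituting $p_{Q,cr}$ and using $(2A-1)^2=1$ gives
\begin{align*}
\mathbb{E}[Y_{it}\mid \cdot]=1-A_{it}+(2A_{it}-1)(q_0+\beta_Z Z_{it})+\beta_S Z_{it}S_{it}.
\end{align*}
In the alert fatigue case this step does not collapse, because $S_{it}$ depends on past $A$'s. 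There, however, $A_{it}$ is independent of $S_{it}$, since $S_{it}$ uses only $A_{i,t-1}$ and earlier. Taking expectations over $A$ (i.i.d.\ with mean $\mu_A$) therefore gives
\begin{align*}
\mathbb{E}[Y_{it}\mid \mathbf{Z}]=c+(2\mu_A-1)\beta_Z Z_{it}+\beta_S Z_{it}\,\mathbb{E}[S_{it}\mid\mathbf{Z}],
\end{align*}
where $c=1-\mu_A+(2\mu_A-1)q_0$.

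\textbf{Step 2: the estimands.} Under $\mathbf{0}^{\le t}$, $S_{it}=0$ and $Z_{it}=0$. Under $\mathbf{1}^{\le t}$ the recursion $S_t=\rho S_{t-1}+(1-\rho)$ with $S_1=0$ gives $S_t=1-\rho^{t-1}$. Hence
\begin{align*}
\tau_t=(2\mu_A-1)\beta_Z+\beta_S(1-\rho^{t-1}),
\end{align*}
and averaging,
\begin{align*}
\tau=(2\mu_A-1)\beta_Z+\beta_S\left(1-\tfrac{1}{T}\sum_{t=1}^T\rho^{t-1}\right)=(2\mu_A-1)\beta_Z+\beta_S\left(1-\tfrac{\rho^T-1}{T(\rho-1)}\right).
\end{align*}

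\textbf{Step 3: expectation of $\hat{\tau}^{dr}$.} With independent Bernoulli$(p)$ assignments,
\begin{align*}
\mathbb{E}[Y_{it}Z_{it}]/p=\mathbb{E}[Y_{it}\mid Z_{it}=1],\qquad \mathbb{E}[Y_{it}(1-Z_{it})]/(1-p)=\mathbb{E}[Y_{it}\mid Z_{it}=0].
\end{align*}
Since $S_{it}$ depends only on $Z_{i,<t}$, it is independent of $Z_{it}$. Its mean follows from $\mathbb{E}S_t=\rho\,\mathbb{E}S_{t-1}+(1-\rho)p$, so $\mathbb{E}S_t=p(1-\rho^{t-1})$. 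Therefore the per-$t$ contrast equals
\begin{align*}
(2\mu_A-1)\beta_Z+\beta_S\,p(1-\rho^{t-1}).
\end{align*}
Averaging over $t$,
\begin{align*}
\mathbb{E}[\hat{\tau}^{dr}]=(2\mu_A-1)\beta_Z+p\beta_S\left(1-\tfrac{\rho^T-1}{T(\rho-1)}\right).
\end{align*}

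\textbf{Step 4: subtract and take limits.} Subtracting the expressions from Step 2 gives both stated bias formulas. As $T\to\infty$, $\rho^{T-1}\to0$ and $\frac{\rho^T-1}{T(\rho-1)}\to0$ because $\rho<1$, so both biases tend to $-(1-p)\beta_S$.

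The main obstacle is Step 1: justifying that the conditional distribution of $Q_{it}$ given the calibration term is independent of which recommendation was wrong beyond $A_{it}$. The cancellation $(2A-1)^2=1$ is what removes the $\mu_A$ factors here, unlike the other two processes. The no-clipping assumption keeps everything linear, and the independence of $S_{it}$ from $Z_{it}$ under per-decision randomization makes the factorization in Step 3 legitimate.
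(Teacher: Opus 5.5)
Your proposal is correct and takes essentially the same route as the paper. You write $Y_{it}=Q_{it}A_{it}+(1-Q_{it})(1-A_{it})$ and use $(2A_{it}-1)^2=1$ (the paper's $2A_t^2-A_t=A_t$) to remove the $\mu_A$ factor from the calibration term, solve the $S_t$ recursion under $\mathbf{1}^{\leq t}$ and in expectation under Bernoulli$(p)$ assignment, then subtract. The only difference is cosmetic: you condition on $A_{it}$ directly, where the paper computes $\mathbb{E}[Q_tA_t]$ and $\mathbb{E}[Q_t]$ separately. The aside about the Alert Fatigue process in Step 1 is out of place, since under Calibrated Reliance $S_{it}$ is a deterministic function of $\mathbf{Z}_i^{<t}$, but the formula you display after it is the correct one for this process.
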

\begin{proof}
    See Appendix \ref{app:proofs:bias}.
\end{proof}
Notice that, if $\beta_S > 0$, $\rho \in (0, 1)$, the bias $\mathbb{E}_{\mathcal{D}_{cr}}[\hat{\tau}^{dr}] - \tau$ becomes more negative as the duration of the experiment increases and as $\beta_S$ increases. Unlike the two other data-generating processes, for the calibrated reliance data-generating process, the biases for $\tau_T$ and for $\tau$ have no dependence on the performance of the algorithm itself, $\mu_A$, but still depend on the carryover coefficient, $\beta_S$.

\subsubsection{Bias-variance tradeoff of switching to the stepped wedge design under three illustrative data-generating processes}

We now illustrate how a bias-variance tradeoff manifests across the three example data-generating processes defined in the previous section (Examples \ref{ex:ab-dgp}, \ref{ex:af-dgp}, \ref{ex:cr-dgp}). We estimate the variance of the estimators through simulation and combine variance estimates with the closed-form bias expressions derived in the previous section to calculate the root mean squared error (RMSE) of the estimators. We then investigate how the RMSE compares between $\hat{\tau}^{dr}$ and $\hat{\tau}^{sdw}$, across four axes: $\beta_S$, the strength of carryover from prior exposure; $\sigma^2_{q0} = \text{Var}(q_{0, i})$, the between decision-maker variance in $q0$; $N$, the total number of decision-makers in the trial; and $T$, the total number of decisions made per decision-maker. The RMSE is calculated for the target estimand $\tau$. We set the parameters of the simulation grid such that the assumptions that $p_{Q, ab}, p_{Q, af}, p_{Q, cr} \in [0, 1]$ all hold, to ensure that the closed-form bias expressions are valid, and we set $p = P(Z_{it} = 1) = 1/2$ in the decision-randomized design.

In Figures \ref{fig:tradeoff-rmse-ab}, \ref{fig:tradeoff-rmse-af}, and \ref{fig:tradeoff-rmse-cr}, we see that, as $\sigma_{q0}^2$ increases, i.e. there is more within decision-maker correlation, the RMSE of $\hat{\tau}^{sdw}$ increases while the RMSE of $\hat{\tau}^{dr}$ remains constant. As $N$ increases, the RMSE of $\hat{\tau}^{sdw}$ decreases more rapidly than the RMSE of $\hat{\tau}^{dr}$. As $T$ increases, the RMSE of $\hat{\tau}^{dr}$ increases when the carryover coefficient $\beta_S$ is large in magnitude (bias dominates), but decreases when $\beta_S$ is small (variance dominates). Across all subplots, larger $|\beta_S|$ corresponds to a larger RMSE for $\hat{\tau}^{dr}$. As a result of these trends, there are some parameter combinations for which the RMSE of $\hat{\tau}^{dr}$ is smaller, and others where the RMSE of $\hat{\tau}^{sdw}$ is favorable. For instance, under the Automation Bias data-generating process, at small $N$, the stepped double wedge design has higher RMSE across all $\beta_S \in \{0.05, 0.1, 0.2\}$, but at large enough $N$ ($\gtrsim 3678, 833, 175$, respectively), it becomes favorable. 

\begin{figure}[ht!]
    \centering
    \includegraphics[width=\linewidth]{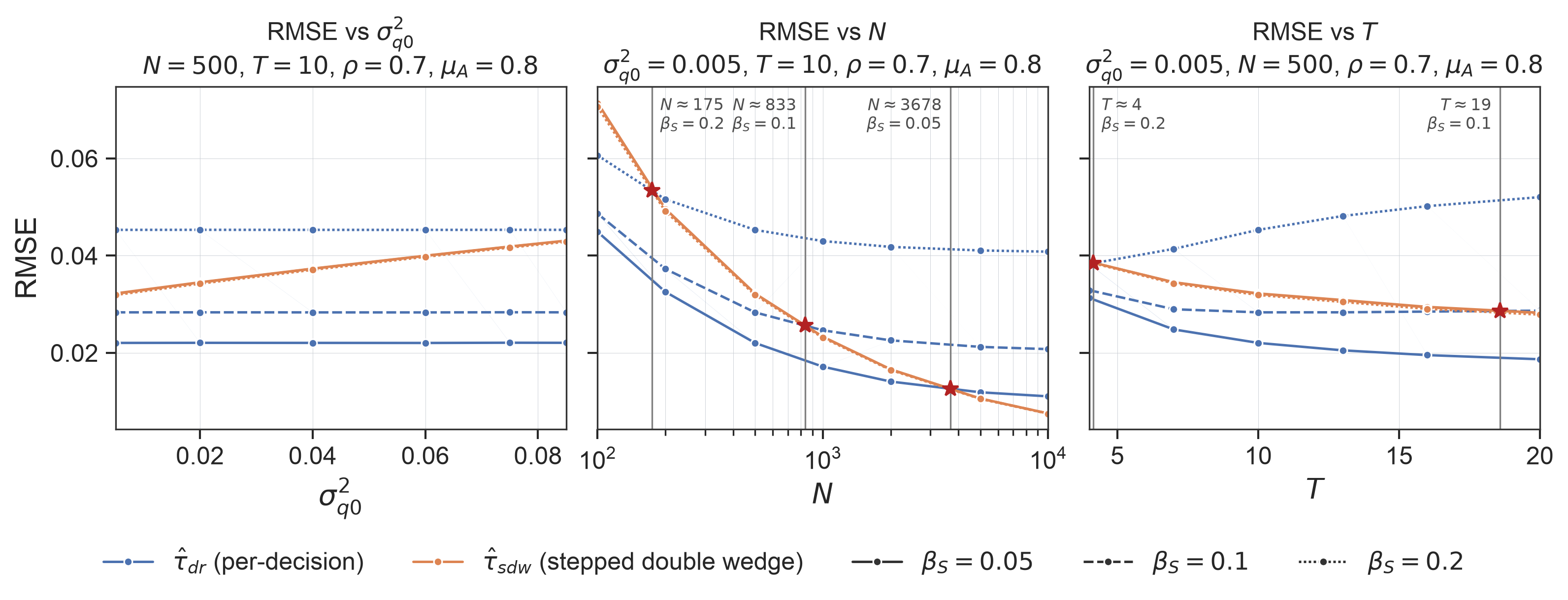}
    \caption{\textbf{RMSE Comparison, Automation Bias DGP.} (left, varying $\sigma_{q0}^2$) $\hat{\tau}^{sdw}$ has consistently higher RMSE than $\hat{\tau}^{dr}$ when $\beta_S = 0.05, 0.1$ and consistently lower RMSE when $\beta_S = 0.2$. (middle, varying $N$) $\hat{\tau}^{sdw}$ has lower RMSE for $\beta_S = 0.05$ when $N \gtrsim 3678$, for $\beta_S = 0.1$ when $N \gtrsim 833$, and for $\beta_S = 0.2$ when $N \gtrsim 175$.  (right, varying $T$) $\hat{\tau}^{sdw}$ has consistently higher RMSE than $\hat{\tau}^{dr}$ when $\beta_S = 0.05$, but crosses from having higher to having lower RMSE at $T \gtrsim 19$ when $\beta_S = 0.1$ and at $T \gtrsim 4$ when $\beta_S = 0.2$.}
    \label{fig:tradeoff-rmse-ab}
\end{figure}
\begin{figure}[ht!]
    \centering
    \includegraphics[width=\linewidth]{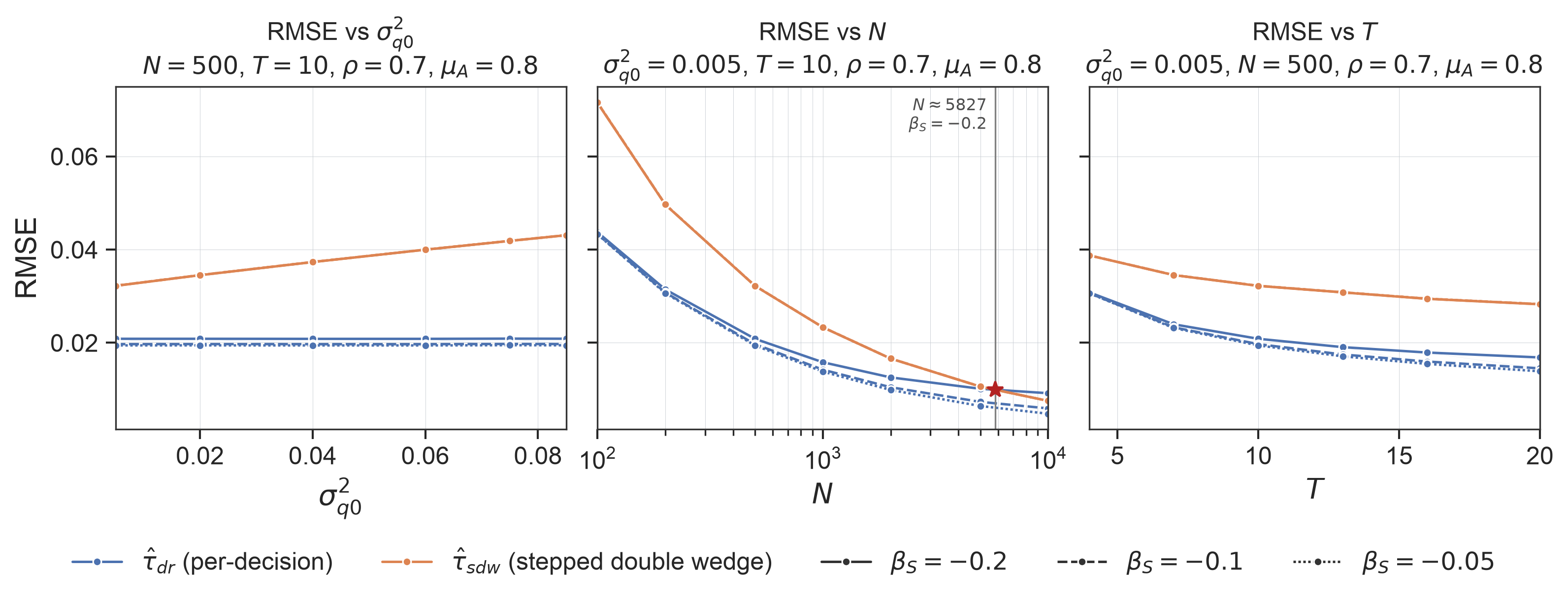}
    \caption{\textbf{RMSE Comparison, Alert Fatigue DGP.} (left, varying $\sigma_{q0}^2$; right, varying $T$) For the simulated parameter grid, the RMSE of $\hat{\tau}^{sdw}$ is always greater than that of $\hat{\tau}^{dr}$. (middle, varying $N$) $\hat{\tau}^{sdw}$ has consistently higher RMSE than $\hat{\tau}^{dr}$ when $\beta_S = -0.05, -0.1$, but crosses to lower RMSE at $N \gtrsim 5827$ when $\beta = -0.2$.}
    \label{fig:tradeoff-rmse-af}
\end{figure}
\begin{figure}[ht!]
    \centering
    \includegraphics[width=\linewidth]{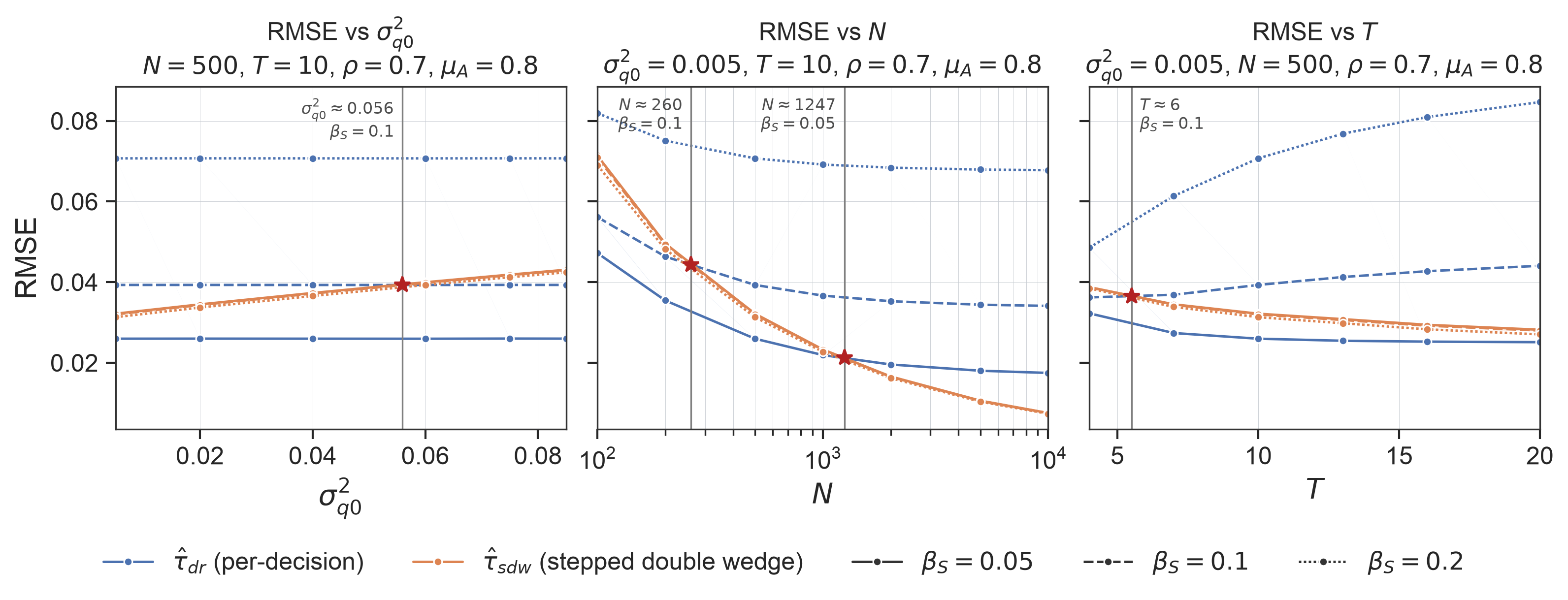}
    \caption{\textbf{RMSE Comparison, Calibrated Reliance DGP} (left, varying $\sigma_{q0}^2$) $\hat{\tau}^{sdw}$ has consistently higher RMSE than $\hat{\tau}^{dr}$ when $\beta_S = 0.05$, crosses from having lower to having higher RMSE at $\sigma_{q0}^2 \approx 0.056$ when $\beta_S = 0.1$, and has consistently lower RMSE when $\beta_S = 0.2$. (middle, varying $N$) $\hat{\tau}^{sdw}$ has lower RMSE for $\beta_S = 0.05$ when $N \gtrsim 1247$, for $\beta_S = 0.1$ when $N \gtrsim 260$, and for $\beta_S = 0.2$ consistenly. (right, varying $T$) $\hat{\tau}^{sdw}$ has consistently higher RMSE than $\hat{\tau}^{dr}$ when $\beta_S = 0.05$, crosses from higher to lower RMSE at $T \gtrsim 6$ when $\beta_S = 0.1$, and has consistently lower RMSE when $\beta_S = 0.2$.}
    \label{fig:tradeoff-rmse-cr}
\end{figure}

We emphasize that, despite the existence of this bias-variance tradeoff, it holds that there is no straightforward way to estimate $\delta_t$, $\Delta_t$, and $\lambda_t$ under the per-decision-randomized design without imposing parametric assumptions. Thus, it may still be desirable to use a stepped double wedge design even if it does not have the lower RMSE for $\tau$.

\section{Considerations for implementation in healthcare settings}\label{sec:health-ops}
Healthcare settings can pose certain challenges to conducting our proposed stepped double wedge design, both from an operational logistics standpoint, and from a statistical analysis standpoint. We highlight a few of these challenges in this section.

One question is how to operationalize the treatment assignment randomization. Often, in trials of clinical decision support tools, randomization occurs at the patient-level. In these instances, treatment assignment can be stored within the patient's electronic health record, either as an additional attribute for the patient, or as a function of an existing patient identifier, e.g. whether the patient's medical record number is even or odd. For randomization at the decision-maker level with staggered onsets and offsets, however, the logistics of randomization can be more challenging. One option is to, \textit{a priori}, store a particular treatment assignment trajectory for each provider, using a mapping between the provider's unique ID and the assignment trajectory, in addition to a provider-specific index to track position on the trajectory. Then, each time the provider encounters a patient and needs to make a decision, they receive algorithm assistance according to their assigned trajectory, and the tracking index advances. 

Another consideration is interference between decision-makers. For example, providers may influence one another's perception of algorithm assistance, both positively and negatively. Competing for a finite resource capacity can also lead to interference between decision-makers when the algorithm produces decisions or recommendations that encourage resource consumption, e.g. ordering lab tests or transferring to the ICU where bed availability is limited. When there is interference between decision-makers, the randomization may need to occur at a cluster- or site-level, such that decision-makers assigned to different treatment assignment trajectories are unlikely to interfere with one another, and such that each cluster itself functions as a ``universe" in which only the cluster-specific assignment trajectory exists.

Finally, a single patient may interact with the same provider more than once. Likewise, a single patient may interact with multiple different providers. In both of these cases, there is additional correlation across decisions that needs to be taken into account when analyzing the trial data.

\section{Conclusion}
We present recommendations for target effect estimands (including conditional effect estimands) and for randomized trial design when evaluating algorithm assistance interventions in the presence of repeated algorithm exposure. We motivate our estimands as necessary for obtaining a better understanding of the impact of algorithm assistance, including how the algorithm intervention could be modified to improve its effect, and we propose our design as a straightforward way to target those estimands. Particularly in medicine, where many trialists and hospitals are familiar with the idea of stepped wedge designs, our stepped double wedge design is an intuitive extension. Additionally, we provide some preliminary guidance on how to decide when running our proposed stepped double wedge design to facilitate estimating immediate, habituation, and skilling effects  is ``worth it", in terms of sample efficiency and bias-variance tradeoffs.

There are many directions for future work. We focus on algorithm assistance in the form of a discrete recommended decision or action, where the algorithm fixed over the course of the trial and is not personalized to the decision-maker that is using them. Large language models violate all of these assumptions -- they provide long-form free text output that depends on user prompts, and they can update their behavior and output as they accumulate context about the user's preferences and goals. A natural extension of our work is to consider how to define estimands and estimate effects when not just the human decision-maker's behavior adapts over time, but when the algorithm and the human co-adapt. In addition, we do not explore estimating long-term effects, e.g. $\lim_{t \rightarrow \infty} \tau_t, \Delta_t$. While these effects can be of interest, since they capture what would happen if we were to deploy algorithm assistance into perpetuity as the new status quo, they also generally require strong parametric assumptions or extremely large datasets, which we wanted to avoid within this paper in order to make the framework relevant to the medical setting, where sample size can be limited. Lastly, as mentioned in Section \ref{sec:health-ops} above, we assume no interference between decision-makers, which often does not hold. Future work includes considering the advantages and disadvantages of running the stepped double wedge design as a clustered design, where clusters of decision-makers that are unlikely to interfere with one another are assigned to each of the different possible assignment trajectories. 

\clearpage

\bibliography{refs}

\clearpage

\appendix
\section{Elicit-assisted review of trial designs used for evaluating algorithm-assisted decision-making}\label{app:sec:elicit}
Elicit is an AI tool for performing literature reviews. We used Elicit's systematic review workflow to assist with reviewing the literature on randomized trial for algorithm-assisted decision-making. This review was performed in the style of a rapid review, with a single AI reviewer and single human reviewer. We used Elicit to conduct the initial paper search, abstract screening, and full text screening. Data extraction from the papers that passed through all screening steps was first performed by Elicit, then manually reviewed, and if needed, corrected, by M.W. 

\subsection{Review methodology}
\paragraph{Paper search (Elicit)} We used the following instruction to prompt Elicit to conduct the initial search for paper sources.

\begin{tcolorbox}[colback=blue!5, colframe=blue!50!black, arc=4mm, boxrule=0.5pt]
\footnotesize{
``What randomized controlled trials or randomized field experiments have been performed to evaluate algorithm-assisted human decision-making, and do they show that algorithm-assisted decision-making improves decision quality (in terms of correctness, efficiency, and/or fairness)? An algorithm is any statistical model, predictive machine learning model, artificial intelligence system, or large language model. The human decision-makers can be domain experts, e.g. a doctor or judge, or laypeople. Randomized controlled trials can be real-world or lab / vignette-based experiments."
}
\end{tcolorbox}

We also applied a date filter for papers published between January, 2023 and June, 2026.

\paragraph{Abstract and full text screen (Elicit)} We defined the following screening criteria for the abstract screening step. Note that these criteria are copied verbatim as they were supplied to Elicit, and we have phrased each criterion as a question, which follows Elicit guidelines. Papers are excluded if the answer to the question is "no".  
\begin{tcolorbox}[colback=blue!5, colframe=blue!50!black, arc=4mm, boxrule=0.5pt]
\begin{itemize}
    \footnotesize{
    \item 
    Is this a full research article that conducts an original RCT(s) rather than a conference abstract, editorial, opinion piece, case report, systematic review, or meta-analysis?
    \item 
    Does the intervention involve algorithmic assistance in decision-making using a statistical model, predictive machine learning model, artificial intelligence system, or large language model, as opposed to naive "algorithms" like checklists, guidelines, or protocols?
    \item 
    Does the study evaluate human decision-making with algorithmic assistance where the human has the final say on the decision rather than evaluating autonomous algorithm decision-making that replaces the human?
    \item 
    Does the study conduct at least one randomized controlled trial?
    \item 
    Does at least one randomized controlled trial in the study compare algorithm-assisted decision-making task to unassisted decision-making, where the comparison can be within or between decision-makers?
    \item 
    Does the study measure at least one of the following outcomes: (1) in terms of accuracy/correctness of the decisions, (2) in terms of speed/efficiency of the decisions, (3) in terms of fairness of the decisions, (4) in terms of the human's level trust or reliance in the algorithm?
    \item 
    Does each decision-maker make at least two decisions, and are there some decision-makers that make repeated decisions with algorithm assistance and some decision-makers that make repeated decisions without algorithm assistance?
    \item
    Does the task that the algorithm assists the human with involve making a binary decision or producing a numerical judgment, as opposed to creating content like art, music, or writing?
    }
\end{itemize}
\end{tcolorbox}

The full text screen included the following additional criteria:
\begin{tcolorbox}[colback=blue!5, colframe=blue!50!black, arc=4mm, boxrule=0.5pt]
    \begin{itemize}
    \footnotesize{
    \item
    Is the article a full text publication, as opposed to just a conference abstract that has no accompanying full text?
    \item
    Is the full text, not just the abstract, in English? 
    }
    \end{itemize}
\end{tcolorbox}

\paragraph{Data extraction (Elicit + Manual)}
We defined the following items for Elicit to extract from the full text. The definitions are copied verbatim from the instructions supplied to Elicit.

\begin{tcolorbox}[colback=blue!5, colframe=blue!50!black, arc=4mm, boxrule=0.5pt]
    \footnotesize{
    \begin{itemize}
    \item 
    \textbf{Type of setting in which the study is conducted.} Options are ``real\_world", indicating an experiment that is performed in real deployment contexts, ``lab\_vignette", indicating an experiment performed in a contrived and stylized setting, or ``other".
    \item 
    \textbf{What domain, discipline, or application area that the algorithm-assisted decision-making takes place in. } Options include: ``medicine" for medicine and clinical care (diagnosis, triage, treatment, screening); ``criminal\_justice" for criminal justice and law (bail, sentencing, recidivism, legal review); ``finance" for finance, lending, and insurance (credit, underwriting, claims, fraud, investment); ``hiring" for hiring and human resources (resume screening, selection, performance review); ``education" for education (admissions, grading, dropout/at-risk prediction); ``content\_moderation" for moderating content (flagging fake or harmful content, misinformation); ``business\_management" for business management and operations (forecasting, pricing, supply chain, marketing); ``public\_sector" for public sector and social services (benefits eligibility, child protection, public administration); ``security" for security, defense, and intelligence (threat detection, surveillance analysis, cybersecurity triage); and ``general" for general or abstract tasks (synthetic or domain-neutral experimental tasks with no real-world application domain).
    \item 
    \textbf{The unit on which randomization occurs.} ``cluster\_of\_decision\_makers" for cluster-level randomization, which means that treatment assignment is randomized at the level of a cluster of decision-makers, such that all decision-makers within that cluster receive the same assignment. ``decision\_maker" for decision-maker-level randomization, which means that treatment assignments are uniquely defined on the decision-maker-level – given a decision-maker and their treatment assignment, the individual treatment assignments of each decision that they make is fully determined. ``decision" for decision-level randomization, which means that each individual decision (or decision-recipient, e.g. patient, defendant) is randomized to treatment or control. ``other" otherwise.
    \item
    \textbf{The type of randomized trial design used.} Options are ``parallel" for a between-subjects parallel group design (where participants are randomized to a constant intervention condition), ``crossover" for a within-subjects crossover design (where participants are randomized to a particular sequence of interventions in a particular order), ``one\_group\_pre\_post" for a one-group pretest-posttest design (where participants repeat the same task first without then with AI assistance, in that order, for every task), ``interleaved" for a decision/task-randomized design (where each participant completes multiple tasks and those tasks are individually randomized), ``interleaved\_in\_blocks" for a blocked decision/task-randomized design (where each participant completes multiple tasks and those tasks are randomized in blocks), ``stepped\_wedge" for a stepped wedge design, or "other"
    \end{itemize}
    }
\end{tcolorbox}
\begin{tcolorbox}[colback=blue!5, colframe=blue!50!black, arc=4mm, boxrule=0.5pt]
    \footnotesize{
    \begin{itemize}
    \item 
    \textbf{Repeated exposure handling.} Whether the study took explicit steps to measure exposure-time-varying effects arising from repeated exposure to the algorithm across multiple decisions within decision-makers. "yes" if explicit steps were taken, "no" if no steps were taken, and "maybe" if unsure. Adjusting for correlation due to repeated measures (e.g. using a mixed effects model) but not estimating or dealing with exposure-time-varying effects does not count as a "yes".
    \item
    \textbf{Number of decision-makers enrolled in the trial}. Provide number only if reported in the paper or able to be estimated from reported numbers (i.e., decisions\_per\_decision\_maker $*$ sample\_size\_decision\_makers). If estimated, provide the number with an ``(estimated)" suffix. If the number is not reported and is not estimable, write ``unknown".
    \item
    \textbf{Number of decisions/tasks completed per individual decision-maker across the duration of the trial.} Provide number only if reported in the paper or able to be estimated from reported numbers (i.e., sample\_size\_decisions / sample\_size\_decision\_makers). If estimated, provide the number with an ``(estimated)" suffix. If not estimated, provide the number only. If the number is not reported and is not estimable, write ``unknown".
    \item
    \textbf{Number of decisions overall}. Total number of decisions/tasks completed in the trial across all decision-makers. Provide number only if reported in the paper or able to be estimated from reported numbers (i.e., decisions\_per\_decision\_maker * sample\_size\_decision\_makers). If estimated, provide the number with an "(estimated)" suffix. If not estimated, provide the number only. If the number is not reported and is not estimable, write "unknown".
    \end{itemize}
    }
\end{tcolorbox}
After Elicit performed an initial extraction, M.W. manually reviewed the extracted content and overrode any extractions that were inconsistent with the paper or with the definition of the extraction item. Additionally, papers that did not actually meet the inclusion criteria were excluded. Finally, papers that included multiple randomized trials that met the inclusion criteria were expanded so that extraction was performed separately for each of the trials.

\subsection{Review findings}
Of 5,000 paper sources found by the initial Elicit search, 246 papers containing 274 unique randomized trials were included. See Figure \ref{fig:prisma} for a flow diagram with exclusion counts and reasons and Appendix \ref{sec:elicit-bib} for a full list of the included papers. Note that, prior to manual review, Elicit included 289 papers; after manual review, 43 of these papers were excluded. In addition, manual review led to many corrections in data extraction; in particular, there were 6 corrections for setting type, 17 for application area, 64 for unit of randomization, 95 for trial design, 36 for repeated exposure handling, 71 for number of decision-makers, 84 for number of decisions per decision-maker, and 137 for number of overall decisions. For manual corrections of the repeated exposure handling element, 14/36 were corrected from ``yes/maybe" to ``no", and 22/36 were corrected from ``no" to ``yes".

\begin{figure}
    \centering
    \includegraphics[trim={12cm 0 12cm 0}, width=\linewidth]{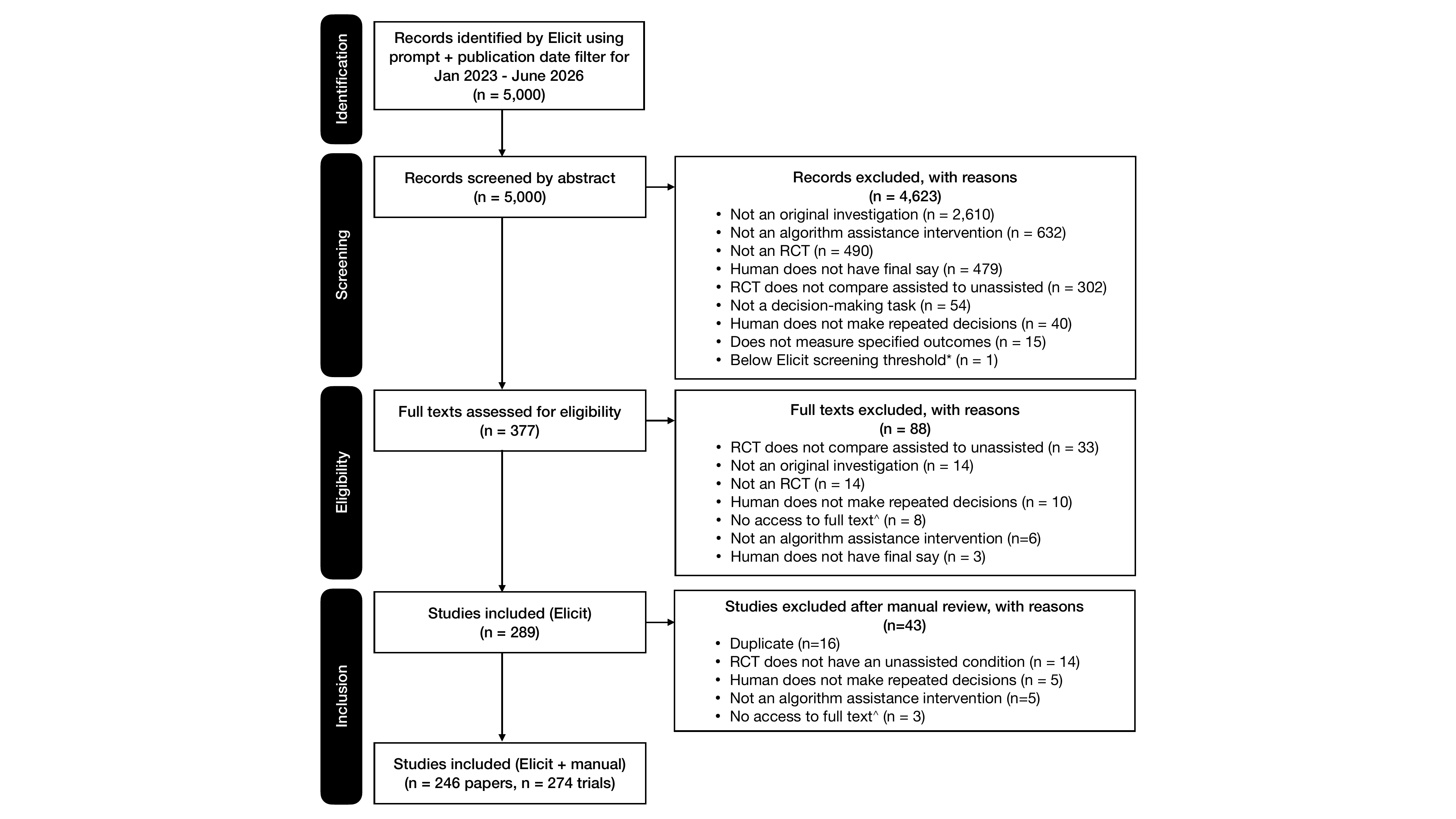}
    \caption{\textbf{PRISMA-style flow diagram for Elicit-assisted literature review}. \\
    {\scriptsize{*``Below Elicit screening threshold" refers to records that received an inclusion ``score" lower than a manually specified threshold. The score reflects how well the record meets the inclusion criteria and is on a scale from 0 (worst) to 5 (best). The threshold was set to 3.} \\
    \^``No access to full text" refers to records that were unable to be accessed due to paywalls and subscription requirements.}}
    \label{fig:prisma}
\end{figure}

Of the 274 trials, 206 (75.2\%) were conducted in a synthetic lab or vignette setting, and 68 (24.8\%) were conducted in a real-world or field setting. Most of the trials were either in a medicine application area (n=149, 54.4\%) or in a general application area (n=62, 22.6\%). Notably, across all trials, only 51 (18.6\%) assessed how the effect of algorithm assistance may evolve with repeated exposure or may have lingering carryover effects after assistance is removed, despite the fact that all trials included at least one decision-maker that was exposed to algorithm assistance more than once.

For the unit of randomization, 189 trials (69.0\%) were randomized at the decision-maker level, 52 (19.0\%) at the decision level, and 11 (4.0\%) at the cluster of decision-makers level. An additional 21 trials (7.7\%) used pre-post designs to estimate the effect of algorithm assistance and therefore did not include randomization, and one trial's randomization design was not specified. Of the trials randomized at the decision-maker level, 124 (65.6\%) used a parallel group design, and 65 (34.4\%) used a crossover design. Of the trials randomized at the decision level, 46 (88.5\%) used an ``interleaved" design, where each decision-maker made an interleaved mixture of assisted and unassisted decisions, and 6 (11.5\%) used an ``interleaved in blocks" design, where each decision-maker made an interleaved mixture of assisted and unassisted decisions, with assisted and unassisted decisions being organized into groups or blocks (e.g., if randomization occurs per day, then the day is a ``block" and all decisions made by all decision-makers on that day are either assisted or unassisted). Of the trials randomized at the cluster of decision-makers level, 8 (72.7\%) used a parallel group design, 2 (18.2\%) used a stepped wedge design, and 1 (9\%) used a crossover design. The median number of decision-makers in a trial was 89 (range: 1-132,199), and the median number of decisions made per decision-maker was 28 (range: 2-4680).

In Figure \ref{fig:elicit-decisions}, we plot the number of decision-makers against the number of decisions per decision-maker for each of the included trials. As the number of decision-makers enrolled in the trial increases, the number of decisions that each decision-maker makes decreases; interestingly, the relationship between the two resembles a power law ($R^2 = 0.22, p < 0.001$). This likely reflects a tradeoff between including more decisions versus more decision-makers to increase statistical power. Additionally, comparing decision-maker-level and decision-level randomization, we observe that trials that randomize at the decision-maker level tend to have more decision-makers (median=100.5, range=2-132,199) and fewer decisions per decision-maker (median=25, range=2-4680), whereas the opposite is true for trials that randomize on the decision-level (number of decision-makers: median=27.5, range=1-1600; number of decisions per decision-maker: median=51.2, range=2.7-2161.7).

\begin{figure}[ht!]
    \centering
    \includegraphics[width=0.7\linewidth]{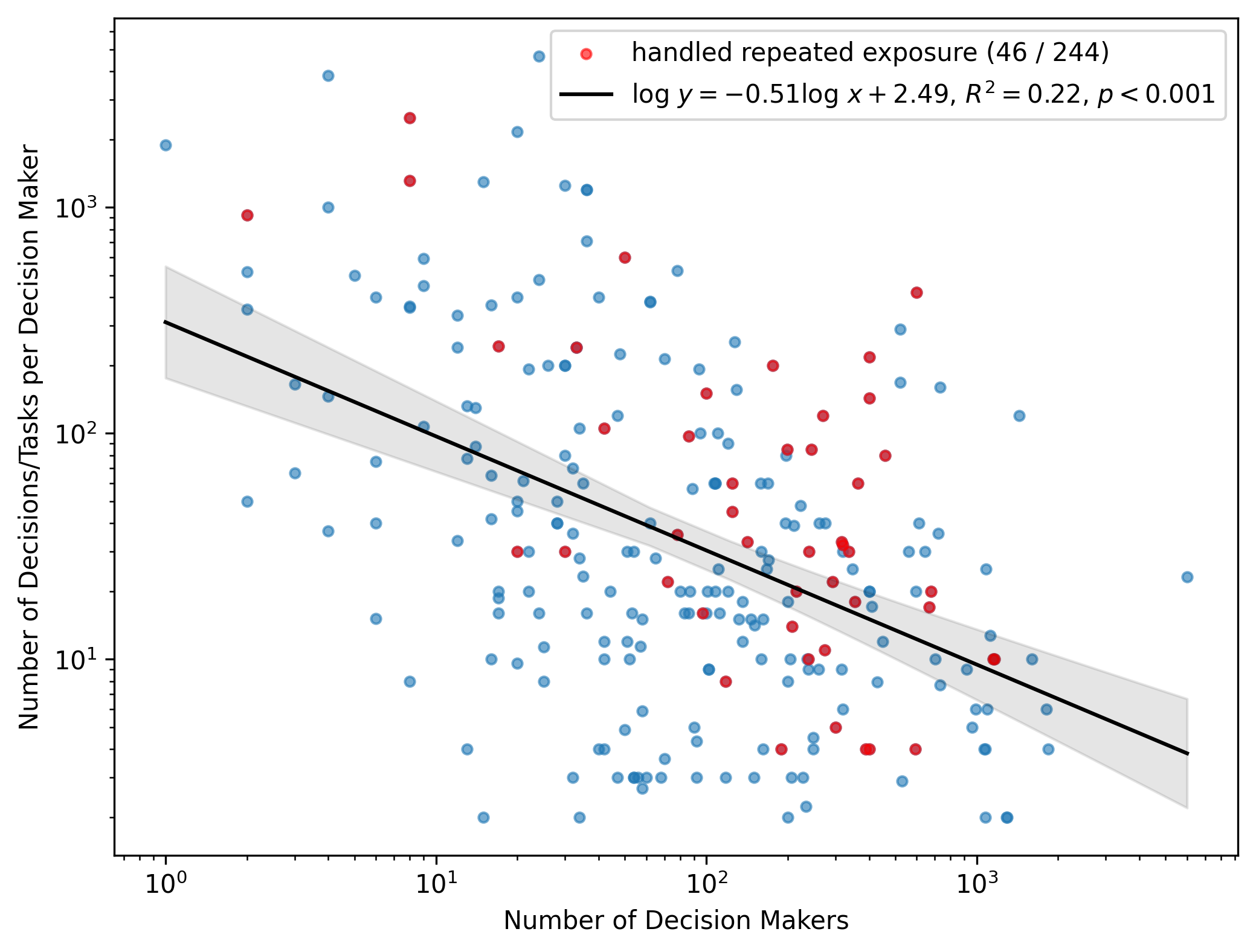}
    \caption{\textbf{Number of decision-makers and decisions-per-decision-maker across trials}. Of the 274 trials, 244 had known or estimable values for both the number of decision-makers and the average number of decisions made per decision-maker. Larger decision-maker sample sizes correlate with smaller numbers of decisions per decision-maker.}
    \label{fig:elicit-decisions}
\end{figure}

\section{Additional effect estimands: conditional effects}\label{app:sec:conditional}

In addition to the estimands presented in the main text ($\tau_t$, $\delta_t$, $\Delta_t$, $\lambda_t$), which measure the expected average difference in outcomes across all decisions, another estimand of interest is the average difference in outcomes conditioned on decisions where the decision-maker takes the algorithm's recommendation. Specifically, we are interested in instances where the decision-maker would have made the non-recommended decision if they had never seen the tool recommendation, but would have yielded to the algorithm's recommendation had they seen it. This conditional effect estimand is related to the complier average causal effect in the instrumental variables literature and is of interest because it reflects whether decision-makers selectively yield to the tool in a way that leads to better outcomes for the decision-recipients.

\subsection{Concordance classes}

We define the potential \textit{concordance} between the decision-maker's decision and the algorithm's recommendation as $Q_{it}(\mathbf{Z}_i^{\leq t}) = \mathbbm{1}\{D_{it}(\rmZ_i^{\leq t}) = R_{it}\}$. Drawing upon the ideas of compliance groups, we define four concordance classes for potential decisions: \textit{always aligned}, \textit{never aligned}, \textit{yielding}, and \textit{contrarian}. An \textit{always aligned} decision aligns with the algorithm's recommendation both if the recommendation were shown and if it were withheld. A \textit{never aligned} decision would not align with the algorithm's recommendation, both if the recommendation were shown and if it were withheld. A \textit{yielding} decision would not align if the recommendation were withheld but would align if the recommendation were shown. Lastly, a \textit{contrarian} decision aligns with the algorithm recommendation if it were withheld, but would not align if it were shown. We use this terminology, as opposed to the compliance class terminology (``always-takers", ``never-takers", ``compliers", and ``defiers") because the core concern is not whether the decision-maker complies with receiving the algorithm assistance intervention (it is assumed that they do) but whether the potential decision they make actually concords with the algorithm's recommendation.

In Table \ref{tab:concordance-cls}, we show the pair of potential concordance values that correspond to each concordance class. Because we allow for within decision-maker interference across time, a decision's concordance class can depend on the decision-maker's prior exposure to the algorithm. We therefore define a concordance class in terms of what decision would be made at time $t$ under $z_t = 1$ (treatment) or $z_t = 0$ (control), given prior exposure vectors, $\mathbf{z}^{< t}$ and $\mathbf{z'}^{< t}$, respectively, and given a fixed algorithm recommendation $r_t$. 

\begin{table}[!ht]
    \centering
    \begin{tabular}{lll}
        \toprule
        Concordance Class & $Q_t(\mathbf{z}^{<t}, 0)$ & $Q_t(\mathbf{z'}^{<t}, 1)$ \\
        \midrule
        Always aligned (AA) & 1 & 1 \\
        Never aligned (NA) & 0 & 0 \\
        Yielding (YI) & 0 & 1 \\
        Contrarian (CO) & 1 & 0 \\
        \bottomrule
    \end{tabular}
    \caption{Concordance classes under prior exposure vectors, $\mathbf{z}^{< t}$ and $\mathbf{z'}^{< t}$, and fixed tool recommendation $r_t$}
    \label{tab:concordance-cls}
\end{table}

To narrow down the space of possible prior exposure vectors, we draw special attention to the concordance classes under two particular pairs of $\mathbf{z}^{<t}$ and $\mathbf{z'}^{<t}$. First, we define the \textit{habituated concordance class} as follows, which uses the pair  $\mathbf{z}^{<t} = \mathbf{0}^{< t}$ and $ \mathbf{z'}^{<t} = \mathbf{1}^{< t}$.

\begin{definition}[Habituated concordance class]
    The \textit{habituated concordance class} is the concordance class identified by the pair of concordance values, $Q(\mathbf{0}^{\leq t})$ and $Q(\mathbf{1}^{\leq t})$. We denote the habituated concordance class at time $t$ as $H_t \in \{\text{AA}, \text{NA}, \text{YI}, \text{CO}\}$.
\end{definition}

In addition to the habituated concordance class, we also define an \textit{immediate concordance class}, which sets the pair of prior exposure vectors to $\mathbf{z}^{<t} = \mathbf{z'}^{<t} = \mathbf{0}^{<t}$.

\begin{definition}[Immediate concordance class]
    The \textit{immediate concordance class} is the concordance class identified by the pair of concordance values, $Q(\mathbf{0}^{\leq t})$ and $Q(\mathbf{0}^{<t}, 1)$. We denote the immediate concordance class at time $t$ as $I_t \in \{\text{AA}, \text{NA}, \text{YI}, \text{CO}\}$.
\end{definition}

\subsection{Conditional effects with habituated and immediate yielding}

We now introduce two conditional effect estimands: the global treatment effect conditioned on the habituated concordance class being yielding and the immediate treatment effect conditioned on the immediate concordance class being yielding. The conditional global treatment effect isolates the differences in outcomes within the subgroup of decisions where the decision-maker changes their decision to be concordant with the algorithm upon sustained exposure to the algorithm. Similarly, the conditional immediate treatment effect isolates the difference in outcomes within the subgroup of decisions where the decision-maker changes their decision to be concordant with the algorithm upon first exposure to the algorithm's assistance.

\begin{definition}[Global treatment effect with habituated yielding]
\begin{align}
    \tau_t^{\text{YI}} &:= 
        \mathbb{E}
    \left[ 
        Y_{t}(D_{t}(\mathbf{1}^{\leq t}))
        - Y_{t}(D_{t}(\mathbf{0}^{\leq t}))
    \mid Q_t(\mathbf{1}^{\leq t})=1, Q_t(\mathbf{0}^{\leq t}) = 0 \right] \\
    &=         \mathbb{E}
    \left[ 
        Y_{t}(D_{t}(\mathbf{1}^{\leq t}))
        - Y_{t}(D_{t}(\mathbf{0}^{\leq t}))
    \mid H_t = \text{YI} \right]
\end{align}
\end{definition}

\begin{definition}[Immediate treatment effect with immediate yielding]
\begin{align}
    \delta_t^{\text{YI}} &:= 
        \mathbb{E}
    \left[ 
        Y_{t}(D_{t}(\mathbf{0}^{< t}, 1))
        - Y_{t}(D_{t}(\mathbf{0}^{\leq t}))
    \mid Q_t(\mathbf{0}^{< t}, 1)=1, Q_t(\mathbf{0}^{\leq t}) = 0 \right] \\
    &= \mathbb{E}
    \left[ 
        Y_{t}(D_{t}(\mathbf{0}^{< t}, 1))
        - Y_{t}(D_{t}(\mathbf{0}^{\leq t}))
    \mid I_t = \text{YI} \right] 
\end{align}
\end{definition}

To make the above estimands identifiable, we impose monotonicity and positivity assumptions analogous to those used to identify complier average causal effects in the instrumental variables literature.

\begin{assumption}[Positivity]\label{assump:positivity}
    $P(H_t = YI) > 0$, $P(I_t = YI) > 0$ for all $t = 1, \dots, T$.
\end{assumption}

\begin{assumption}[Monotonicity, i.e. no contrarian decisions]\label{assump:monotonicity}
    $Q_t(\mathbf{z}^{<t}, 1) - Q_t(\mathbf{z'}^{<t}, 0) \geq 0$ for all $t=1, \dots, T$, $i = 1, \dots, N$ and all $\mathbf{z}^{<t}$, $\mathbf{z'}^{<t}$.
\end{assumption}

Under Assumptions \ref{assump:positivity} and \ref{assump:monotonicity}, as well as the previous Assumptions \ref{assump-partial-interf} -\ref{assump:ignorability}, we have the following identification result.

\begin{proposition}[Identification of conditional effects]\label{prop:cond-ident}
\begin{align*}
    \tau_t^{\text{YI}} &=
        \frac{\tau_t}{P(H_t = \text{YI})} = 
        \frac
        {
        \mathbb{E}[Y_t \mid \mathbf{Z}^{\leq t} = \mathbf{1}^{\leq t}] - E[Y_t \mid \mathbf{Z}^{\leq t} = \mathbf{0}^{\leq t}]}
        {\mathbb{E}[Q_t \mid \mathbf{Z}^{\leq t} = \mathbf{1}^{\leq t}] - E[Q_t \mid \mathbf{Z}^{\leq t} = \mathbf{0}^{\leq t}]}
    \\
    \delta_t^{\text{YI}} &=
        \frac{\delta_t}{P(I_t = \text{YI})} = 
        \frac{
            \mathbb{E}[Y_t \mid \mathbf{Z}^{\leq t} = (\mathbf{0}^{< t}, 1)]
                - \mathbb{E}[Y_t \mid \mathbf{Z}^{\leq t} = \mathbf{0}^{\leq t}]
        }
        {   \mathbb{E}[Q_t \mid \mathbf{Z}^{\leq t} = (\mathbf{0}^{< t}, 1)]
                - \mathbb{E}[Q_t \mid \mathbf{Z}^{\leq t} = \mathbf{0}^{\leq t}]
        }
\end{align*}
\end{proposition}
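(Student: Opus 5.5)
This is the standard Wald/LATE argument applied separately to the habituated and immediate concordance classes. The key structural fact is that, for a binary decision with a fixed recommendation $R_t$, the outcome under a given assignment is determined by whether the decision concords with $R_t$. Concretely, $D_t = R_t$ if $Q_t = 1$ and $D_t = 1 - R_t$ otherwise, so $Y_t(D_t(\mathbf{z}^{\leq t}))$ is a function of $(Q_t(\mathbf{z}^{\leq t}), R_t)$ and the recipient's potential outcomes.

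\textbf{Step 1: only yielding decisions contribute.} For the global effect, compare the pair $(\mathbf{1}^{\leq t}, \mathbf{0}^{\leq t})$, which defines the habituated class $H_t$.
\begin{itemize}
\item If $H_t = \text{AA}$, then $Q_t(\mathbf{1}^{\leq t}) = Q_t(\mathbf{0}^{\leq t}) = 1$. Both decisions equal $R_t$, so the outcomes coincide and the difference is zero.
\item If $H_t = \text{NA}$, both decisions equal $1 - R_t$, and the difference is again zero.
\item Monotonicity (Assumption~\ref{assump:monotonicity}), applied with $\mathbf{z}^{<t} = \mathbf{1}^{<t}$ and $\mathbf{z'}^{<t} = \mathbf{0}^{<t}$, rules out $H_t = \text{CO}$.
\end{itemize}
The law of total expectation then gives $\tau_t = P(H_t = \text{YI})\,\tau_t^{\text{YI}}$. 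Dividing is justified by positivity (Assumption~\ref{assump:positivity}), which gives the first equality. The same argument with the pair $(\mathbf{0}^{<t},1)$ versus $\mathbf{0}^{\leq t}$, which defines the immediate class $I_t$, gives $\delta_t = P(I_t = \text{YI})\,\delta_t^{\text{YI}}$.

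\textbf{Step 2: identify the numerators.} Under ignorability (Assumption~\ref{assump:ignorability}), together with partial interference, mediation and non-anticipation (which let us write potential outcomes as functions of $\mathbf{Z}_i^{\leq t}$ alone), we have
\[
\mathbb{E}[Y_t(D_t(\mathbf{z}^{\leq t}))] = \mathbb{E}[Y_t \mid \mathbf{Z}^{\leq t} = \mathbf{z}^{\leq t}]
\]
for each $\mathbf{z}^{\leq t}$ that has positive probability under the design. For the stepped double wedge, the assignments $\mathbf{1}^{\leq t}$, $\mathbf{0}^{\leq t}$ and $(\mathbf{0}^{<t},1)$ all occur, so the conditioning events are well defined.

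\textbf{Step 3: identify the denominators.} Monotonicity excludes CO, so
\[
\mathbb{E}[Q_t(\mathbf{1}^{\leq t})] - \mathbb{E}[Q_t(\mathbf{0}^{\leq t})] = P(H_t \in \{\text{AA},\text{YI}\}) - P(H_t = \text{AA}) = P(H_t = \text{YI}).
\]
Ignorability applies to $Q_t$ as well, since $Q_t$ is a function of $D_t$ and $R_t$, and $R_t$ is unaffected by assignment under silent deployment. Hence $P(H_t = \text{YI})$ equals the difference in observed mean concordance between the two arms, and the analogous identity holds for $I_t$. Combining Steps 1--3 gives both displayed ratios.

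\textbf{Main obstacle.} Step 1 needs a population-level reading of the estimands, with outcomes, decisions and classes averaged over decisions and decision-makers at fixed $t$. It also needs the observation that, given $R_t$, $Y_t$ depends on $D_t$ only through $Q_t$; this is exact because decisions are binary. The remaining subtlety is checking that ignorability covers the joint distribution of $(Q_t, Y_t)$ and not merely $Y_t$; I would state this explicitly as a consequence of $R_t$ being fixed and pre-treatment.
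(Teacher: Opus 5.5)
Your proposal is correct and takes essentially the same route as the paper's proof. The paper also decomposes over the concordance classes by the law of total expectation, uses full mediation to make the always-aligned and never-aligned classes contribute zero, excludes contrarians by monotonicity, and then uses ignorability plus consistency to identify both $\tau_t$ (or $\delta_t$) and $P(H_t = \text{YI}) = \mathbb{E}[Q_t(\mathbf{1}^{\leq t})] - \mathbb{E}[Q_t(\mathbf{0}^{\leq t})]$ from observed arm means. You state two points explicitly that the paper only uses implicitly: the conditioning assignments must have positive probability under the design, and ignorability must hold for $R_t$ jointly with $D_t$ so that it extends to $Q_t$.
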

\begin{proof}
    See Appendix \ref{app:proofs:conditional:identifiability}.
\end{proof}

\begin{example}\label{app:ex:ab-af-yielding-effects}
    Assume that $Q_{it}(\mathbf{Z})$ are sampled such that the positivity (\ref{assump:positivity}) and monotonicity property (\ref{assump:monotonicity}) are satisfied. Then, under both the Automation Bias and Alert Fatigue data-generating processes,
    \begin{align*}
        \tau_t^{YI} &= \delta_t^{YI} = 2\mu_A - 1
    \end{align*} 
    which reflects how much better the algorithm is compared to random guessing.
    On the other hand, under the Calibrated Reliance data-generating process, 
    \begin{align*}
        \tau_t^{YI} &= 
        \frac{(2\mu_A - 1) \beta_Z + \beta_S (1 - \rho^{t-1})}
        {\beta_Z + \beta_S(2\mu_A - 1) (1 - \rho^{t-1})} \\
        \delta_t^{YI} &= 2\mu_A - 1
    \end{align*}
    Notice that, for Calibrated Reliance, $\tau_t^{YI}$ varies with $\beta_Z, \beta_S, \rho$, and $t$. This is expected, since the data-generating process is constructed to produce selective concordance with the algorithm that develops over time and that depends on the calibration dynamics.
\end{example}
\begin{proof}
    See Appendix \ref{app:proofs:conditional}.
\end{proof}

\subsection{Using conditional effects to ``explain" the habituation effect}
Using the decomposition $\tau_t = \delta_t + \Delta_t$ and the conditional effect definitions, we can write the habituation effect in terms of the conditional effects and yielding probabilities as,
\begin{align}\label{eq:htt-decomp}
    \Delta_t = \tau_t^{\text{YI}} \cdot P(H_t = \text{YI}) - \delta_t^{\text{YI}} \cdot P(I_t = \text{YI})
\end{align}
From the above, we see that $\Delta_t$ can be non-zero across several different scenarios. First, most obviously it can be non-zero when $\tau_t^{YI} \neq \delta_t^{YI}$ and $P(H_t = YI) \neq P(I_t = YI)$. However, it can also be non-zero even when the conditional effects $\tau_t^{\text{YI}}$ and $\delta_t^{\text{YI}}$ are equal, provided the yielding probabilities $P(H_t = \text{YI})$ and $P(I_t = \text{YI})$ are different. Similarly,  $\Delta_t$ can be non-zero if $P(H_t = \text{YI}) = P(I_t = \text{YI})$, provided $\tau_t^{\text{YI}} \neq \delta_t^{\text{YI}}$. Informally speaking, if only $P(H_t = \text{YI})$ and $P(I_t = \text{YI})$ are different but $\tau_t^{\text{YI}}$ and $\delta_t^{\text{YI}}$ are the same, then habituation simply changes how often decision-makers yield to the tool. On the other hand, if only $\tau_t^{\text{YI}}$ and $\delta_t^{\text{YI}}$ are different while $P(H_t = \text{YI})$ and $P(I_t = \text{YI})$ are the same, then habituation does not change how often decision-makers yield to the tool; instead, it shifts the ``composition" of decisions where yielding occurs such that the effect of the intervention for decisions where there is immediate yielding is different from the effect when there is habituated yielding. This could happen, for example, if, with repeated exposure, decision-makers are more likely to yield to the tool when the tool is more likely to be correct and less likely to yield when the tool is wrong. 

\begin{example}
    As shown in Example \ref{app:ex:ab-af-yielding-effects} above, $\tau_t^{YI} = \delta_t^{YI}$ for both the Automation Bias and Alert Fatigue data-generating processes. For the Automation Bias data-generating process, 
    \begin{align*}
        P(H_t = YI) &= \beta_Z + \beta_S(1 - \rho^{t-1}) \\
        P(I_t = YI) &= \beta_Z \\
        \Rightarrow P(H_t = YI) &\neq P(I_t = YI) \text{ for all } t > 1, \rho \in [0, 1)
    \end{align*}
    And for the Alert Fatigue data-generating process,
    \begin{align*}
        P(H_t = YI) &= \beta_Z + \beta_S(1 - \mu_A)(1 - \rho^{t-1}) \\
        P(I_t = YI) &= \beta_Z \\
        \Rightarrow P(H_t = YI) &\neq P(I_t = YI) \text{ for all } t > 1, \rho \in [0, 1), \mu_A \neq 1
    \end{align*}
    Thus, both of these data-generating processes are examples of the scenario in which there is a non-zero habituation effect driven by different habituated and immediate yielding probabilities, as opposed to different conditional yielding effects.

    On the other hand, for the Calibrated Reliance data-generating process $\tau_t^{YI} \neq \delta_t^{YI}$. Additionally,
    \begin{align*}
        P(H_t = YI) &= \beta_Z + \beta_S(2\mu_A - 1)(1 - \rho^{t-1}) \\
        P(I_t = YI) &= \beta_Z \\
        \Rightarrow P(H_t = YI) &\neq P(I_t = YI) \text{ for all } t > 1, \rho \in [0, 1), \mu_A \neq \frac{1}{2}
    \end{align*}
    Thus, this data-generating process is an example of the scenario in which the habituation effect is driven by both a difference in habituated and immediate yielding probabilities, as well as different conditional yielding effects.

\end{example}

\section{Profiling ``yielding" decisions}\label{app:sec:profiling}
To better understand yielding decisions, e.g. \textit{when} and \textit{why} yielding may be taking place, we can ``profile" yielding decisions by investigating certain defining characteristics of these types of decisions. Let $X_t \in \mathbb{R}$ be a covariate of interest, which could be a covariate that describes a decision-maker, a decision-recipient, or the decision support tool itself. In addition to the assumption of ignorability for the decision $D_{it}(\mathbf{Z}_i^{\leq t})$,  also assume ignorability for $X_{it}$.

\begin{assumption}[Ignorability for $X$]\label{assump:ignorability-x}
    \begin{align*}
        (X_{it}, D_{it}(\mathbf{z}^{\leq t}), Y(D_{it}(\mathbf{z}^{\leq t}))) \indep \mathbf{Z}_i^{\leq t} \text{ for all } \mathbf{z}^{\leq t}
    \end{align*}
\end{assumption}

Then, together with the monotonicity assumption \ref{assump:monotonicity}, the expected value of $X$ given yielding (either habituated or immediate) is identifiable from observed data. 

\begin{proposition}[Covariate profiling of yielding decisions]\label{prop:profile}
Under Assumptions \ref{assump-partial-interf}- \ref{assump-mediation}, \ref{assump:positivity}, \ref{assump:monotonicity},  and \ref{assump:ignorability-x}, 
    \begin{align*}
        \mathbb{E}[X_t \mid H_t = \text{YI}] &= 
            \frac{1}
            {
                \mathbb{E}[Q_t \mid \mathbf{Z}^{\leq t} = \mathbf{1}^{\leq t}] - 
                \mathbb{E}[Q_t \mid \mathbf{Z}^{\leq t} = \mathbf{0}^{\leq t}]} 
             \\
            &\quad \cdot( \mathbb{E}[X_t] \\
                & \quad\quad - \mathbb{E}[X_t \mid Q_t = 0, \mathbf{Z}^{\leq t} = \mathbf{1}^{\leq t}] \cdot P(Q_t = 0 \mid \mathbf{Z}^{\leq t} = \mathbf{1}^{\leq t}) \\
                & \quad\quad - \mathbb{E}[X_t \mid Q_t = 1, \mathbf{Z}^{\leq t} = \mathbf{0}^{\leq t}] \cdot P(Q_t = 1 \mid \mathbf{Z}^{\leq t} = \mathbf{0}^{\leq t})
            ) \\
    \mathbb{E}[X_t \mid I_t = \text{YI}] &= 
        \frac{1}
        {
            \mathbb{E}[Q_t \mid \mathbf{Z}^{\leq t} = (\mathbf{0}^{< t}, 1)] - 
            \mathbb{E}[Q_t \mid \mathbf{Z}^{\leq t} = \mathbf{0}^{\leq t}]} 
         \\
        &\quad \cdot( \mathbb{E}[X_t] \\
            & \quad\quad - \mathbb{E}[X_t \mid Q_t = 0, \mathbf{Z}^{\leq t} = (\mathbf{0}^{< t}, 1)] \cdot P(Q_t = 0 \mid \mathbf{Z}^{\leq t} = (\mathbf{0}^{< t}, 1)) \\
            & \quad\quad - \mathbb{E}[X_t \mid Q_t = 1, \mathbf{Z}^{\leq t} = \mathbf{0}^{\leq t}] \cdot P(Q_t = 1 \mid \mathbf{Z}^{\leq t} = \mathbf{0}^{\leq t})
        )
\end{align*}
\end{proposition}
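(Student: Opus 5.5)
This is the standard covariate-profiling argument for compliers, with the concordance classes playing the role of compliance classes. I give it for the habituated class $H_t$; the immediate class $I_t$ is identical with $\mathbf{1}^{\leq t}$ replaced by $(\mathbf{0}^{<t},1)$.

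\textbf{Step 1: decompose $\mathbb{E}[X_t]$ by class.} By Assumption \ref{assump:monotonicity}, taking $\mathbf{z}^{<t}=\mathbf{1}^{<t}$ and $\mathbf{z'}^{<t}=\mathbf{0}^{<t}$, the class CO has probability zero. Hence $H_t\in\{\text{AA},\text{NA},\text{YI}\}$ almost surely, and the law of total expectation gives
\begin{align*}
\mathbb{E}[X_t]=\mathbb{E}[X_t\mid H_t=\text{AA}]P(H_t=\text{AA})+\mathbb{E}[X_t\mid H_t=\text{NA}]P(H_t=\text{NA})+\mathbb{E}[X_t\mid H_t=\text{YI}]P(H_t=\text{YI}).
\end{align*}

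\textbf{Step 2: identify the AA and NA terms from observed data.} Among decisions assigned $\mathbf{1}^{\leq t}$, the observed concordance equals the potential concordance $Q_t(\mathbf{1}^{\leq t})$. Under monotonicity, $Q_t(\mathbf{1}^{\leq t})=0$ holds exactly when $H_t=\text{NA}$. Assumption \ref{assump:ignorability-x} says the joint law of $(X_t, Q_t(\mathbf{1}^{\leq t}))$ does not depend on $\mathbf{Z}^{\leq t}$, so conditioning on the assignment changes nothing. Therefore
\begin{align*}
\mathbb{E}[X_t\mid Q_t=0,\mathbf{Z}^{\leq t}=\mathbf{1}^{\leq t}]\,P(Q_t=0\mid \mathbf{Z}^{\leq t}=\mathbf{1}^{\leq t})=\mathbb{E}[X_t\mathbbm{1}\{H_t=\text{NA}\}].
\end{align*}
Symmetrically, among decisions assigned $\mathbf{0}^{\leq t}$, $Q_t=1$ holds exactly when $H_t=\text{AA}$, since CO is excluded. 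This gives
\begin{align*}
\mathbb{E}[X_t\mid Q_t=1,\mathbf{Z}^{\leq t}=\mathbf{0}^{\leq t}]\,P(Q_t=1\mid \mathbf{Z}^{\leq t}=\mathbf{0}^{\leq t})=\mathbb{E}[X_t\mathbbm{1}\{H_t=\text{AA}\}].
\end{align*}
Subtracting both from $\mathbb{E}[X_t]$ leaves $\mathbb{E}[X_t\mathbbm{1}\{H_t=\text{YI}\}]$.

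\textbf{Step 3: identify the denominator.} It remains to show $P(H_t=\text{YI})=\mathbb{E}[Q_t\mid\mathbf{Z}^{\leq t}=\mathbf{1}^{\leq t}]-\mathbb{E}[Q_t\mid\mathbf{Z}^{\leq t}=\mathbf{0}^{\leq t}]$. By ignorability, the first term is $P(Q_t(\mathbf{1}^{\leq t})=1)=P(\text{AA})+P(\text{YI})$. The second term is $P(Q_t(\mathbf{0}^{\leq t})=1)=P(\text{AA})$, using no CO. Their difference is $P(\text{YI})$, which is positive by Assumption \ref{assump:positivity}. Dividing the result of Step 2 by it yields the formula in the statement. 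This is the same denominator step used for Proposition \ref{prop:cond-ident}.

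The main care point is Step 2: each observed stratum must map to exactly one concordance class. This needs the monotonicity assumption in the cross-history form, comparing $\mathbf{1}^{<t}$ against $\mathbf{0}^{<t}$, which Assumption \ref{assump:monotonicity} grants since it holds for all pairs $\mathbf{z}^{<t},\mathbf{z'}^{<t}$. It also needs ignorability to cover the joint law of $X_t$ and the relevant potential concordances, not just each marginal. Since $Q_t(\cdot)$ is a function of $D_t(\cdot)$ and $R_t$, I would read Assumption \ref{assump:ignorability-x} as including $R_t$, which is unaffected by treatment.
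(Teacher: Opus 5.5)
Your proposal is correct and follows the paper's proof essentially step for step. The paper also decomposes $\mathbb{E}[X_t]$ over the concordance classes and drops the contrarian class by monotonicity. It identifies the always-aligned and never-aligned terms through $\mathbb{E}[X_t \mid Q_t = 1, \mathbf{Z}^{\leq t} = \mathbf{0}^{\leq t}]$ and $\mathbb{E}[X_t \mid Q_t = 0, \mathbf{Z}^{\leq t} = \mathbf{1}^{\leq t}]$ (with their probabilities) via monotonicity, Assumption \ref{assump:ignorability-x}, and consistency, and it reuses the denominator $\mathbb{E}[Q_t \mid \mathbf{Z}^{\leq t}=\mathbf{1}^{\leq t}] - \mathbb{E}[Q_t \mid \mathbf{Z}^{\leq t}=\mathbf{0}^{\leq t}]$ from Proposition \ref{prop:cond-ident}. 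Your only departures are bookkeeping (working with $\mathbb{E}[X_t \mathbbm{1}\{H_t = \cdot\}]$ instead of separate conditional means and probabilities) and your explicit note that ignorability must hold jointly for $X_t$, $R_t$, and the potential decisions, which the paper's proof uses only implicitly.
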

\begin{proof}
    See Appendix \ref{app:proofs:conditional}.
\end{proof}

\begin{example}\label{app:ex:ab-af-cr-profile}
    Let $X_{it} \in \{-1, 1\}$ be an observed covariate for each decision such that $P(X_{it} = 1) = \gamma, \ P(X_{it} = -1) = 1-\gamma$. Suppose also that, $A_{it}$, the correctness of the algorithm's recommendation, is sampled as 
    \begin{align*}
        A_{it} &\sim \text{Bernoulli}(p_{A_{it}}) \\
        p_{A_{it}} &= \text{sigmoid}(\alpha + \beta_X X)
    \end{align*}
    where $\alpha, \beta_X \in \mathbb{R}^+$. Thus, the algorithm is more likely to be correct when $X = 1$ and less likely to be correct when $X = -1$. 

    Under the Automation Bias and Alert Fatigue data-generating processes and, using the above sampling process for $A_{it}$, 
    \begin{align*}
        \mathbb{E}[X_t \mid H_t = YI] = \mathbb{E}[X_t \mid I_t = YI] = 2\gamma - 1
    \end{align*}
    That is, the expected value of the covariate on habituated and immediate yielding decisions is the same as the unconditional expected value of the covariate.

    On the other hand, under the Calibrated Reliance data-generating process, the expected value of the covariate on immediate yielding decisions is the same,
    \begin{align*}
        \mathbb{E}[X_t \mid I_t = YI] = 2 \gamma - 1,
    \end{align*}
    but the profile for habituated yielding decisions is
    \begin{align*}
        \mathbb{E}[X_t \mid H_t = YI] = 
        (2 \gamma - 1)
         + \frac{4(1 - \gamma)(\mu_A - \mu_A \mid_{X = -1})\beta_S (1 - \rho^{t-1})}{\beta_Z + \beta_S(2\mu_A - 1)(1 - \rho^{t-1})}
    \end{align*}
    where 
    \begin{align*}
        \mu_A &= \mathbb{E}[A_t] = \gamma \ \text{sigmoid}(\alpha + \beta_X) + (1 - \gamma)\text{sigmoid}(\alpha - \beta_X) \\
        \mu_A \mid_{X=-1} &= \mathbb{E}[A_t \mid X_t = -1] = \text{sigmoid}(\alpha - \beta_X)
    \end{align*}
    Notably, this profile varies with $\mu_A$, $\beta_S$, $\beta_Z$, $\rho$, and time $t$.
\end{example}
\begin{proof}
    See Appendix \ref{app:proofs:conditional}.
\end{proof}

\section{Proofs}\label{app:proofs}
\subsection{Minimax Design}\label{app:proofs:minimax}
The minimax design result follows from the results proven previously in \cite{Basse2023Minimax}. We make small extensions to accommodate the staggered offsets part of the design and the introduction of additional estimators ($\tau_t$ and $\lambda_t$). 

First, we restate the permutation invariance assumption from \cite{Basse2023Minimax} that establishes the support of the potential outcomes. For $\mathcal{Y} \subset \mathbb{R}^N$, let $\mathbb{Y}(\mathcal{Y}) = \{[\mathbf{y}_1 \dots \mathbf{y}_T]: \mathbf{y}_t \in \mathcal{Y}, t = 1, \dots, T\}$ denote the set of $N \times T$ matrices whose columns are in $\mathcal{Y}$, and let $\underline{\mathbb{Y}}(\mathcal{Y})$ denote the set of all potential outcomes schedules whose matrices are all elements of $\mathbb{Y}(\mathcal{Y})$. That is, for any $\underline{\mathbf{Y}} \in \underline{\mathbb{Y}}(\mathcal{Y})$, it holds that $\mathbf{Y}(D(\mathbf{z})) \in \underline{\mathbf{Y}}$ for all $\mathbf{Y}(D(\mathbf{z})) \in \underline{\mathbf{Y}}$.

\begin{assumption}[Permutation invariance]\label{assump:perm-invariance}
      If there exists some $\mathbf{y}_t(\mathbf{z}) \in \underline{\mathbf{Y}} \in \underline{\mathbb{Y}}(\mathcal{Y})$ such that $\mathbf{y}_t(\mathbf{z}) = \mathbf{y}$, then there should also exist $\mathbf{y}_t(\mathbf{z}) \in \underline{\mathbf{Y}}' \in \underline{\mathbb{Y}}(\mathcal{Y})$ such that $\mathbf{y}_t(\mathbf{z}) = \sigma \cdot \mathbf{y}$, for any permutation $\sigma$ in $S_N$ and any $t =1, \dots, T$.
\end{assumption}
    
For the sake of completeness, we restate the results from Lemmas 1-4 in \cite{Basse2023Minimax}, with adaptations for our setting. We use slightly modified notation to avoid overlapping symbols with other parts of this work.
 
\begin{lemma}\label{lemma:loss-perm-invar}
    Let $\sigma \in S_{N}$ denote a permutation over the symmetric group on $N$ elements. For any assignment matrix $\mathbf{Z}$, and any schedule $\underline{\mathbf{Y}}$, 
    \begin{align*}
        L(\sigma \cdot \mathbf{Z}, \sigma \cdot \underline{\mathbf{Y}}) = L(\mathbf{Z}, \underline{\mathbf{Y}})
    \end{align*}
\end{lemma}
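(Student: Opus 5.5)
The plan is to verify the identity directly by showing that each of the four estimators and each of the four estimands is unchanged when the decision-makers are relabelled simultaneously in both $\mathbf{Z}$ and $\underline{\mathbf{Y}}$. Write $\sigma\cdot\mathbf{Z}$ for the assignment matrix whose $\sigma(i)$-th row is $\mathbf{Z}_i$. Write $\sigma\cdot\underline{\mathbf{Y}}$ for the schedule in which every matrix $\mathbf{Y}(D(\mathbf{z}))$ has its rows permuted the same way. Under this relabelling, decision-maker $\sigma(i)$ in the permuted world has exactly the assignment trajectory and the full set of potential outcomes that decision-maker $i$ had in the original world.

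\textbf{Estimands.} I would first dispose of the estimands. Each of $\tau_t,\delta_t,\Delta_t,\lambda_t$ is an average over $i=1,\dots,N$ of contrasts between entries of fixed matrices in the schedule, namely $\mathbf{Y}(D(\mathbf{1}))$, $\mathbf{Y}(D(\mathbf{0}))$, $\mathbf{Y}(D(\mathbf{w}_{t,\infty}))$ and $\mathbf{Y}(D(\mathbf{w}_{1,t}))$, all evaluated at column $t$. By non-anticipation (Assumption \ref{assump-nonanticip}), the potential outcome under $(\mathbf{0}^{<t},1)$ equals the $(i,t)$ entry of $\mathbf{Y}(D(\mathbf{w}_{t,\infty}))$. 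Likewise, the potential outcome under $(\mathbf{1}^{<t},0)$ equals the $(i,t)$ entry of $\mathbf{Y}(D(\mathbf{w}_{1,t}))$. A sum over all $i$ is invariant under reindexing by $\sigma$, so every estimand computed from $\sigma\cdot\underline{\mathbf{Y}}$ equals the one computed from $\underline{\mathbf{Y}}$. These estimands do not depend on $\mathbf{Z}$ at all.

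\textbf{Estimators.} Next I would handle the estimators. The observed outcome is $Y_{it}=\sum_{\mathbf{z}} \mathbbm{1}\{\mathbf{Z}_i=\mathbf{z}\}\,[\mathbf{Y}(D(\mathbf{z}))]_{it}$, where $\mathbf{z}$ ranges over the allowed trajectories. After the joint permutation, the observed outcome of unit $\sigma(i)$ is therefore $Y_{it}$. The group-membership indicators $\mathbbm{1}\{i\in\mathcal{T}_t\}$, $\mathbbm{1}\{i\in\mathcal{C}_t\}$, $\mathbbm{1}\{\mathbf{Z}_i=\mathbf{w}_{t,\infty}\}$ and $\mathbbm{1}\{\mathbf{Z}_i=\mathbf{w}_{1,t}\}$ are functions of row $i$ of $\mathbf{Z}$ alone. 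They therefore move with the unit: $\sigma(i)$ belongs to the permuted group exactly when $i$ belongs to the original one. Consequently the group sizes $N_{1,t}$, $N_{0,t}$, $N_{w_{t,\infty}}$ and $N_{w_{1,t}}$ are unchanged. Each estimator is a difference of group means of the form $\frac{1}{|G|}\sum_i Y_{it}\mathbbm{1}\{i\in G\}$. After substituting $j=\sigma(i)$, every such mean is identical in the two worlds.

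\textbf{Conclusion and main obstacle.} Since every summand $(\theta_t-\hat\theta_t)^2$ in $L$ is unchanged, $L(\sigma\cdot\mathbf{Z},\sigma\cdot\underline{\mathbf{Y}})=L(\mathbf{Z},\underline{\mathbf{Y}})$. The only step needing care is the bookkeeping of the permutation action. It must be fixed so that rows of $\mathbf{Z}$ and rows of every matrix in $\underline{\mathbf{Y}}$ are permuted by the same $\sigma$, in the same direction. I will also check that the estimands refer only to potential outcomes actually contained in the schedule $\underline{\mathbf{Y}}$, which is where non-anticipation is used. Once that convention is stated, the argument reduces to reindexing finite sums.
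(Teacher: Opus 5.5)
Your proposal is correct and takes essentially the same approach as the paper, which simply invokes the reindexing argument of Basse et al.'s Lemma 1 with the four $\hat{\tau}_t^{sdw}, \hat{\delta}_t^{sdw}, \hat{\Delta}_t^{sdw}, \hat{\lambda}_t^{sdw}$ estimators substituted. In that argument the estimands are unit-symmetric averages of schedule entries, and the estimators are unit-symmetric functions of the jointly permuted observed outcomes and group memberships; your write-up spells this out in more detail than the paper, including the use of non-anticipation to locate $(\mathbf{0}^{<t},1)$ and $(\mathbf{1}^{<t},0)$ in the schedule and the tacit use of partial interference in your formula for $Y_{it}$.
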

\begin{proof}
    We can apply the same proof argument as in \cite{Basse2023Minimax}'s proof for Lemma 1, and simply replace their estimator definitions with our estimators $\hat{\tau}, \hat{\delta}, \hat{\Delta}$, and $\hat{\lambda}$.
\end{proof}

\begin{lemma}\label{lemma:tilde-pi-minimax}
    For design $\pi \in \mathbb{H}$ and $\sigma \in S_N$, let $\tilde{\pi} = (N!)^{-1} \sum_{\sigma \in S_N} \pi(\sigma \cdot \mathbf{Z})$. Then, for a permutation-invariant schedule $\mathbb{Y}(\mathcal{Y})$, 
    \begin{align*}
        \max_{\underline{\mathbf{Y}} \in \mathbb{Y}(\mathcal{Y})} \{r(\tilde{\pi}; \underline{\mathbf{Y}})\} \leq \max_{\underline{\mathbf{Y}} \in \mathbb{Y}(\mathcal{Y})} \{r(\pi; \underline{\mathbf{Y}})\}
    \end{align*}
\end{lemma}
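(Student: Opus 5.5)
The plan is to show that symmetrizing a design cannot increase the worst-case risk. This follows from three facts: the loss is invariant under simultaneous permutation (Lemma \ref{lemma:loss-perm-invar}), the support is closed under permutation (Assumption \ref{assump:perm-invariance}), and risk is linear in the design.

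First, $\tilde{\pi}$ is a mixture of the designs $\pi_\sigma(\mathbf{Z}) := \pi(\sigma\cdot\mathbf{Z})$, each with weight $1/N!$. The expectation defining the risk is linear in the assignment distribution, so
\begin{align*}
r(\tilde{\pi};\underline{\mathbf{Y}}) = \frac{1}{N!}\sum_{\sigma\in S_N} r(\pi_\sigma;\underline{\mathbf{Y}}).
\end{align*}
Each $\pi_\sigma$ lies in $\mathbb{H}$, because permuting decision-makers preserves the allowed trajectory set, and so does the mixture.

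Second, I relate $r(\pi_\sigma;\underline{\mathbf{Y}})$ to a risk of $\pi$ itself. Substituting $\mathbf{Z}' = \sigma\cdot\mathbf{Z}$ and applying Lemma \ref{lemma:loss-perm-invar} with the permutation $\sigma$ gives
\begin{align*}
r(\pi_\sigma;\underline{\mathbf{Y}}) = \sum_{\mathbf{Z}} \pi(\sigma\cdot\mathbf{Z})\, L(\mathbf{Z},\underline{\mathbf{Y}}) = \sum_{\mathbf{Z}'}\pi(\mathbf{Z}')\, L(\mathbf{Z}',\sigma\cdot\underline{\mathbf{Y}}) = r(\pi;\sigma\cdot\underline{\mathbf{Y}}).
\end{align*}
Here $\sigma\cdot\underline{\mathbf{Y}}$ permutes the rows (decision-makers) of every potential-outcome matrix in the schedule. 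The subtle point is the direction of the permutation: $\sigma$ versus $\sigma^{-1}$. Since $\sigma$ ranges over the whole group, this does not matter.

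Third, Assumption \ref{assump:perm-invariance} guarantees that $\sigma\cdot\underline{\mathbf{Y}}\in\mathbb{Y}(\mathcal{Y})$ whenever $\underline{\mathbf{Y}}$ is. Hence for every $\sigma$,
\begin{align*}
r(\pi;\sigma\cdot\underline{\mathbf{Y}})\le \max_{\underline{\mathbf{Y}}'\in\mathbb{Y}(\mathcal{Y})} r(\pi;\underline{\mathbf{Y}}').
\end{align*}
Averaging over $\sigma$, we get $r(\tilde\pi;\underline{\mathbf{Y}})\le \max_{\underline{\mathbf{Y}}'} r(\pi;\underline{\mathbf{Y}}')$ for every $\underline{\mathbf{Y}}$. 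Taking the maximum over $\underline{\mathbf{Y}}$ on the left then gives the claim.

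The main obstacle is the first part of the second step: checking that the estimands $\tau_t,\delta_t,\Delta_t,\lambda_t$ and the estimators' group sizes transform consistently under a row permutation. Concretely, $N_{1,t}$, $N_{0,t}$, $N_{w_{t,\infty}}$ and $N_{w_{1,t}}$ are unchanged, and each $Y_{it}$ moves with its decision-maker. This is exactly what Lemma \ref{lemma:loss-perm-invar} supplies, so the remaining argument is bookkeeping.
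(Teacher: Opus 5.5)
Your proposal is correct and follows the same argument the paper uses; the paper defers to Lemma 2 of \cite{Basse2023Minimax}. That argument writes $r(\tilde{\pi};\underline{\mathbf{Y}})$ as an average over $\sigma$, changes variables and applies Lemma \ref{lemma:loss-perm-invar} to get $r(\pi;\sigma\cdot\underline{\mathbf{Y}})$, then bounds each term by the maximum using closure of the support under permutations, and your handling of the $\sigma$ versus $\sigma^{-1}$ direction is right.
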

\begin{proof}
    See \cite{Basse2023Minimax}'s proof for Lemma 2, which invokes Lemma \ref{lemma:loss-perm-invar}.
\end{proof}

\begin{lemma}\label{lemma:tilde-pi-representation}
Let $\phi_{\mathbf{Z}}$ denote the design that assigns mass 1 at $\mathbf{Z}$.  Then,
    \begin{align*}
        \tilde{\pi} = \sum_{\mathbf{Z} \in \mathcal{Z}} \pi(\mathbf{Z})\tilde{\phi}_{\mathbf{Z}}
    ,\quad \text{where } \tilde{\phi}_{\mathbf{Z}}(\mathbf{Z}')
        = ({N!})^{-1} \sum_{\sigma \in S_N} \phi_{\mathbf{Z}}(\sigma \cdot \mathbf{Z}')
    \end{align*}
\end{lemma}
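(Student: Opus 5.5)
The statement is a purely combinatorial identity about designs, so I will prove it by direct evaluation: fix an arbitrary assignment $\mathbf{Z}'\in\mathcal{Z}$ and check that both sides assign it the same probability mass.

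\emph{Step 1 (write $\pi$ as a mixture of point masses).} Any design $\pi\in\mathbb{H}$ is a probability distribution on the finite set $\mathcal{Z}$. It can therefore be written as $\pi(\mathbf{Z}'')=\sum_{\mathbf{Z}\in\mathcal{Z}}\pi(\mathbf{Z})\,\phi_{\mathbf{Z}}(\mathbf{Z}'')$, because $\phi_{\mathbf{Z}}(\mathbf{Z}'')=\mathbbm{1}\{\mathbf{Z}''=\mathbf{Z}\}$.

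\emph{Step 2 (substitute into the symmetrization).} Apply this with $\mathbf{Z}''=\sigma\cdot\mathbf{Z}'$ inside the definition of $\tilde\pi$ from Lemma \ref{lemma:tilde-pi-minimax}:
\begin{align*}
\tilde\pi(\mathbf{Z}') &= (N!)^{-1}\sum_{\sigma\in S_N}\sum_{\mathbf{Z}\in\mathcal{Z}}\pi(\mathbf{Z})\,\phi_{\mathbf{Z}}(\sigma\cdot\mathbf{Z}')\\
&= \sum_{\mathbf{Z}\in\mathcal{Z}}\pi(\mathbf{Z})\,(N!)^{-1}\sum_{\sigma\in S_N}\phi_{\mathbf{Z}}(\sigma\cdot\mathbf{Z}')
= \sum_{\mathbf{Z}\in\mathcal{Z}}\pi(\mathbf{Z})\,\tilde\phi_{\mathbf{Z}}(\mathbf{Z}').
\end{align*}
The interchange of the two sums is justified because both are finite sums.

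\emph{Step 3 (confirm the objects are well defined).} For the identity to make sense as a statement about designs in $\mathbb{H}$, the permuted assignment $\sigma\cdot\mathbf{Z}'$ must lie in $\mathcal{Z}$.
\begin{itemize}
\item Here $\sigma$ acts by permuting the rows (decision-makers) of $\mathbf{Z}'$.
\item Each row of $\mathbf{Z}'$ is one of the allowed trajectories $\mathbf{1}$, $\mathbf{0}$, $\mathbf{w}_{t,\infty}$, or $\mathbf{w}_{1,t}$.
\item Permuting rows does not change any row's content, so $\sigma\cdot\mathbf{Z}'\in\mathcal{Z}$.
\end{itemize}
It then follows that $\tilde\phi_{\mathbf{Z}}$ is itself a probability distribution on $\mathcal{Z}$: it is nonnegative, and $\sum_{\mathbf{Z}'}\tilde\phi_{\mathbf{Z}}(\mathbf{Z}')=(N!)^{-1}\sum_\sigma 1=1$, since $\mathbf{Z}'\mapsto\sigma\cdot\mathbf{Z}'$ is a bijection on $\mathcal{Z}$. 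Consequently $\tilde\pi$ is a convex combination of the $\tilde\phi_{\mathbf{Z}}$ and lies in $\mathbb{H}$.

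As a remark, $\tilde\phi_{\mathbf{Z}}$ is exactly the completely randomized design whose arm counts $N_1,N_0,\{N_{w_{t,\infty}}\},\{N_{w_{1,t}}\}$ match those of $\mathbf{Z}$. This is the observation that the next step of the minimax argument will use.

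\emph{Main obstacle.} There is no substantive difficulty here. The only points needing care are notational: using the convention that $\sigma$ acts on rows, consistently with Lemma \ref{lemma:loss-perm-invar}, and checking the closure of $\mathcal{Z}$ under that action. The argument is the same as the corresponding step in \cite{Basse2023Minimax}; the only change is that the support now also includes the offset trajectories $\mathbf{w}_{1,t}$, and closure under row permutation holds for them for the same reason.
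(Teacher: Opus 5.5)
Your proposal is correct and takes the same approach as the paper, which defers to \cite{Basse2023Minimax}'s Lemma 3: write $\pi$ as a mixture of the point masses $\phi_{\mathbf{Z}}$ and interchange the two finite sums. Your additional checks are also correct. These are the closure of $\mathcal{Z}$ under row permutations, $\tilde{\phi}_{\mathbf{Z}}$ being a probability distribution, and its identification with the completely randomized design that has the same arm counts, which is the fact used next in Lemma \ref{lemma:variance}.
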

\begin{proof}
    See \cite{Basse2023Minimax}'s proof for Lemma 3.
\end{proof}

\begin{lemma}\label{lemma:variance}
    \begin{align*}
        \max_{\underline{\mathbf{Y}} \in \mathbb{Y}(\mathcal{Y})} \{r(\tilde{\pi}; \underline{\mathbf{Y}})\}
        = V^*
        \sum_{Z \in \mathcal{Z}}
        \pi(\mathbf{Z})
        \sum_{t=2}^T
        \left[
            \frac{2}{N_{1, t}(\mathbf{Z})}
            + \frac{3}{N_{0, t}(\mathbf{Z})}
            + \frac{2}{N_{w_{t, \infty}}(\mathbf{Z})}
            + \frac{1}{N_{w_{1, t}}(\mathbf{Z})}
        \right]
    \end{align*}
\end{lemma}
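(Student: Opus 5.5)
The plan is to follow the structure of the proof of Lemma 4 in \cite{Basse2023Minimax}, adding bookkeeping for the extra arms and estimators. First, I use Lemma \ref{lemma:tilde-pi-representation} to write $r(\tilde{\pi}; \underline{\mathbf{Y}}) = \sum_{\mathbf{Z}} \pi(\mathbf{Z})\, r(\tilde{\phi}_{\mathbf{Z}}; \underline{\mathbf{Y}})$. Each $\tilde{\phi}_{\mathbf{Z}}$ is a completely randomized design: it fixes the arm sizes $N_1(\mathbf{Z}), N_0(\mathbf{Z}), N_{w_{t,\infty}}(\mathbf{Z}), N_{w_{1,t}}(\mathbf{Z})$ and draws the allocation of decision-makers to arms uniformly at random. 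So the problem reduces to computing, for a fixed vector of arm sizes, the risk of the four difference-in-means estimators under complete randomization. I then maximize that risk over the permutation-invariant support. The goal is to show that a single worst-case schedule attains the maximum simultaneously for every $\mathbf{Z}$. This is what lets the $\max$ pass through the $\pi$-weighted sum in the statement.

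For fixed $\mathbf{Z}$ and $t$, each estimator is a contrast of arm means at time $t$. Under $\tilde{\phi}_{\mathbf{Z}}$, each arm mean is unbiased for the corresponding column average: for instance, the $\mathcal{T}_t$ mean is unbiased for $\bar{Y}_t(D(\mathbf{1}^{\leq t}))$, using non-anticipation (Assumption \ref{assump-nonanticip}) to identify offset arms with $t'>t$ with always-treated up to $t$. Hence each squared error is a variance, and I expand it with the standard finite-population sampling formulas. Each variance contains one term of the form $S^2/n$ per arm mean involved, plus a $-S^2/N$ correction and a between-arm covariance term.

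Next I count how often each arm mean enters the loss at time $t$:
\begin{itemize}
\item the $\mathcal{T}_t$ mean enters $\hat{\tau}_t^{sdw}$ and $\hat{\Delta}_t^{sdw}$, i.e.\ twice;
\item the $\mathcal{C}_t$ mean enters $\hat{\tau}_t^{sdw}$, $\hat{\delta}_t^{sdw}$ and $\hat{\lambda}_t^{sdw}$, i.e.\ three times;
\item the $\mathbf{w}_{t,\infty}$ mean enters $\hat{\Delta}_t^{sdw}$ and $\hat{\delta}_t^{sdw}$, i.e.\ twice;
\item the $\mathbf{w}_{1,t}$ mean enters only $\hat{\lambda}_t^{sdw}$, i.e.\ once.
\end{itemize}
These counts are exactly the coefficients $2,3,2,1$ in the statement. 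I define $V^*$, as in \cite{Basse2023Minimax}, as the maximal column variance $\max_{\mathbf{y}\in\mathcal{Y}} (N-1)^{-1}\sum_i (y_i-\bar y)^2$, attained by some $\mathbf{y}^*$. Each $S^2$ term is then bounded by $V^*$.

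The main obstacle is the between-arm covariance and finite-population correction terms. I need them to cancel, or to be made non-positive and then driven to zero, by a choice of schedule that simultaneously achieves $V^*$ in every arm. My first attempt is the construction used by \cite{Basse2023Minimax}: set every potential-outcome column at time $t$ for the relevant assignments to $\pm \mathbf{y}^*$, with signs chosen so that the two columns compared in each contrast are negatives of one another. Permutation invariance (Assumption \ref{assump:perm-invariance}) guarantees this schedule lies in the support. With this choice, the difference-of-columns variance collapses and the $-1/N$ terms cancel against the covariance, leaving exactly $V^*/n$ per arm-mean appearance. The delicate point is that the $\mathcal{C}_t$ column is shared across three contrasts and the $\mathcal{T}_t$ column across two, so the sign choices must be globally consistent. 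I would take $\mathbf{y}(\mathbf{0}) = \mathbf{y}^*$ and all other assignment columns equal to $-\mathbf{y}^*$. If the cancellation turns out to be exact only up to $O(1/N)$ terms, I would state the lemma with that approximation, consistent with the integer-relaxed framing of Theorem \ref{thm:minimax}. Finally, summing over $t=2,\dots,T$ and over $\mathbf{Z}$ with weights $\pi(\mathbf{Z})$ gives the displayed expression, with the maximizer independent of $\mathbf{Z}$ so that the maximum and the sum commute.
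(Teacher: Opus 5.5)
Your architecture matches the paper's: Lemma~\ref{lemma:tilde-pi-representation} reduces to completely randomized $\tilde{\phi}_{\mathbf{Z}}$, unbiasedness turns the risk into a sum of Neyman variances, you get the multiplicities $2,3,2,1$, and you look for one worst-case schedule independent of $\mathbf{Z}$. The gap is that your worst-case schedule has the sign backwards, and with it the claimed equality fails. For a contrast of two arm means of sizes $n_A,n_B$ under complete randomization,
\[
\mathrm{Var}(\bar{Y}_A - \bar{Y}_B) = \frac{S_A^2}{n_A} + \frac{S_B^2}{n_B} - \frac{S_A^2 + S_B^2 - 2S_{AB}}{N} = \frac{S_A^2}{n_A} + \frac{S_B^2}{n_B} - \frac{S_{A-B}^2}{N},
\]
where $S_{A-B}^2$ is the finite-population variance of the unit-level differences. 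These are the $-V^{(t)}_{1,0}/N$, $-V^{(t)}_{1,w_{t,\infty}}/N$, etc.\ in the paper. The $-1/N$ terms cancel against the covariance only when $S_{AB} = +V^*$, i.e.\ when the two columns are equal. If the columns are negatives of one another, then $S_{AB} = -V^*$, the difference $-2\mathbf{y}^*$ has variance $4V^*$, and the correction is $-4V^*/N$. That is the most negative value possible, so your choice minimizes rather than maximizes. Concretely, take $\mathbf{y}(\mathbf{0}) = \mathbf{y}^*$ and everything else $-\mathbf{y}^*$. Then the contrasts in $\hat{\tau}^{sdw}_t$, $\hat{\delta}^{sdw}_t$ and $\hat{\lambda}^{sdw}_t$ each contribute $-4V^*/N$. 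Only $\hat{\Delta}^{sdw}_t$ escapes, because there your two columns coincide. Your schedule therefore attains the displayed expression minus $12(T-1)V^*/N$, which is not the maximum. The rule ``negatives in every contrast'' also cannot be applied consistently, since $\mathbf{1}$, $\mathbf{0}$ and $\mathbf{w}_{t,\infty}$ are compared pairwise in an odd cycle.

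Your fallback of stating the lemma only up to $O(1/N)$ does not rescue this. When arm sizes scale with $N$, every term in the risk (e.g.\ $2V^*/N_{1,t}$) is itself of order $1/N$. The discrepancy is thus of the same order as the quantity being computed, while the lemma is an exact identity. The missing step is the one the paper uses, in two parts:
\begin{itemize}
\item \emph{Upper bound.} The corrections $-V^{(t)}_{l}/N$ are $\le 0$ for every schedule. Together with $V^{(t)}_h \le V^*$, this bounds the risk by the displayed expression.
\item \emph{Attainment.} Take a single $\mathbf{y} \in \mathcal{Y}^{\text{opt}}$ and set every column of every matrix in the schedule equal to it, giving $2T$ identical copies of $\mathbf{Y}^{\text{opt}}$. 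This is the same type of construction as in \cite{Basse2023Minimax}, not a $\pm$ one. Then each $V_h^{(t)} = V^*$ and all four contrast variances vanish.
\end{itemize}
The resulting maximizer does not depend on $\mathbf{Z}$, so the maximum passes exactly through the $\pi$-weighted sum.
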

\begin{proof}
    We largely reiterate \cite{Basse2023Minimax}'s Lemma 4 proof, with modifications for our additional estimators and double wedge design. First, \cite{Basse2023Minimax} show that, by applying Lemma \ref{lemma:tilde-pi-representation}, we have
    \begin{align*}
        r(\tilde{\pi}; \underline{\mathbf{Y}})
        = \sum_{\mathbf{Z} \in \mathcal{Z}} \pi(\mathbf{Z})r(\tilde{\phi}_{\mathbf{Z}}, \underline{\mathbf{Y}})
    \end{align*}
    where $\tilde{\phi}(\mathbf{Z})$ is a completely randomized design that assigns $N_{1}(\mathbf{Z})$ decision-makers to $\mathbf{1}$, $N_{0}(\mathbf{Z})$ to $\mathbf{0}$, $N_{w_{t, \infty}}(\mathbf{Z})$ to $\mathbf{w}_{t, \infty}$ for $t=2, \dots, T$, and $N_{w_{1, t}}(\mathbf{Z})$ for $t=2, \dots, T$.
    
    We can then write the risk under $\tilde{\phi}(\mathbf{Z})$ as,
    \begin{align*}
        r(\tilde{\phi}_{\mathbf{Z}}; \underline{\mathbf{Y}}) &= 
            \sum_{t=2}^T 
                \left( \mathbb{E}[\hat{\tau}^{sdw}_t] - \tau_t \right)^2 
                + \left( \mathbb{E}[\hat{\Delta}^{sdw}_t] - \Delta_t \right)^2
                + \left( \mathbb{E}[\hat{\delta}^{sdw}_t] - \delta_t \right)^2
                + \left( \mathbb{E}[\hat{\lambda}^{sdw}_t] - \lambda_t \right)^2 \\
            &\quad + \sum_{t=2}^T 
                \left(V_{\tilde{\phi}_{\mathbf{Z}}}(\hat{\tau}^{sdw}_t) + V_{\tilde{\phi}_{\mathbf{Z}}}(\hat{\Delta}^{sdw}_t) + V_{\tilde{\phi}_{\mathbf{Z}}}(\hat{\delta}^{sdw}_t) + V_{\tilde{\phi}_{\mathbf{Z}}}(\hat{\lambda}^{sdw}_t) \right)\\
            &= \sum_{t=2}^T 
            V_{\tilde{\phi}_{\mathbf{Z}}}(\hat{\tau}^{sdw}_t) +
            V_{\tilde{\phi}_{\mathbf{Z}}}(\hat{\Delta}^{sdw}_t) + V_{\tilde{\phi}_{\mathbf{Z}}}(\hat{\delta}^{sdw}_t) + V_{\tilde{\phi}_{\mathbf{Z}}}(\hat{\lambda}^{sdw}_t) 
    \end{align*}
    where the second equality follows since the estimators are unbiased under Assumptions \ref{assump-partial-interf} - \ref{assump:ignorability}, and where
    \begin{align*}
        V_{\tilde{\phi}_{\mathbf{Z}}}({\hat{\tau}^{sdw}_t}) &= 
            \frac{V_1^{(t)}}{N_{1, t}(\mathbf{Z})} 
                + \frac{V_0^{(t)}}{N_{0, t}(\mathbf{Z})} 
                - \frac{V_{1, 0}^{(t)}}{N}\\
        V_{\tilde{\phi}_{\mathbf{Z}}}({\hat{\Delta}^{sdw}_t}) &= 
            \frac{V_1^{(t)}}{N_{1, t}(\mathbf{Z})} 
                + \frac{V_{w_{t, \infty}}^{(t)}}{N_{w_{t, \infty}}(\mathbf{Z})} 
                - \frac{V_{1, w_{t, \infty}}^{(t)}}{N}\\
        V_{\tilde{\phi}_{\mathbf{Z}}}({\hat{\delta}^{sdw}_t}) &= \frac{V_{w_{t, \infty}}^{(t)}}{N_{w_{t, \infty}}(\mathbf{Z})} 
                + \frac{V_0^{(t)}}{N_{0, t}(\mathbf{Z})} 
                - \frac{V_{w_{t, \infty}, 0}^{(t)}}{N} \\
        V_{\tilde{\phi}_{\mathbf{Z}}}({\hat{\lambda}^{sdw}_t}) &= \frac{V_{w_{1, t}}^{(t)}}{N_{w_{1, t}}(\mathbf{Z})} 
                + \frac{V_0^{(t)}}{N_{0, t}(\mathbf{Z})} 
                - \frac{V_{w_{1, t}, 0}^{(t)}}{N} \\
        V_h^{(t)} &= \frac{1}{N-1} \sum_{i=1}^N \left(Y_{it}(D_{it}(h)) - \bar{Y}_{it}(D_{it}(h)) \right)^2
            \quad\quad h=\mathbf{0}, \mathbf{1}, \mathbf{w}_{t, \infty}, \mathbf{w}_{1, t} \\
        V_{1, 0}^{(t)} &= 
            \frac{1}{N-1} \sum_{i=1}^N 
                \left(
                    [Y_{it}(D_{it}(\mathbf{1}^{\leq t})) - Y_{it}(D_{it}(\mathbf{0}^{\leq t}))]
                    - [\bar{Y}_{it}(D_{it}(\mathbf{1}^{\leq t})) - \bar{Y}_{it}(D_{it}(\mathbf{0}))]
                \right)^2\\
        V_{1, w_{t, \infty}}^{(t)} &= 
            \frac{1}{N-1} \sum_{i=1}^N 
                \left(
                    [Y_{it}(D_{it}(\mathbf{1}^{\leq t})) - Y_{it}(D_{it}(\mathbf{w}_{t, \infty}))]
                    - [\bar{Y}_{it}(D_{it}(\mathbf{1}^{\leq t})) - \bar{Y}_{it}(D_{it}(\mathbf{w}_{t, \infty})))]
                \right)^2\\
        V_{w_{t, \infty}, 0}^{(t)} &=
            \frac{1}{N-1} \sum_{i=1}^N 
                \left(
                    [Y_{it}(D_{it}(\mathbf{w}_{t, \infty})) - Y_{it}(D_{it}(\mathbf{0}^{\leq t}))]
                    - [\bar{Y}_{it}(D_{it}(\mathbf{w}_{t, \infty}))) - \bar{Y}_{it}(D_{it}(\mathbf{0}^{\leq t}))]
                \right)^2\\
        V_{w_{1, t}, 0}^{(t)} &= 
            \frac{1}{N-1} \sum_{i=1}^N 
                \left(
                    [Y_{it}(D_{it}(\mathbf{w}_{1, t})) - Y_{it}(D_{it}(\mathbf{0}^{\leq t}))]
                    - [\bar{Y}_{it}(D_{it}(\mathbf{w}_{1, t}))) - \bar{Y}_{it}(D_{it}(\mathbf{0}^{\leq t}))]
                \right)^2\\
    \end{align*}
Then the risk over $\tilde{\pi}$ can be written as
\begin{align*}
    r(\tilde{\pi}; \underline{\mathbf{Y}})
    &= \sum_{t=2}^T M_{1}^{(t)}(\tilde{\pi}; \underline{\mathbf{Y}}) + M_{2}^{(t)}(\underline{\mathbf{Y}})  \\
    \text{ where }
    M_{1}^{(t)}(\tilde{\pi}, \underline{\mathbf{Y}}) &=  
        2V_1^{(t)} \sum_{\mathbf{Z} \in \mathcal{Z}}\frac{\pi(\mathbf{Z})}{N_{1, t}(\mathbf{Z})}
        + 3V_0^{(t)} \sum_{\mathbf{Z} \in \mathcal{Z}}\frac{\pi(\mathbf{Z})}{N_{0, t}(\mathbf{Z})} \\
        &\quad\quad\quad+ 2V_{w_{t, \infty}}^{(t)} \sum_{\mathbf{Z} \in \mathcal{Z}}\frac{\pi(\mathbf{Z})}{N_{w_{t, \infty}}(\mathbf{Z})} 
        + V_{w_{1, t}}^{(t)} \sum_{\mathbf{Z} \in \mathcal{Z}}\frac{\pi(\mathbf{Z})}{N_{w_{1, t}}(\mathbf{Z})}  \\
    M_{2}^{(t)}(\underline{\mathbf{Y}}) &= - \frac{1}{N} \left(V_{1, 0}^{(t)} + V_{1, w_{t, \infty}}^{(t)} + V_{w_{t, \infty}, 0}^{(t)} + V_{w_{1, t}, 0}^{(t)} \right)
\end{align*}
Since $V_h^{(t)}$ depends only on $\mathbf{y}_t(h) = \left(Y_{1t}(D(h)), Y_{2t}(D(h)), \dots, Y_{Nt}(D(h)) \right)$, we can express $V_h^{(t)}$ as $V_h^{(t)} = V(\mathbf{y}_t(h))$, for some function $V(\cdot)$. Since $y_t(h) \in \mathcal{Y}$, we have that 
\begin{align*}
    \argmax_{\mathbf{y}_t(h) \in \mathcal{Y}} V(\mathbf{y}_t(h)) = \mathcal{Y}^{\text{opt}}, 
\end{align*}
where $\mathcal{Y}^{\text{opt}}$ is constant and may be set-valued. Furthermore, it holds that, for every $\mathbf{y} \in \mathcal{Y}^{\text{opt}}$, 
\begin{align*}
    \max_{\underline{\mathbf{Y}} \in \mathbb{Y}(\mathcal{Y})} V_h^{(t)} = V(\mathbf{y})
\end{align*}
For any choice of $\mathbf{y} \in \mathcal{Y}^{\text{opt}}$, we can construct the $N \times T$ matrix, $\mathbf{Y}^{\text{opt}}$ with all columns set to $\mathbf{y}$, and we define $\underline{Y}^{\text{opt}} = [\mathbf{Y}^{\text{opt}}, \dots, \mathbf{Y}^{\text{opt}}]$ as a schedule with $2T$ copies of $\mathbf{Y}^{\text{opt}}$ (in contrast to the $T+1$ copies in \cite{Basse2023Minimax}). Then, for $h = \mathbf{0}, \mathbf{1}, \mathbf{w}_{t, \infty}, \mathbf{w}_{1, t}$, it holds that
\begin{align*}
    \max_{\underline{\mathbf{Y}} \in \mathbb{Y}(\mathcal{Y})} V_h^{(t)}
    &= V_h^{(t)}(\underline{Y}^{\text{opt}}) = \frac{1}{N-1} \sum_{i=1}^N (\mathbf{y}_i - \bar{\mathbf{y}})^2,
\end{align*}
which is equal to some constant $V^*$, assuming the potential outcomes are bounded. It then holds that
\begin{align*}
    \mathbf{\underline{Y}}^{\text{opt}} \in 
        \argmax_{\underline{\mathbf{Y}} \in \mathbb{Y}(\mathcal{Y})} \sum_{t=2}^T M_{1}^{(t)}(\tilde{\pi}, \underline{\mathbf{Y}})
\end{align*}
since we can separately optimize each of the terms $V_{\mathbf{0}}^{(t)}, V_{\mathbf{1}}^{(t)}, V_{\mathbf{w}_{t, \infty}}^{(t)}, V_{\mathbf{w}_{1, t}}^{(t)}$ in $M_{1}^{(t)}(\tilde{\pi}, \underline{\mathbf{Y}})$.

Finally, by construction of $\underline{\mathbf{Y}}^{\text{opt}}$, we have that
\begin{align*}
    V_{1, 0}^{(t)}(\underline{\mathbf{Y}}^{\text{opt}}) =
    V_{1, \mathbf{w}_{t, \infty}}^{(t)}(\underline{\mathbf{Y}}^{\text{opt}}) =
    V_{\mathbf{w}_{t, \infty}, 0}^{(t)}(\underline{\mathbf{Y}}^{\text{opt}}) =
    V_{\mathbf{w}_{1, t}, 0}^{(t)}(\underline{\mathbf{Y}}^{\text{opt}}) = 0
\end{align*}
And since all variances must be non-negative, it holds that
\begin{align*}
    &\max_{\underline{\mathbf{Y}} \in \mathbb{Y}(\mathcal{Y})} \{- V_{l}^{(t)} \} = 0 \\
    &\text{for} \ V_l^{(t)} \in \{V_{1, 0}^{(t)},  V_{1, \mathbf{w}_{t, \infty}}^{(t)}, V_{\mathbf{w}_{t, \infty}, 0}^{(t)} , V_{\mathbf{w}_{1, t}, 0}^{(t)}\}
\end{align*} 
Thus, $\underline{\mathbf{Y}}^{\text{opt}}$ must also satisfy
\begin{align*}
    \underline{\mathbf{Y}}^{\text{opt}} \in 
        \argmax_{\underline{\mathbf{Y}} \in \mathbb{Y}(\mathcal{Y})} \sum_{t=2}^T M_{1}^{(t)}(\tilde{\pi}, \underline{\mathbf{Y}}) + M_2^{(t)}(\underline{\mathbf{Y}})
\end{align*}
Plugging in the constant $V^*$, and using $\max_{\mathbf{\underline{Y}} \in \mathbb{Y}(\mathcal{Y})} M_2^{(t)}(\mathbf{\underline{Y}}) = 0$ for all $t = 2, \dots, T$,
\begin{align*}
    \max_{\underline{\mathbf{Y}} \in \mathbb{Y}(\mathcal{Y})} 
    r(\tilde{\pi}; \underline{\mathbf{Y}})
    =V^* \sum_{Z \in \mathcal{Z}} \pi(\mathbf{Z}) 
     \sum_{t=2}^{T}\left[\frac{2}{N_{1, t}(\mathbf{Z})} + \frac{3}{N_{0, t}(\mathbf{Z})} + \frac{2}{N_{w_{t, \infty}}(\mathbf{Z})} + \frac{1}{N_{w_{1, t}}(\mathbf{Z})} \right]
\end{align*}
\end{proof}

\begin{proof}[Proof of Theorem \ref{thm:minimax}]
By Lemma \ref{lemma:tilde-pi-minimax}, if $\pi$ is minimax optimal, then $\tilde{\pi}$ is also minimax optimal, so the goal is to find $\tilde{\pi}$ that achieves minimax risk. Using Lemma \ref{lemma:variance}, it follows that a design $\pi^{\text{opt}}$ that solves $\min_{\pi} \max_{\underline{\mathbf{Y}} \in \mathbb{Y}(\mathcal{Y})} r(\tilde{\pi}; \underline{\mathbf{Y}})$ must satisfy
\begin{align*}
   &\text{for any } \mathbf{Z} \text{ such that } \pi^{\text{opt}}(\mathbf{Z}) > 0, \\
    &\left( 
        N_1, N_0, \{N_{w_{t, \infty}}\}_{t=2}^T, \{N_{w_{1, t}}\}_{t=2}^T
    \right) \\
     &= \underset{N_{1}, N_{0}, \{N_{w_{t, \infty}}\}, \{N_{w_{1, t}}\}}{\arg \min}
        \sum_{t=2}^T 
        \left[
            \frac{2}{N_{1, t}} 
            + \frac{3}{N_{0, t}} 
            + \frac{2}{N_{w_{t, \infty}}}
            + \frac{1}{N_{w_{1, t}}}
        \right]
\end{align*}
Note that the objective is over $N_1, N_0, \{N_{w_{t, \infty}}\}, \{N_{w_{1, t}}\}$, since $N_{1, t}$ and $N_{0, t}$ are induced quantities that are fully determined by the values of $N_1, N_0, \{N_{w_{t, \infty}}\}, \{N_{w_{1, t}}\}$. In the \cite{Basse2023Minimax} Theorem 1 proof, they show that, for any $\mathbf{Z}$ such that $\pi^{\text{opt}}(\mathbf{Z}) > 0$, $\tilde{\pi}^{\text{opt}}(\sigma \cdot \mathbf{Z}) = \tilde{\pi}^{\text{opt}}(\mathbf{Z})$ for any permutation $\sigma \in S_N$. This implies that $\tilde{\pi}$ is a completely randomized design over the permutations of $\mathbf{Z}$ where $\pi^{\text{opt}}(\mathbf{Z}) > 0$, which are exactly the assignments that satisfy the $\argmin$ objective above.

As in \cite{Basse2023Minimax}, we solve an integer relaxation of the objective using Lagrangian multipliers. We write the Lagrangian function as

\begin{align*}
    &\mathcal{L}\left( 
            N_1, N_0, \{N_{w_{t, \infty}}\}_{t=2}^T, \{N_{w_{1, t}}\}_{t=2}^T, v
        \right) \\
        &= \sum_{t=2}^T 
            \left[
                \frac{2}{N_{1, t}} 
                + \frac{3}{N_{0, t}} 
                + \frac{2}{N_{w_{t, \infty}}}
                + \frac{1}{N_{w_{1, t}}}
            \right]   
            + v \left[ N_0 + N_1 + \sum_{t=2}^T N_{w_{t, \infty}} + \sum_{t=2}^T N_{w_{1, t}} - N \right]
\end{align*}
Calculating the partial derivatives and setting to zero, we have
\begin{align}\label{eq-lagrange-n1}
    \frac{\partial \mathcal{L}}{\partial N_1} 
        = -\sum_{t=2}^{T}\frac{2}{N_{1, t}^2} + v
        = 0
\end{align}
\begin{align}\label{eq-lagrange-n0}
    \frac{\partial \mathcal{L}}{\partial N_0} 
        = -\sum_{t=2}^{T}\frac{3}{N_{0, t}^2} + v
        = 0
\end{align}
\begin{align}\label{eq-lagrange-n-onset}
    \frac{\partial \mathcal{L}}{\partial N_{w_{s, \infty}}} 
        = -\sum_{t=2}^{s-1}\frac{3}{N_{0, t}^2} - \frac{2}{N_{w_{s, \infty}}^2} + v
        = 0
\end{align}
\begin{align}\label{eq-lagrange-n-offset}
    \frac{\partial \mathcal{L}}{\partial N_{w_{1, s}}} 
        = -\sum_{t=2}^{s-1}\frac{2}{N_{1, t}^2} - \frac{1}{N_{w_{1, s}}^2} + v
        = 0
\end{align}

Using Equations \ref{eq-lagrange-n0} and \ref{eq-lagrange-n-onset} above, and setting $s=T$, we have
\begin{align*}
    \sum_{t=2}^{T}\frac{3}{N_{0, t}^2} &= \sum_{t=2}^{T-1}\frac{3}{N_{0, t}^2} + \frac{2}{N_{w_{T, \infty}}^2} \\
    \frac{3}{N_{0, T}^2} &= \frac{2}{N_{w_{T, \infty}}^2}\\
    \frac{3}{N_0^2} &= \frac{2}{N_{w_{T, \infty}}^2} & \text{since } N_{0, t} = N_0 \text{ at } t = T \\
    N_{w_{T, \infty}} &= \sqrt{\frac{2}{3}} N_0
\end{align*}
Let $c_t$ be a scalar multiplier such that $c_T = 1$, so we can write $N_{w_{t, \infty}} = \sqrt{\frac{2}{3}} N_0 c_t$. We can then rewrite $N_{0, t}$ in terms of $N_0$ and $c_t$ as 
\begin{align*}
    N_{0, t} 
        &= N_0 + \sum_{l=t+1}^T N_{w_{l, \infty}} \\
        &= N_0 + \sum_{l=t+1}^T \sqrt{\frac{2}{3}}N_0 c_l \\
        &= N_0 \left( 1 + \sqrt{\frac{2}{3}}\sum_{l=t+1}^Tc_l \right)
\end{align*}
Now, subtracting \ref{eq-lagrange-n-onset} at $s$ from \ref{eq-lagrange-n-onset} at $s+1$, we establish a recursive definition for $c_t$,
\begin{align*}
     \left[\sum_{t=2}^{s}\frac{3}{N_{0, t}^2} + \frac{2}{N_{w_{s+1, \infty}}^2}\right]
      &= \left[\sum_{t=2}^{s-1}\frac{3}{N_{0, t}^2} + \frac{2}{N_{w_{s, \infty}}^2}\right]  \\
     \frac{2}{N_{w_{s, \infty}}^2} 
        &= \frac{3}{N_{0, s}^2} + \frac{2}{N_{w_{s+1, \infty}}^2} \\
     \frac{2}{\frac{2}{3} N_0^2 c_s^2}
        &= \frac{3}{N_0^2\left(1 + \sqrt{\frac{2}{3}} \sum_{l=s+1}^{T}c_l\right)^2} + \frac{2}{\frac{2}{3}N_0^2c_{s+1}^2} \\
    c_s &= \left[ 
        \frac{1}{c_{s+1}^2} +  
        \frac{1}{(1 + \sqrt{\frac{2}{3}}\sum_{l=s+1}^T c_l)^2}
        \right]^{-1/2} \quad \text{ for all } s = 2, \dots, T-1
\end{align*}

Using the same logic with Equations \ref{eq-lagrange-n1} and \ref{eq-lagrange-n-offset}, we have
\begin{align*}
    N_{w_{1, t}} &= \sqrt{\frac{1}{2}} N_1 d_t \\
    d_s &= \left[ 
        \frac{1}{d_{s+1}^2} +  
        \frac{1}{(1 + \sqrt{\frac{1}{2}}\sum_{l=s+1}^T d_l)^2}        \right]^{-1/2}
        \quad \text{ for all } s = 2, \dots, T-1 \\
    d_T &= 1
\end{align*}

Next, we write $N_1$ in terms of $N_0$. We apply a telescoping sum to the recursion rule for $c_s$,
\begin{align*}
    \frac{1}{c_s^2} &= \frac{1}{c_{s+1}^2} + \frac{1}{N_{0, s}^2 / N_0^2} \\
    \sum_{s=2}^{T-1} \frac{1}{c_s^2} - \frac{1}{c_{s+1}^2} &= N_0^2 \sum_{s=2}^{T-1}\frac{1}{N_{0, s}^2} \\
    \frac{1}{c_2^2} - \frac{1}{c_T^2} &= N_0^2 \sum_{s=2}^{T-1}\frac{1}{N_{0, s}^2} \\
    \frac{1}{c_2^2} & = N_0^2 \sum_{s=2}^{T-1} \frac{1}{N_{0,s}^2} + 1 \\
    \frac{1}{c_2^2} &= N_0^2 \sum_{s=2}^{T} \frac{1}{N_{0,s}^2}
\end{align*}
The same telescoping for $d_s$ gives
\begin{align*}
     \frac{1}{d_2^2} &= N_1^2 \sum_{s=2}^{T} \frac{1}{N_{1,s}^2}
\end{align*}
Then, equating \ref{eq-lagrange-n1} and \ref{eq-lagrange-n0} and substituting, we have
\begin{align*}
    \sum_{t=2}^{T}\frac{2}{N_{1, t}^2} &= \sum_{t=2}^{T}\frac{3}{N_{0, t}^2} \\
    \frac{2}{N_1^2 d_2^2} &= \frac{3}{N_0^2 c_2^2} \\
    N_1 &= \sqrt{\frac{2}{3}} \frac{c_2}{d_2} N_0
\end{align*}

Finally, using the constraint that there are $N$ total decision-makers and writing all $N_{(\cdot)}$ in terms of $N_0$ gives us
\begin{align*}
    N &= N_0 + N_1 + \sum_{t=2}^T N_{w_{t, \infty}} + \sum_{t=2}^T N_{w_{1, t}} \\
    N &= N_0 + \sqrt{\frac{2}{3}}\frac{c_2}{d_2}N_0 + \sqrt{\frac{2}{3}}\sum_{t=2}^TN_0c_t + \sqrt{\frac{1}{3}}\frac{c_2}{d_2}\sum_{t=2}^TN_0d_t \\
    N_0 &= N \left(1 + \sqrt{\frac{2}{3}}\frac{c_2}{d_2} 
    + \sqrt{\frac{2}{3}}\sum_{t=2}^Tc_t 
    + \sqrt{\frac{1}{3}}\frac{c_2}{d_2}\sum_{t=2}^Td_t
     \right)^{-1}
\end{align*}

The objective is convex for $N_0, N_1, \{N_{w_{t, \infty}}\}, \{N_{w_{1, t}}\} > 0$, so the above is a minimum.

\end{proof}

\subsection{Bias Under Monotonicity in Exposure}\label{app:proofs:bias-monotonic}
\begin{proof}[Proof of Proposition \ref{prop:bias-monotonic}]
    Suppose $Y_{it}(D_{it}(\mathbf{Z}^{\leq t})) < Y_{it}(D_{it}(\mathbf{Z}'^{\leq t}))$ if and only if $\sum_{t'=1}^{t}Z_{it'} < \sum_{t'=1}^{t}Z'_{it'}$; that is, $Y_{it}(D_{it}(\mathbf{Z}^{\leq t}))$ is strictly increasing in the cumulative number of exposures to assistance. Then, in the decision-randomized design such that $P(\mathbf{Z}_i = \mathbf{z}) > 0$ for all $\mathbf{z} \in \{0, 1\}^T$, it holds that
    \begin{align*}
        \mathbb{E_\mathbf{Z}}[Y_{it}(D_{it}(\mathbf{Z}^{< t}, 1)) \mid \mathcal{D}] 
            &< Y_{it}(D_{it}(\mathbf{1}^{\leq t})) \\
        \mathbb{E_\mathbf{Z}}[Y_{it}(D_{it}(\mathbf{Z}^{< t}, 0)) \mid \mathcal{D}] 
            &> Y_{it}(D_{it}(\mathbf{0}^{\leq t}))
    \end{align*}
    where the expectation is taken over the random treatment assignments in the decision-randomized design, conditioned on the sample, $\mathcal{D}$.

    Thus,
    \begin{align*}
        \mathbb{E}[\hat{\tau}^{dr}] - \tau
        &= \frac{1}{T} \sum_{t=1}^T
            \mathbb{E}_{\mathcal{D}}
                \left[\mathbb{E}_{\mathbf{Z}}
                    [
                        Y_{t}(D_{t}(\mathbf{Z}^{<t}, 1)) \mid \mathcal{D}
                    ] - 
                    \mathbb{E}_{\mathbf{Z}}
                    [
                        Y_{t}(D_{t}(\mathbf{Z}^{<t}, 0)) \mid \mathcal{D}
                    ] 
                \right] \\
                & \quad\quad - 
                \frac{1}{T} \sum_{t=1}^T
                \mathbb{E}_{\mathcal{D}}
                [
                \left(
                    Y_{t}(D_{t}(\mathbf{1}^{\leq t}))
                    - 
                    Y_{t}(D_{t}(\mathbf{0}^{\leq t}))
                \right)
                ]\\
        &= \frac{1}{T} \sum_{t=1}^T
            \underbrace
            {\mathbb{E}_{\mathcal{D}} 
                \left[ 
                    \mathbb{E}_Z[Y_{t}(D_{t}(\mathbf{Z}^{<t}, 1)) \mid \mathcal{D}]
                    - 
                     Y_{t}(D_{t}(\mathbf{1}^{\leq t}))
                \right]}_{ =0 \text{ for } t=1, \ < 0 \text{ for all } t \geq 2}
                \\
            & \quad\quad -
            \underbrace
            {\mathbb{E}_{\mathcal{D}} 
                \left[ 
                    \mathbb{E}_Z[Y_{t}(D_{t}(\mathbf{Z}^{<t}, 0)) \mid \mathcal{D}]
                    - 
                     Y_{t}(D_{t}(\mathbf{0}^{\leq t}))
                \right]}_{ =0 \text{ for } t=1, \ > 0 \text{ for all } t \geq 2} \\
            &< 0 \text{ for all } T \geq 2
    \end{align*}
    A symmetric argument shows that $\mathbb{E}[\hat{\tau}^{dr}] - \tau > 0$ when $Y_{it}(D_{it}(\mathbf{Z}^{\leq t}))$ is strictly decreasing in the cumulative number of exposures to assistance.
\end{proof}

\subsection{Bias Expressions Under Illustrative Data-generating Processes}\label{app:proofs:bias}
First, notice that we can write $Y_{it}(D_{it}(\mathbf{Z}))$ as
    \begin{align*}
        Y_{it}(D_{it}(\mathbf{Z})) = Q_{it}(\mathbf{Z})A_{it} + (1 - Q_{it}(\mathbf{Z}))(1 - A_{it})
    \end{align*}
    Thus, by definition of the population-level version of $\tau_t$,
    \begin{align*}
        \tau_t 
        &= \mathbb{E}[Y_t(D_t(\mathbf{1}^{\leq t})) - Y_t(D_t(\mathbf{0}^{\leq t}))] \\
        &= \mathbb{E}[\left( Q_t(\mathbf{1}^{\leq t})A_t  + (1 - Q_t(\mathbf{1}^{\leq t}))(1 - A_t) \right)
            - \left( Q_t(\mathbf{0}^{\leq t})A_t  + (1 - Q_t(\mathbf{0}^{\leq t}))(1 - A_t) \right) ] \\
        &= \mathbb{E}[(2A_t - 1) (Q_t(\mathbf{1}^{\leq t}) - Q_t(\mathbf{0}^{\leq t}))]
    \end{align*}
Similarly, we can write $\mathbb{E}_{\mathbf{Z}, \mathcal{D}}[\hat{\tau}^{dr}]$ as
    \begin{align*}
        \mathbb{E}_{\mathbf{Z}, D}[\hat{\tau}^{dr}]
        &= \frac{1}{NT} \sum_{i=1}^{N}\sum_{t=1}^T
            \mathbb{E}[Y_{t} \mid Z_{t} = 1] - \mathbb{E}[Y_{t} \mid Z_{t} = 0] \\
        &= \frac{1}{T} \sum_{t=1}^T 
            \mathbb{E}[Y_{t}(D(\mathbf{Z}^{<t}, 1))] - \mathbb{E}[Y_{t}(D(\mathbf{Z}^{<t}, 0))] \\
        &= \frac{1}{T} \sum_{t=1}^T \mathbb{E} [(2A_t - 1)(Q_t(\mathbf{Z}^{< t}, 1) - Q_t(\mathbf{Z}^{< t}, 0))]
    \end{align*}
For the Automation Bias and Alert Fatigue data-generating processes, $A_t \indep Q_t(\mathbf{Z})$, so these can be further simplified to 
    \begin{align*}
        \tau_t &= (2\mu_A - 1) \mathbb{E} [Q_t(\mathbf{1}^{\leq t}) - Q_t(\mathbf{0}^{\leq t})] \\
         \mathbb{E}_{\mathbf{Z}, \mathcal{D}_{ab}}[\hat{\tau}^{dr}] &= (2\mu_A - 1) \frac{1}{T} \sum_{t=1}^T \mathbb{E} [Q_t(\mathbf{Z}^{<t}, 1) - Q_t(\mathbf{Z}^{<t}, 0)]
    \end{align*}

\begin{proof}[Proof of Proposition \ref{prop:ab-bias} (Automation Bias DGP)]
    We can write $\mathbb{E}_{\mathcal{D}_{ab}}[Q_t(\mathbf{1}^{\leq t})]$ and $\mathbb{E}_{\mathcal{D}_{ab}}[Q_t(\mathbf{0}^{\leq t})]$ as 
    \begin{align*}
        \mathbb{E}_{\mathcal{D}_{ab}}[Q_t(\mathbf{1}^{\leq t})] 
            &= \mathbb{E}[p_{Q_{ab}}(Z_t = 1, S_t|_\mathbf{Z}=\mathbf{1})] \\
            &= q_0 + \beta_Z + \beta_S S_t|_{\mathbf{Z} = \mathbf{1}} \\
            &= q_0 + \beta_Z + \beta_S (1 - \rho^{t-1}) \\
        \mathbb{E}_{\mathcal{D}_{ab}}[Q_t(\mathbf{0}^{\leq t})]
            &= \mathbb{E}[p_{Q_{ab}}(Z_t = 0, S_t|_\mathbf{Z}=\mathbf{0})] \\
            &= q_0
    \end{align*}
    where $S_t|_{\mathbf{Z} = \mathbf{1}}$ denotes the value of $S_t$ when $\mathbf{Z}^{\leq t} = \mathbf{1}$ and is determined by solving the recurrence relation
    \begin{align*}
        S_t 
        &= \rho S_{t-1} + (1 - \rho)Z_{t-1} \\
        &= \rho S_{t-1} + (1 - \rho) ,\quad S_1 = 0
    \end{align*}
    We thus have 
    \begin{align*}
        \tau_T &= (2\mu_A - 1) (\beta_Z + \beta_S(1 - \rho^{T-1})) \\
        \tau &= \frac{1}{T}\sum_{t=1}^T \tau_t 
            = (2\mu_A - 1)
            \left ( 
                \beta_Z + \beta_S 
                    \left( 
                        1 - \frac{1 - \rho^T}{T(1 - \rho)}
                    \right)
            \right)
    \end{align*}
    We can write $\mathbb{E}_{\mathbf{Z}, \mathcal{D}_{ab}}[Q(\mathbf{Z}^{<t}, 1)]$ and $\mathbb{E}_{\mathbf{Z}, \mathcal{D}_{ab}}[Q(\mathbf{Z}^{<t}, 0)]$ as
    \begin{align*}
        \mathbb{E}_{\mathbf{Z}, \mathcal{D}_{ab}}[Q(\mathbf{Z}^{<t}, 1)] 
        &= \mathbb{E}_{\mathbf{Z}}[p_{Q_{ab}}(Z_t = 1, S_t|_\mathbf{Z})] \\
        &= q_0 + \beta_Z + \beta_S \mathbb{E}_{\mathbf{Z}}[S_t|_\mathbf{Z}]] \\
        &= q_0 + \beta_Z + p\beta_S(1 - \rho^{t-1}) \\
        \mathbb{E}_{\mathbf{Z}, \mathcal{D}_{ab}}[Q(\mathbf{Z}^{<t}, 0)]
        &= \mathbb{E}_{\mathbf{Z}}[p_{Q_{ab}}(Z_t = 0, S_t|_\mathbf{Z})] \\
        &= q_0
    \end{align*}
    where $\mathbb{E}_{\mathbf{Z}}[S_t|_\mathbf{Z}]$ is determined by solving the recurrence relation
    \begin{align*}
        \mathbb{E}_{\mathbf{Z}}[S_t|_\mathbf{Z}] &= \rho \mathbb{E}_{\mathbf{Z}}[S_{t-1}|_\mathbf{Z}] + (1 - \rho) \mathbb{E}[Z_{t-1}] \\
        &= \rho \mathbb{E}_{\mathbf{Z}}[S_{t-1}|_\mathbf{Z}] + p(1 - \rho)
    \end{align*}
    We thus have
    \begin{align*}
        \mathbb{E}_{\mathbf{Z}, \mathcal{D}_{ab}}[\hat{\tau}^{dr}] 
        = (2\mu_A - 1) \left(\beta_Z + p  \beta_S \left(1 - \frac{1- \rho^T}{T(1 - \rho)}\right) \right)
    \end{align*}
    The bias expressions then follow.
\end{proof}

\begin{proof}[Proof of Proposition \ref{prop:af-bias} (Alert Fatigue DGP)]
     We can write $\mathbb{E}_{\mathcal{D}_{af}}[Q_t(\mathbf{1}^{\leq t})]$ and $\mathbb{E}_{\mathcal{D}_{af}}[Q_t(\mathbf{0}^{\leq t})]$ as
     \begin{align*}
         \mathbb{E}_{\mathcal{D}_{af}}[Q_{t}(\mathbf{1}^{\leq t})] 
         &= \mathbb{E}[p_{Q_{af}}(Z_t = 1, S_t|_{\mathbf{Z} = 1})] \\
         &= q_0 + \beta_Z + \beta_S \mathbb{E} [S_t |_{\mathbf{Z} = 1}] \\
         &= q_0 + \beta_Z + \beta_S (1 - \mu_A) \left(1 - \rho^{t-1}\right) \\
          \mathbb{E}_{\mathcal{D}_{af}}[Q_{t}(\mathbf{0}^{\leq t})]
        &= q_0
     \end{align*}
     where $\mathbb{E}[S_t|_{\mathbf{Z} = \mathbf{1}}]$ is determined by solving the recurrence relation,
    \begin{align*}
        \mathbb{E}[S_t \mid_{\mathbf{Z} = 1}] &= \rho \mathbb{E}[S_{t-1} \mid_{\mathbf{Z} = 1}] + (1 - \mathbb{E}[A_{t-1}])(1 - \rho) \\
        &=\rho \mathbb{E}[S_{t-1} \mid_{\mathbf{Z} = 1}] + (1 - \mu_A)(1 - \rho) , \quad S_1 = 0
    \end{align*}
We thus have
\begin{align*}
    \tau_T &= (2\mu_A - 1) \left( \beta_Z + \beta_S (1 - \mu_A) \left(1 - \rho^{T-1}\right) \right) \\
    \tau &= \frac{1}{T}\sum_{t=1}^T \tau_t 
    = (2\mu_A - 1)
                \left(\beta_Z + \beta_S (1 - \mu_A)
                    \left(1 - \frac{1 - \rho^{T}}{T(1 - \rho)}
                    \right)
                \right)
\end{align*}
    We can write $\mathbb{E}_{\mathbf{Z}, \mathcal{D}_{af}}[Q(\mathbf{Z}^{<t}, 1)]$ and $\mathbb{E}_{\mathbf{Z}, \mathcal{D}_{af}}[Q(\mathbf{Z}^{<t}, 0)]$ as
    \begin{align*}
        \mathbb{E}_{\mathbf{Z}, \mathcal{D}_{af}}[Q(\mathbf{Z}^{<t}, 1)] 
        &= \mathbb{E}[p_{Q_{af}}(Z_t = 1, S_t|_\mathbf{Z})] \\
        &= q_0 + \beta_Z + \beta_S \mathbb{E}_{\mathbf{Z}}[S_t|_\mathbf{Z}] \\
        &= q_0 + \beta_Z + p\beta_S(1 - \mu_A) (1 - \rho^{t-1}) \\
        \mathbb{E}_{\mathbf{Z}, \mathcal{D}_{af}}[Q(\mathbf{Z}^{<t}, 0)]
        &= \mathbb{E}_{\mathbf{Z}}[p_{Q_{af}}(Z_t = 0, S_t|_\mathbf{Z})] \\
        &= q_0
    \end{align*}
    where $\mathbb{E}_{\mathbf{Z}}[S_t|_\mathbf{Z}]$ is determined by solving the recurrence relation
    \begin{align*}
        \mathbb{E}_{\mathbf{Z}}[S_t|_\mathbf{Z}] &= \rho \mathbb{E}_{\mathbf{Z}}[S_{t-1}|_\mathbf{Z}] + (1 - \rho)\mathbb{E}[Z_{t-1}(1 - A_{t-1})] \\
        &= \rho \mathbb{E}_{\mathbf{Z}}[S_{t-1}|_\mathbf{Z}] + p(1 - \mu_A)(1 - \rho) \quad \text{ since } A_{it} \indep Z_{it}
    \end{align*}
    We thus have
    \begin{align*}
        \mathbb{E}_{\mathbf{Z}, \mathcal{D}_{af}}[\hat{\tau}^{dr}] 
        = (2\mu_A - 1) \left(\beta_Z + p \beta_S (1 - \mu_A) \left(1 - \frac{1- \rho^T}{T(1 - \rho)}\right) \right)
    \end{align*}
    The bias expressions then follow.
\end{proof}

\begin{proof}[Proof of Proposition \ref{prop:cr-bias} (Calibrated Reliance DGP)]
    We can write $\mathbb{E}_{\mathcal{D}_{cr}}[Q_t(\mathbf{1}^{\leq t})A_t]$, $\mathbb{E}_{\mathcal{D}_{cr}}[Q_t(\mathbf{0}^{\leq t})A_t]$, $\mathbb{E}_{\mathcal{D}_{cr}}[Q_t(\mathbf{1}^{\leq t})]$, and $\mathbb{E}_{\mathcal{D}_{cr}}[Q_t(\mathbf{0}^{\leq t})]$ as
    \begin{align*}
        \mathbb{E}_{\mathcal{D}_{cr}}[Q_t(\mathbf{1}^{\leq t})] 
            &= \mathbb{E}[p_{Q_{cr}}(Z_t = 1, S_t |_{\mathbf{Z} = \mathbf{1}})] \\
            &= q_0 + \beta_Z + \beta_S (2\mathbb{E}[A_t] - 1) S_t |_{\mathbf{Z} = \mathbf{1}} \\
            &= q_0 + \beta_Z + \beta_S(2\mu_A - 1) (1 - \rho^{t-1}) \\
        \mathbb{E}_{\mathcal{D}_{cr}}[Q_t(\mathbf{1}^{\leq t})A_t] 
            &= \mathbb{E}[p_{Q_{cr}}(Z_t = 1, S_t |_{\mathbf{Z} = \mathbf{1}})A_t] \\
            &= q_0\mathbb{E}[A_t] + \beta_Z\mathbb{E}[A_t]  + \beta_S {E}[2A_t^2 - A_t] S_t |_{\mathbf{Z} = \mathbf{1}}  \\
            &= \mu_A\left(q_0 + \beta_Z + \beta_S (1 - \rho^{t-1})\right)
    \end{align*}
    where the last equality follows by simplifying $2A_t^2 - A_t = 2A_t - A_t = A_t$ for $A_t \in \{0, 1\}$ and by solving the recurrence relation for $S_t$, which is the same as that for the Automation Bias data-generating process.
    \begin{align*}
        \mathbb{E}_{\mathcal{D}_{cr}}[Q_t(\mathbf{0}^{\leq t})] 
            &= \mathbb{E}[p_{Q_{cr}}(Z_t = 0, S_t |_{\mathbf{Z} = \mathbf{0}})] \\
            &= q_0 \\
        \mathbb{E}_{\mathcal{D}_{cr}}[Q_t(\mathbf{0}^{\leq t})A_t] 
            &= \mathbb{E}[p_{Q_{cr}}(Z_t = 0, S_t |_{\mathbf{Z} = \mathbf{0}})A_t] \\
            &= q_0\mathbb{E}[A_t] \\
            &= \mu_Aq_0
    \end{align*}
    We thus have
    \begin{align*}
        \tau_T
        &= 2\mu_A\left(\beta_Z + \beta_S (1 - \rho^{T-1})\right) - (\beta_Z + \beta_S(2\mu_A - 1) (1 - \rho^{T-1})) \\
        &= (2\mu_A - 1)\beta_Z + \beta_S(1 - \rho^{T-1}) \\
        \tau &= \frac{1}{T}\sum_{t=1}^T \tau_t = (2\mu_A - 1) \beta_Z + \beta_S
                    \left(1 - \frac{1 - \rho^{T}}{T(1 - \rho)}
                    \right)
    \end{align*}
     We can write $\mathbb{E}_{\mathbf{Z}, \mathcal{D}_{cr}}[Q_t(\mathbf{Z}^{<t}, 1)A_t], \mathbb{E}_{\mathbf{Z}, \mathcal{D}_{cr}}[Q_t(\mathbf{Z}^{<t}, 1)]$, $\mathbb{E}_{\mathbf{Z}, \mathcal{D}_{cr}}[Q_t(\mathbf{Z}^{<t}, 0)A_t], \mathbb{E}_{\mathbf{Z}, \mathcal{D}_{cr}}[Q_t(\mathbf{Z}^{<t}, 0)]$ as
     \begin{align*}
         \mathbb{E}_{\mathbf{Z}, \mathcal{D}_{cr}}[Q_t(\mathbf{Z}^{<t}, 1)A_t]
         &= \mathbb{E}_{A_t}[\mathbb{E}_{\mathbf{Z}}[Q_t(\mathbf{Z}^{<t}, 1)A_t \mid A_t]] \\
         &= \mathbb{E}_{A_t}[\mathbb{E}_{\mathbf{Z}}[p_{cr}(Z_t = 1, S_t |_{\mathbf{Z}})A_t \mid A_t]] \\
         &= \mathbb{E}_{A_t}[q_0A_t + \beta_ZA_t + \beta_SA_t(2A_t - 1)\mathbb{E}_{\mathbf{Z}}[S_t|_{\mathbf{Z}}]] \\
         &= \mu_A(q_0 + \beta_Z + p\beta_S(1 - \rho^{t-1})) \\
        \mathbb{E}_{\mathbf{Z}, \mathcal{D}_{cr}}[Q_t(\mathbf{Z}^{<t}, 1)] 
        &= \mathbb{E}[p_{cr}(Z_t = 1, S_t |_{\mathbf{Z}})] \\
        &= q_0 + \beta_Z + \beta_S(2A_t - 1)\mathbb{E}_{\mathbf{Z}}[S_t|_{\mathbf{Z}}] \\
        &= q_0 + \beta_Z + p\beta_S(2\mu_A - 1)(1 - \rho^{t-1}) \\
        \mathbb{E}_{\mathbf{Z}, \mathcal{D}_{cr}}[Q_t(\mathbf{Z}^{<t}, 0)A_t]
         &= \mathbb{E}_{A_t}[\mathbb{E}_{\mathbf{Z}}[Q_t(\mathbf{Z}^{<t}, 0)A_t \mid A_t]] \\
         &= \mathbb{E}_{A_t}[\mathbb{E}_{\mathbf{Z}}[p_{cr}(Z_t = 0, S_t |_{\mathbf{Z}})A_t \mid A_t]] \\
         &= \mathbb{E}_{A_t}[q_0A_t] \\
         &= \mu_Aq_0 \\
        \mathbb{E}_{\mathbf{Z}, \mathcal{D}_{cr}}[Q_t(\mathbf{Z}^{<t}, 0)] 
        &= \mathbb{E}[p_{cr}(Z_t = 0, S_t |_{\mathbf{Z}})] \\
        &= q_0
     \end{align*}
     where $\mathbb{E}_{\mathbf{Z}}[S_t|_{\mathbf{Z}}]$ is determined by solving the recurrence relation
     \begin{align*}
         \mathbb{E}[S_t|_{\mathbf{Z}}] 
         &= \rho\mathbb{E}[S_{t-1}|_{\mathbf{Z}}] + (1-\rho)\mathbb{E}[Z_{t-1}] \\ 
         &= \rho\mathbb{E}[S_{t-1}|_{\mathbf{Z}}] + p(1-\rho)
     \end{align*}
    We thus have
     \begin{align*}
         \mathbb{E}_{\mathbf{Z}, \mathcal{D}_{cr}}[\hat{\tau}^{dr}] 
         &= (2\mu_A - 1) \beta_Z + p \beta_S \left( 1 - \frac{1 - \rho^{T}}{T({1 - \rho})} \right)
     \end{align*}
     The bias expressions then follow.
\end{proof}

\subsection{Conditional Effects}\label{app:proofs:conditional}

\begin{proof}[Proof of Proposition \ref{prop:cond-ident} (Identifiability of $\tau^{YI}, \delta^{YI}$)]\label{app:proofs:conditional:identifiability}
    We follow the argument of \cite{Imbens1994LATE}, with the concordance indicator $Q_t$ playing the role of treatment and $\mathbf{Z}^{\leq t}$ the role of instrument.

    First, note that we can decompose $\tau_t$ as
    \begin{align*}
        \tau_t 
            &= \mathbb{E} [Y_t(D_t(\mathbf{1}^{\leq t})) - Y_t(D_t(\mathbf{0}^{\leq t})) \mid Q_t(\mathbf{1}^{\leq t})\ = 1, Q_t(\mathbf{0}^{\leq t})\ = 1] \cdot P(Q_t(\mathbf{1}^{\leq t})\ = 1, Q_t(\mathbf{0}^{\leq t})\ = 1) \\
            &\quad+ \mathbb{E} [Y_t(D_t(\mathbf{1}^{\leq t})) - Y_t(D_t(\mathbf{0}^{\leq t})) \mid Q_t(\mathbf{1}^{\leq t})\ = 1, Q_t(\mathbf{0}^{\leq t})\ = 0] \cdot P(Q_t(\mathbf{1}^{\leq t})\ = 1, Q_t(\mathbf{0}^{\leq t})\ = 0) \\
            &\quad+ \mathbb{E} [Y_t(D_t(\mathbf{1}^{\leq t})) - Y_t(D_t(\mathbf{0}^{\leq t})) \mid Q_t(\mathbf{1}^{\leq t})\ = 0, Q_t(\mathbf{0}^{\leq t})\ = 0] \cdot P(Q_t(\mathbf{1}^{\leq t})\ = 0, Q_t(\mathbf{0}^{\leq t})\ = 0) \\
            &\quad+ \mathbb{E} [Y_t(D_t(\mathbf{1}^{\leq t})) - Y_t(D_t(\mathbf{0}^{\leq t})) \mid Q_t(\mathbf{1}^{\leq t})\ = 0, Q_t(\mathbf{0}^{\leq t})\ = 1] \cdot P(Q_t(\mathbf{1}^{\leq t})\ = 0, Q_t(\mathbf{0}^{\leq t})\ = 1) \\
            &= \mathbb{E} [Y_t(D_t(\mathbf{1}^{\leq t})) - Y_t(D_t(\mathbf{0}^{\leq t})) \mid Q_t(\mathbf{1}^{\leq t})\ = 1, Q_t(\mathbf{0}^{\leq t})\ = 1] \cdot P(Q_t(\mathbf{1}^{\leq t})\ = 1, Q_t(\mathbf{0}^{\leq t})\ = 1) \\
            &\quad+ \mathbb{E} [Y_t(D_t(\mathbf{1}^{\leq t})) - Y_t(D_t(\mathbf{0}^{\leq t})) \mid Q_t(\mathbf{1}^{\leq t})\ = 1, Q_t(\mathbf{0}^{\leq t})\ = 0] \cdot P(Q_t(\mathbf{1}^{\leq t})\ = 1, Q_t(\mathbf{0}^{\leq t})\ = 0) \\
            &\quad+ \mathbb{E} [Y_t(D_t(\mathbf{1}^{\leq t})) - Y_t(D_t(\mathbf{0}^{\leq t})) \mid Q_t(\mathbf{1}^{\leq t})\ = 0, Q_t(\mathbf{0}^{\leq t})\ = 0] \cdot P(Q_t(\mathbf{1}^{\leq t})\ = 0, Q_t(\mathbf{0}^{\leq t})\ = 0) \\
            &= \mathbb{E} [Y_t(D_t(\mathbf{1}^{\leq t})) - Y_t(D_t(\mathbf{0}^{\leq t})) \mid Q_t(\mathbf{1}^{\leq t})\ = 1, Q_t(\mathbf{0}^{\leq t})\ = 0] \cdot P(Q_t(\mathbf{1}^{\leq t})\ = 1, Q_t(\mathbf{0}^{\leq t})\ = 0) \\
            &= \tau_t^{YI} \cdot P(Q_t(\mathbf{1}^{\leq t})\ = 1, Q_t(\mathbf{0}^{\leq t})\ = 0)
    \end{align*}
    The first equality is an application of the law of total expectation and the partial interference (\ref{assump-partial-interf}) and non-anticipation (\ref{assump-nonanticip}) assumptions. The second equality follows because of the monotonicity assumption (\ref{assump:monotonicity}), and the third equality follows because
    \begin{align*}
        &\mathbb{E} [Y_t(D_t(\mathbf{1}^{\leq t})) - Y_t(D_t(\mathbf{0}^{\leq t})) \mid Q_t(\mathbf{1}^{\leq t})\ = 0, Q_t(\mathbf{0}^{\leq t})\ = 0] \cdot P(Q_t(\mathbf{1}^{\leq t})\ = 0, Q_t(\mathbf{0}^{\leq t})\ = 0) \\
        &= \mathbb{E} [Y_t(1 - R_t) - Y_t(1 - R_t) \mid Q_t(\mathbf{1}^{\leq t})\ = 0, Q_t(\mathbf{0}^{\leq t})\ = 0]\cdot P(Q_t(\mathbf{1}^{\leq t})\ = 0, Q_t(\mathbf{0}^{\leq t})\ = 0) \\
        &= 0 
    \end{align*}
    and
        \begin{align*}
        &\mathbb{E} [Y_t(D_t(\mathbf{1}^{\leq t})) - Y_t(D_t(\mathbf{0}^{\leq t})) \mid Q_t(\mathbf{1}^{\leq t})\ = 1, Q_t(\mathbf{0}^{\leq t})\ = 1] \cdot P(Q_t(\mathbf{1}^{\leq t})\ = 1, Q_t(\mathbf{0}^{\leq t})\ = 1) \\
        &= \mathbb{E} [Y_t(R_t) - Y_t(R_t) \mid Q_t(\mathbf{1}^{\leq t})\ = 1, Q_t(\mathbf{0}^{\leq t})\ = 1]\cdot P(Q_t(\mathbf{1}^{\leq t})\ = 1, Q_t(\mathbf{0}^{\leq t})\ = 1) \\
        &= 0 
    \end{align*}
    where we can replace $\mathbb{E}[Y_t(D_t(\mathbf{z}^{\leq t})) \mid Q_t(\mathbf{z}^{\leq t}) = 1]$  with $\mathbb{E}[Y_t(R_t) \mid Q_t(\mathbf{z}^{\leq t}) = 1]$ and $\mathbb{E}[Y_t(D_t(\mathbf{z}^{\leq t})) \mid Q_t(\mathbf{z}^{\leq t}) = 0]$  with $\mathbb{E}[Y_t(1 - R_t) \mid Q_t(\mathbf{z}^{\leq t}) = 0]$ under the full mediation assumption (\ref{assump-mediation}). 
    
    Again, under the monotonicity assumption (\ref{assump:monotonicity}), we have that
    \begin{align*}
        P(Q_t(\mathbf{1}^{\leq t})\ = 1, Q_t(\mathbf{0}^{\leq t})\ = 1) 
        &= P(Q_t(\mathbf{0}^{\leq t}) = 1) - P(Q_t(\mathbf{1}^{\leq t})\ = 0, Q_t(\mathbf{0}^{\leq t})\ = 1) \\
        &= P(Q_t(\mathbf{0}^{\leq t}) = 1) \\
        P(Q_t(\mathbf{1}^{\leq t})\ = 0, Q_t(\mathbf{0}^{\leq t})\ = 0) 
        &= P(Q_t(\mathbf{1}^{\leq t}) = 0) - P(Q_t(\mathbf{1}^{\leq t})\ = 0, Q_t(\mathbf{0}^{\leq t})\ = 1) \\
        &= P(Q_t(\mathbf{1}^{\leq t}) = 0)
    \end{align*}
    Thus, we have that
    \begin{align*}
        P(Q_t(\mathbf{1}^{\leq t})\ = 1, Q_t(\mathbf{0}^{\leq t})\ = 0) 
        &= 1 - P(Q_t(\mathbf{0}^{\leq t}) = 1) - P(Q_t(\mathbf{1}^{\leq t}) = 0) \\
        &= (1 - P(Q_t(\mathbf{0}^{\leq t}) = 1)) - (1 - P(Q_t(\mathbf{1}^{\leq t}) = 1)) \\
        &= P(Q_t(\mathbf{1}^{\leq t}) = 1) - P(Q_t(\mathbf{0}^{\leq t}) = 1)
    \end{align*}
    So $\tau_t^{YI}$ becomes
    \begin{align*}
        \tau_t^{YI} 
        &= \frac{\mathbb{E}[Y_t(D_t(\mathbf{1}^{\leq t})) - Y_t(D_t(\mathbf{0}^{\leq t}))]}{P(Q_t(\mathbf{1}^{\leq t}) = 1) - P(Q_t(\mathbf{0}^{\leq t}) = 1)} \\
        &= \frac
        {
        \mathbb{E}[Y_t(D_t(\mathbf{1}^{\leq t})) \mid \mathbf{Z}^{\leq t} = \mathbf{1}^{\leq t} ] - \mathbb{E}[Y_t(D_t(\mathbf{0}^{\leq t})) \mid \mathbf{Z}^{\leq t} = \mathbf{0}^{\leq t}]
        }
        {
        P(D_t(\mathbf{1}^{\leq t}) = R_t \mid \mathbf{Z}^{\leq t} = \mathbf{1}^{\leq t}) - P(D_t(\mathbf{0}^{\leq t}) = R_t \mid \mathbf{Z}^{\leq t} = \mathbf{0}^{\leq t})
        } \\
        &= \frac
        {
        \mathbb{E}[Y_t \mid \mathbf{Z}^{\leq t} = \mathbf{1}^{\leq t}] - E[Y_t \mid \mathbf{Z}^{\leq t} = \mathbf{0}^{\leq t}]}
        {\mathbb{E}[Q_t \mid \mathbf{Z}^{\leq t} = \mathbf{1}^{\leq t}] - E[Q_t \mid \mathbf{Z}^{\leq t} = \mathbf{0}^{\leq t}]}
    \end{align*}
    where the second equality follows by ignorability (\ref{assump:ignorability}) and the third equality follows by consistency.
    
    The proof for identifiability of $\delta_t^{YI}$ is identical, replacing the contrast between $\mathbf{Z}^{\leq t} = \mathbf{1}^{\leq t}$ and $\mathbf{Z}^{\leq t} = \mathbf{0}^{\leq t}$ with the contrast between $\mathbf{Z}^{\leq t} = (\mathbf{0}^{< t}, 1)$ and $\mathbf{Z}^{\leq t} = \mathbf{0}^{\leq t}$. 
\end{proof}

\begin{proof}[Proof for Example \ref{app:ex:ab-af-yielding-effects} (Conditional yielding effects for example data-generating processes)]
    As shown in the derivation for the bias of the decision-randomized estimator under the Automation Bias DGP,
    \begin{align*}
        \tau_t = (2\mu_A - 1) \mathbb{E} [Q_t(\mathbf{1}^{\leq t}) - Q_t(\mathbf{0}^{\leq t})] 
    \end{align*}
    It follows directly that
    \begin{align*}
        \delta_t = (2\mu_A - 1) \mathbb{E} [Q_t(\mathbf{0}^{<t}, 1) - Q_t(\mathbf{0}^{\leq t})]
    \end{align*}
    Therefore, using the formulae for $\tau_t^{YI}$ and $\delta_t^{YI}$,
    \begin{align*}
        \tau_t^{YI} &= \frac{\tau_t}{\mathbb{E} [Q_t(\mathbf{1}^{\leq t}) - Q_t(\mathbf{0}^{\leq t})]} = 2\mu_A - 1 \\
        \delta_t^{YI} &= \frac{\delta_t}{\mathbb{E} [Q_t(\mathbf{0}^{<t}, 1) - Q_t(\mathbf{0}^{\leq t})]} = 2\mu_A - 1 \\
    \end{align*}
    The argument proves the same result for the Alert Fatigue DGP.

    For the Calibrated Reliance DGP, because $A_t$ and $Q_t(\mathbf{Z})$ are not independent, we obtain a different result. First, from the derivation for the bias of the decision-randomized estimator under the Calibrated Reliance DGP, we have that 
    \begin{align*}
        \tau_t &= (2\mu_A - 1) \beta_Z + \beta_S(1 - \rho^{t-1}) \\
        \mathbb{E}[Q_t(\mathbf{1}^{\leq t}) - Q_t(\mathbf{0}^{\leq t})] &= \beta_Z + \beta_S(2\mu_A - 1)(1 - \rho^{t-1})
    \end{align*}
    A trivially similar derivation shows that
    \begin{align*}
        \delta_t &= (2\mu_A - 1)\beta_Z \\
        \mathbb{E}[Q_t(\mathbf{0}^{<t}, 1) - Q_t(\mathbf{0}^{\leq t})] &= \beta_Z 
    \end{align*}
    Thus,
    \begin{align*}
        \tau_t^{YI} &= \frac{\tau_t}{\mathbb{E} [Q_t(\mathbf{1}^{\leq t}) - Q_t(\mathbf{0}^{\leq t})]} = \frac{(2\mu_A - 1)\beta_Z + \beta_S(1 - \rho^{t-1})}{\beta_Z + \beta_S(2\mu_A - 1)(1 - \rho^{t-1})} \\
        \delta_t^{YI} &= \frac{\delta_t}{\mathbb{E} [Q_t(\mathbf{0}^{<t}, 1) - Q_t(\mathbf{0}^{\leq t})]} = 2\mu_A - 1\\
    \end{align*}
\end{proof}

\begin{proof}[Proof of Proposition \ref{prop:profile} (Profiling) ]
    The proof follows that of \cite{Marbach2020Profiling}.

    First, applying the law of total expectation and the partial interference (\ref{assump-partial-interf}) and non-anticipation (\ref{assump-nonanticip}) assumptions,  we have a similar decomposition as the one in the identifiability proof for $\tau^{YI}$, namely
    \begin{align*}
        \mathbb{E}[X_t]
            &= \mathbb{E} [X_t \mid Q_t(\mathbf{1}^{\leq t})\ = 1, Q_t(\mathbf{0}^{\leq t})\ = 1] \cdot P(Q_t(\mathbf{1}^{\leq t})\ = 1, Q_t(\mathbf{0}^{\leq t})\ = 1) \\
            &\quad+ \mathbb{E} [X_t \mid Q_t(\mathbf{1}^{\leq t})\ = 1, Q_t(\mathbf{0}^{\leq t})\ = 0] \cdot P(Q_t(\mathbf{1}^{\leq t})\ = 1, Q_t(\mathbf{0}^{\leq t})\ = 0) \\
            &\quad+ \mathbb{E} [X_t \mid Q_t(\mathbf{1}^{\leq t})\ = 0, Q_t(\mathbf{0}^{\leq t})\ = 0] \cdot P(Q_t(\mathbf{1}^{\leq t})\ = 0, Q_t(\mathbf{0}^{\leq t})\ = 0) \\
            &\quad+ \mathbb{E} [X_t \mid Q_t(\mathbf{1}^{\leq t})\ = 0, Q_t(\mathbf{0}^{\leq t})\ = 1] \cdot P(Q_t(\mathbf{1}^{\leq t})\ = 0, Q_t(\mathbf{0}^{\leq t})\ = 1) \\
            &=\mathbb{E} [X_t \mid Q_t(\mathbf{1}^{\leq t})\ = 1, Q_t(\mathbf{0}^{\leq t})\ = 1] \cdot P(Q_t(\mathbf{1}^{\leq t})\ = 1, Q_t(\mathbf{0}^{\leq t})\ = 1) \\
            &\quad+ \mathbb{E} [X_t \mid Q_t(\mathbf{1}^{\leq t})\ = 1, Q_t(\mathbf{0}^{\leq t})\ = 0] \cdot P(Q_t(\mathbf{1}^{\leq t})\ = 1, Q_t(\mathbf{0}^{\leq t})\ = 0) \\
            &\quad+ \mathbb{E} [X_t \mid Q_t(\mathbf{1}^{\leq t})\ = 0, Q_t(\mathbf{0}^{\leq t})\ = 0] \cdot P(Q_t(\mathbf{1}^{\leq t})\ = 0, Q_t(\mathbf{0}^{\leq t})\ = 0) 
    \end{align*}
    Note that we can write, for the always aligned decisions,
    \begin{align*}
        \mathbb{E}[X_t \mid Q_t(\mathbf{1}^{\leq t}) = 1, Q_t(\mathbf{0}^{\leq t}) = 1]
        &= \mathbb{E}[X_t \mid  Q_t(\mathbf{0}^{\leq t}) = 1]  \\
        & = \mathbb{E}[X_t \mid D_t(\mathbf{0}^{\leq t}) = R_t] \\
        & = \mathbb{E}[X_t \mid D_t(\mathbf{0}^{\leq t}) = R_t, \mathbf{Z}^{\leq t} = \mathbf{0}^{\leq t}] \\
        &= \mathbb{E}[X_t \mid Q_t = 1, \mathbf{Z}^{\leq t}  = \mathbf{0}^{\leq t}]
    \end{align*}
    The first equality follows by the monotonicity assumption \ref{assump:monotonicity} -- since $Q_t(\mathbf{1}^{\leq t}) \geq Q_t(\mathbf{0}^{\leq t})$, if it is known that $Q_t(\mathbf{0}^{\leq t}) = 1$, it implies that $Q_t(\mathbf{1}^{\leq t}) = 1$, so we can drop $Q_t(\mathbf{1}^{\leq t}) = 1$ from the expectation. The second equality follows by definition of $Q_t(\mathbf{0}^{\leq t})$, the third equality by the ignorability assumption \ref{assump:ignorability-x}, and the fourth equality by a standard consistency assumption.

    By the same logic, for the never aligned decisions,
    \begin{align*}
        \mathbb{E}[X_t \mid Q_t(\mathbf{1}^{\leq t}) = 0, Q_t(\mathbf{0}^{\leq t}) = 0]
        &= \mathbb{E}[X_t \mid  Q_t(\mathbf{1}^{\leq t}) = 0]  \\
        & = \mathbb{E}[X_t \mid D_t(\mathbf{1}^{\leq t}) = 1 - R_t] \\
        & = \mathbb{E}[X_t \mid D_t(\mathbf{1}^{\leq t}) = 1 - R_t, \mathbf{Z}^{\leq t}  = \mathbf{1}^{\leq t}] \\
        &= \mathbb{E}[X_t \mid Q_t = 0, \mathbf{Z}^{\leq t}  = \mathbf{1}^{\leq t}]
    \end{align*}

    As shown in the previous proof of identifiability for $\tau^{YI}$, we have that
    \begin{align*}
        P(Q_t(\mathbf{1}^{\leq t})\ = 1, Q_t(\mathbf{0}^{\leq t})\ = 1)
        & = P(Q_t(\mathbf{0}^{\leq t}) = 1)\\
        P(Q_t(\mathbf{1}^{\leq t})\ = 0, Q_t(\mathbf{0}^{\leq t})\ = 0)
        &= P(Q_t(\mathbf{1}^{\leq t}) = 0)
    \end{align*}
    By ignorability and consistency, we have
    \begin{align*}
        P(Q_t(\mathbf{0}^{\leq t}) = 1) 
        &= P(D_t(\mathbf{0}^{\leq t}) = R_t) \\
        &= P(D_t(\mathbf{0}^{\leq t}) = R_t \mid \mathbf{Z}^{\leq t} = \mathbf{0}^{\leq t}) \\
        &= P(D_t = R_t \mid \mathbf{Z}^{\leq t} = \mathbf{0}^{\leq t}) \\
        &= P(Q_t = 1 \mid \mathbf{Z}^{\leq t} = \mathbf{0}^{\leq t}) \\
        P(Q_t(\mathbf{1}^{\leq t}) = 0) 
        &= P(D_t(\mathbf{1}^{\leq t}) = 1 - R_t) \\
        &= P(D_t(\mathbf{1}^{\leq t}) = 1 - R_t \mid \mathbf{Z}^{\leq t} = \mathbf{1}^{\leq t}) \\
        &= P(D_t = 1 - R_t \mid \mathbf{Z}^{\leq t} = \mathbf{1}^{\leq t}) \\
        &= P(Q_t = 0 \mid \mathbf{Z}^{\leq t} = \mathbf{1}^{\leq t})
    \end{align*} 
    Plugging these values back into the decomposition expression for $\mathbb{E}[X_t]$ gives us
    \begin{align*}
        \mathbb{E}[X_t] 
        &= \mathbb{E} [X_t \mid Q_t = 1, \mathbf{Z}^{\leq t} = \mathbf{0}^{\leq t}] \cdot P(Q_t = 1 \mid \mathbf{Z}^{\leq t} = \mathbf{0}^{\leq t}) \\
            &\quad+ \mathbb{E} [X_t \mid Q_t = 0, \mathbf{Z}^{\leq t} = \mathbf{1}^{\leq t}] \cdot P(Q_t = 0 \mid \mathbf{Z}^{\leq t} = \mathbf{1}^{\leq t}) \\
            &\quad+ \mathbb{E} [X_t \mid Q_t(\mathbf{1}^{\leq t})\ = 1, Q_t(\mathbf{0}^{\leq t})\ = 0] \cdot P(Q_t(\mathbf{1}^{\leq t})\ = 1, Q_t(\mathbf{0}^{\leq t})\ = 0) 
    \end{align*}
    Rearranging to solve for $\mathbb{E} [X_t \mid Q_t(\mathbf{1}^{\leq t})\ = 1, Q_t(\mathbf{0}^{\leq t})\ = 0]$ and using the fact that $P(Q_t(\mathbf{1}^{\leq t})\ = 1, Q_t(\mathbf{0}^{\leq t})\ = 0) = \mathbb{E}[Q_t \mid \mathbf{Z}^{\leq t} = \mathbf{1}^{\leq t}] - \mathbb{E}[Q_t \mid \mathbf{Z}^{\leq t} = \mathbf{0}^{\leq t}]$ as shown in the proof of identifiability for $\tau^{YI}$ gives us the desired result. The same proof argument, replacing $Q_t(\mathbf{1}^{\leq t})$ with $Q_t(\mathbf{0}^{< t}, 1)$, gives the result for $\mathbb{E}[X_t \mid I_t = YI]$.  
\end{proof}

\begin{proof}[Proof of Example \ref{app:ex:ab-af-cr-profile}]
    Given the data-generating process for $X_{it}$, we have that
    \begin{align*}
        \mathbb{E}[X_t] &= 2\gamma - 1 \\
        \mathbb{E}[X_t \mid Q_t(\mathbf{1}^{\leq t}) = 0]
        &= P(X_t =1 \mid Q_t(\mathbf{1}^{\leq t}) = 0) - P(X_t = -1 \mid Q_t(\mathbf{1}^{\leq t}) = 0) \\
        &= \frac
        {
        P(Q_t(\mathbf{1}^{\leq t}) = 0 \mid X_t = 1)P(X_t = 1) - P(Q_t(\mathbf{1}^{\leq t}) = 0 \mid X_t = -1)P(X_t = -1)
        }
        {
        P(Q_t(\mathbf{1}^{\leq t}) = 0)
        } \\
        &= \frac
        {
        \gamma P(Q_t(\mathbf{1}^{\leq t}) = 0 \mid X_t = 1) - (1 - \gamma) P(Q_t(\mathbf{1}^{\leq t}) = 0 \mid X_t = -1)
        }
        {
        P(Q_t(\mathbf{1}^{\leq t}) = 0)
        } \\
        \mathbb{E}[X_t \mid Q_t(\mathbf{0}^{\leq t}) = 1]
        &= P(X_t =1 \mid Q_t(\mathbf{0}^{\leq t}) = 1) - P(X_t = -1 \mid Q_t(\mathbf{0}^{\leq t}) = 1) \\
        &= \frac
        {
        P(Q_t(\mathbf{0}^{\leq t}) = 1 \mid X_t = 1)P(X_t = 1) - P(Q_t(\mathbf{0}^{\leq t}) = 1 \mid X_t = -1)P(X_t = -1)
        }
        {
        P(Q_t(\mathbf{0}^{\leq t}) = 1)
        } \\
        &= \frac
        {
        \gamma P(Q_t(\mathbf{0}^{\leq t}) = 1 \mid X_t = 1) - (1 - \gamma) P(Q_t(\mathbf{0}^{\leq t}) = 1 \mid X_t = -1)
        }
        {
        P(Q_t(\mathbf{0}^{\leq t}) = 1)
        }
    \end{align*}
    We can then write $\mathbb{E}[X_t \mid H_t = YI]$ as
    \begin{align*}
        \mathbb{E}[X_t \mid H_t = YI]
        = \frac{1}{\mathbb{E}[Q_t(\mathbf{1}^{\leq t}) - Q_t(\mathbf{0}^{\leq t})]} \cdot
        \left[
            (2\gamma - 1) - (E_{NA} + E_{AA})
        \right]
    \end{align*}
    where 
    \begin{align*}
        E_{NA} &:= \gamma P(Q_t(\mathbf{1}^{\leq t}) = 0 \mid X_t = 1) - (1 - \gamma) P(Q_t(\mathbf{1}^{\leq t}) = 0 \mid X_t = -1) \\
        E_{AA} &:= \gamma P(Q_t(\mathbf{0}^{\leq t}) = 1 \mid X_t = 1) - (1 - \gamma) P(Q_t(\mathbf{0}^{\leq t}) = 1 \mid X_t = -1)
    \end{align*}
    \textit{Automation Bias and Alert Fatigue profiles}. For both the Automation Bias and Alert Fatigue data-generating processes, $Q_t(\mathbf{Z}^{\leq t}) \indep {X_t}$. Thus,
   \begin{align*}
       E_{NA} &= (2\gamma - 1) P(Q_t(\mathbf{1}^{\leq t}) = 0) \\
       &= (2\gamma - 1)(1 - \mathbb{E}[Q_t(\mathbf{1}^{\leq t}) ])\\
       E_{AA} &= (2\gamma - 1) P(Q_t(\mathbf{0}^{\leq t}) = 1) \\
       &= (2\gamma - 1)(\mathbb{E}[Q_t(\mathbf{0}^{\leq t})])\\
        \Rightarrow 
        E_{NA} + E_{AA} &= (2\gamma - 1)(1 - (\mathbb{E}[Q_t(\mathbf{1}^{\leq t})] - \mathbb{E}[Q_t(\mathbf{0}^{\leq t})])) \\
        \Rightarrow
        \mathbb{E}[X_t \mid H_t = YI] 
        &= \frac{1}{\mathbb{E}[Q_t(\mathbf{1}^{\leq t})] - \mathbb{E}[Q_t(\mathbf{0}^{\leq t})]}
        \left[ 
            (2\gamma - 1) - 
            (2\gamma - 1)(1 - (\mathbb{E}[Q_t(\mathbf{1}^{\leq t})] - \mathbb{E}[Q_t(\mathbf{0}^{\leq t})]))
        \right] \\
        &= 2\gamma - 1
    \end{align*}
    Applying the same logic and replacing $Q_t(\mathbf{1}^{\leq t})$ with $Q_t(\mathbf{0}^{< t}, 1)$ gives us $\mathbb{E}[X_t \mid I_t = YI] = 2\gamma - 1.$

    \textit{Calibrated Reliance profile}. For the Calibrated Reliance data-generating process, the same independence assumption between $Q_t(\mathbf{Z}^{\leq t})$ and $X_t$ does not hold. We therefore plug in the parameters for the data-generating process to derive expressions for $\mathbb{E}[Q_t(\mathbf{1}^{\leq t}) - Q_t(\mathbf{0}^{\leq t})]$, $E_{AA}$ and $E_{NA}$. 

    As shown in the proof of Example \ref{app:ex:ab-af-yielding-effects} above,
    \begin{align*}
        \mathbb{E}[Q_t(\mathbf{1}^{\leq t}) - Q_t(\mathbf{0}^{\leq t})]
        = \beta_Z + \beta_S (2 \mu_A - 1) (1 - \rho^{t-1})
    \end{align*}

    For $E_{NA}$ and $E_{AA}$, we have
    \begin{align*}
        E_{NA}
        &= \gamma(1 - \mathbb{E}[Q_t(\mathbf{1}^{\leq t}) \mid X_t = 1]) - (1 - \gamma)(1 - \mathbb{E}[Q_t(\mathbf{1}^{\leq t}) \mid X = -1]) \\
        &= \gamma 
        \left(
        1 - q_0 - \beta_Z - \beta_S(2 \mathbb{E}[A_t \mid X_t = 1] - 1)(1 - \rho^{t-1}) 
        \right)
        \\
        &\quad- (1 - \gamma)
         \left(
        1 - q_0 - \beta_Z - \beta_S(2 \mathbb{E}[A_t \mid X_t = -1] - 1)(1 - \rho^{t-1}) 
        \right) \\
        E_{AA} 
        &= \gamma(\mathbb{E}[Q_t(\mathbf{0}^{\leq t}) \mid X_t = 1]) - (1 - \gamma)(\mathbb{E}[Q_t(\mathbf{0}^{\leq t}) \mid X = -1]) \\
        &= \gamma q_0 - (1 - \gamma)q_0 \\
    \Rightarrow
    E_{NA} + E_{AA}  
    &= (2\gamma - 1) 
    - (2\gamma - 1) \beta_Z \\
    &\quad - \beta_S(1 - \rho^{t-1})
    [\gamma (2\mathbb{E}[A_t \mid X_t = 1] - 1) 
     - (1 - \gamma)(2\mathbb{E}[A_t \mid X_t = -1] - 1) 
    ] \\
    &= (2\gamma - 1) \cdot 
    \left[
    1
    - \beta_Z
    - \beta_S (1 - \rho^{t-1})(2\mu_A - 1)
    \right] \\
    &\quad- 4 ( 1 - \gamma) (\mu_A - \mu_A \mid_{X = -1})\beta_S ( 1 - \rho^{t-1})
    \end{align*}
    where $\mu_A \mid_{X = -1} := \mathbb{E}[A_t \mid X_t = -1]$

Plugging these values back into the expression for $\mathbb{E}[X_t \mid H_t = YI]$,
\begin{align*}
    \mathbb{E}[X_t \mid H_t = YI] &= 
    \frac{1}
    {
    \beta_Z + \beta_S(2\mu_A - 1) (1 - \rho^{t-1})
    }\\
    &\quad \cdot
    \left[
    (2\gamma - 1) \right. \\
    &\left. \quad -
    (2\gamma - 1)\cdot \left(1 - \beta_Z - \beta_S(1 - \rho^{t-1})(2\mu_A - 1)\right) \right. \\
    &\left. \quad + 
    4(1 - \gamma)(\mu_A - \mu_A \mid_{X=-1})\beta_S(1 - \rho^{t-1})
    \right] \\
    &= (2 \gamma - 1) + 
        \frac 
        {4(1 - \gamma)(\mu_A - \mu_A \mid_{X=-1})\beta_S(1 - \rho^{t-1})}
        {\beta_Z + \beta_S(2\mu_A - 1) (1 - \rho^{t-1})}
\end{align*}

To determine $\mathbb{E}[X_t \mid I_t = YI]$, we replace $Q_t(\mathbf{1}^{\leq t})$ with $Q_t(\mathbf{0}^{< t}, 1)$. As shown in the proof of Example \ref{app:ex:ab-af-yielding-effects} above, 
\begin{align*}
    \mathbb{E}[Q_t(\mathbf{0}^{<t}, 1) - Q_t(\mathbf{0}^{\leq t})] = \beta_Z
\end{align*}
The value of $E_{AA}$ is the same as above. For $E_{NA}$, we have 
\begin{align*}
    E_{NA} 
    &= \gamma(1 - \mathbb{E}[Q_t(\mathbf{0}^{<t}, 1) \mid X_t = 1]) - (1 - \gamma)(1 - \mathbb{E}[Q_t(\mathbf{0}^{<t}, 1) \mid X_t = -1]) \\
    &= \gamma (1 - q_0 - \beta_Z) - (1 - \gamma)(1 - q_0 - \beta_Z) \\
    \Rightarrow E_{NA} + E_{AA} &= (2\gamma - 1) \cdot (1 - \beta_Z)
\end{align*}
Plugging back into the expression for $\mathbb{E}[X_t \mid I_t = YI]$,
\begin{align*}
    \mathbb{E}[X_t \mid I_t = YI]
    &= \frac{1}{\beta_Z} \cdot [(2\gamma - 1) - (2\gamma-1) \cdot (1 - \beta_Z)] = 2\gamma - 1
\end{align*}

\end{proof}

\clearpage

\section{Included Papers from Elicit-Assisted Review}\label{sec:elicit-bib}
{\renewcommand{\bibsection}{}\bibliographysec{refs_elicit}}

\end{document}

%% file: math_commands.tex
\usepackage{amsmath,amsfonts,bm}

\def\eqref#1{equation~\ref{#1}}

\def\1{\bm{1}}

\def\rmZ{{\mathbf{Z}}}

\DeclareMathAlphabet{\mathsfit}{\encodingdefault}{\sfdefault}{m}{sl}
\SetMathAlphabet{\mathsfit}{bold}{\encodingdefault}{\sfdefault}{bx}{n}

\DeclareMathOperator*{\argmax}{arg\,max}
\DeclareMathOperator*{\argmin}{arg\,min}

\newcommand{\indep}{\perp \!\!\! \perp}